\documentclass[a4paper,11pts]{article}
\setlength{\textwidth}{16.5cm} \setlength{\textheight}{25cm}
\oddsidemargin=-0.2cm \topmargin=-1.5cm

\usepackage{graphicx,amsmath,amssymb}
\usepackage{subcaption}
\usepackage{hyperref}
\usepackage{paralist}
\usepackage{color}
\usepackage{tikz,pgfkeys}
\usepackage{tkz-graph}
\usetikzlibrary{arrows,shapes}
\usepackage{charter,eulervm,bbding}
\usepackage{setspace}
\usepackage{tagging}
\usepackage{multirow}


\usepackage{amsthm}
\newtheorem{theorem}{Theorem}[section]
\newtheorem{lemma}[theorem]{Lemma}
\newtheorem{corollary}[theorem]{Corollary}

\newtheorem{claim}{Claim}

\def\boxit#1{\vbox{\hrule\hbox{\vrule\kern4pt
  \vbox{\kern1pt#1\kern1pt}
\kern2pt\vrule}\hrule}}

\usetag{full}

\begin{document}

\title{ \bf An Efficient Branching Algorithm for Interval Completion}

\iftagged{short}
{
\author{Yixin Cao}
\institute{Institute for Computer Science and Control\\
  Hungarian Academy of Sciences (MTA SZTAKI)\\
  {\tt yixin@sztaki.hu}} 
}
{
\author{
  {\sc Yixin Cao\thanks{Institute for Computer Science and Control,
      Hungarian Academy of Sciences.  Email: {\tt yixin@sztaki.hu}.
      Supported by the European Research Council (ERC) grant
      ``PARAMTIGHT: Parameterized complexity and the search for tight
      complexity results,'' reference 280152.}  
  }}
  \date{}
}

\maketitle

\begin{abstract}
  We study the \emph{{interval completion}} problem, which asks for
  the insertion of a set of at most $k$ edges to make a graph of $n$
  vertices into an interval graph.  We focus on chordal graphs with no
  small obstructions, where every remaining obstruction is known to
  have a shallow property.  From such a shallow obstruction we single
  out a subset $6$ or $7$ vertices, called the frame, and $5$ missed
  edges in the subgraph induced by the frame.  We show that if none of
  these edges is inserted, then the frame cannot be altered at all,
  and the whole obstruction is also fixed, by and large, in the sense
  that their related positions in an interval representation of the
  objective interval graph have a specific pattern.  We propose a
  simple bounded search process, which effectively transforms a given
  graph to a graph with the structural property that all obstructions
  are shallow and have fixed frames.  Then we fill in polynomial time
  all obstructions that have been previously left in indecision.
  These efforts together deliver a simple parameterized algorithm of
  time $6^k\cdot n^{O(1)}$ for the problem, significantly improving
  the only known parameterized algorithm of time $k^{2k}\cdot
  n^{O(1)}$.
\end{abstract}

\section{Introduction} 
A graph is an \emph{interval graph} if its vertices can be assigned to
the intervals of the real line such that there is an edge between two
vertices if and only if their corresponding intervals intersect.
Interval graphs are the natural models for DNA chains in biology
\cite{benzer-59-topology-genetic-structure} and many other
applications, among which the most cited ones include jobs scheduling
in industrial engineering \cite{bar-noy-01-resource-allocation} and
seriation in archeology \cite{kendall-69-seriation-1}.  Motivated by
pure contemplation of combinatorics and practical problems of biology
respectively, Haj{\'o}s \cite{hajos-57-interval-graphs} and Benzer
\cite{benzer-59-topology-genetic-structure} independently initiated
the study of interval graphs.  An interval graph $\widehat G$ is
called an \emph{interval supergraph} of $G$ if they have the same
vertex set and every edge of $G$ also appears in $\widehat G$.  The
minimum \textsc{interval completion} problem asks for the minimum size
of interval supergraphs of a given graph; or equivalently, the minimum
number of edges whose insertion transforms a graph into an interval
graph.  Originally formulated in sparse matrix computations
\cite{rose-72-sparse-matrix}, this problem later found application in
physical mapping of DNA \cite{kaplan-99-chordal-completion}.  A
similar and related problem is \textsc{chordal completion}, which is
also widely known as \textsc{minimum fill-in}.  A graph is
\emph{chordal} if it contains no hole, and the \textsc{chordal
  completion} problem asks for the minimum size of chordal supergraphs
of a given graph.

These problems are, understandably, NP-hard \cite{GJ79,
  yannakakis-81-minimum-fill-in}.  Therefore, early work of Kaplan et
al.~\cite{kaplan-99-chordal-completion} and
Cai~\cite{cai-96-hereditary-graph-modification} focused on their
parameterized tractability.  Recall that a problem, parameterized by
$k$, is {\em fixed-parameter tractable (FPT)} if it admits an
algorithm with runtime $f(k)\cdot n^{O(1)}$, where $f$ is a computable
function depending only on $k$ \cite{downey-fellows-99}.
Cai~\cite{cai-96-hereditary-graph-modification} observed that if a
hereditary graph class ${\cal G}$ can be characterized by a finite
number of forbidden (induced) subgraphs, then the fixed-parameter
tractability of the completion problem follows from a basic bounded
search tree algorithm.  Many important graph classes, however, have
minimal obstructions of arbitrary large size; in particular, holes of
any length are forbidden in both interval and chordal graphs.  Even
so, {\sc chordal completion} can still be solved by a bounded search
tree algorithm by observing that a large hole readily implies a
negative answer to the problem \cite{kaplan-99-chordal-completion}.

An interval graph is known to be chordal and not contain a structure
called ``asteroidal triple'' (or AT for short), i.e., three vertices
of which each pair is connected by a path avoiding neighbors of the
third one \cite{lekkerkerker-62-interval-graphs}.  Therefore, to solve
\textsc{interval completion}, one has to destroy not only all holes,
but all ATs as well.  Using bounded search to fill holes and small
obstructions is now a pedestrian task
\cite{kaplan-99-chordal-completion,cai-96-hereditary-graph-modification,villanger-09-interval-completion};
which focuses us on large witnesses for ATs in a chordal graph.  Such
a witness is known to have the \emph{shallow} property.  For a shallow
witness $X$, there is a set of ${\cal O}(|X|)$ edges such that the
insertion of any of them will suffices to break $X$.  This hints its
disposal has to be harder than the holes; a similar dichotomy has been
observed in breaking holes by deletion versus by completion
\cite{marx-10-chordal-deletion}.

Observing the fixed-parameter tractability of {\sc chordal
  completion}, Kaplan et al.~\cite{kaplan-99-chordal-completion} asked
if the apparently harder {\sc interval completion} problem is FPT as
well.  This question was, after a dozen years, resolved by Villanger
et al.~\cite{villanger-09-interval-completion}, who designed a
$k^{2k}\cdot n^{O(1)}$ time algorithm.  It remains the only know FPT
algorithm for the problem.  The main purpose of this paper is to
propose a simple and improved FPT algorithm.

\begin{theorem}\label{thm:main-alg}
  There is a $6^k\cdot n^{O(1)}$ time algorithm for deciding whether
  or not there is a set of at most $k$ edges whose insertion makes an
  $n$-vertex graph $G$ an interval graph.
\end{theorem}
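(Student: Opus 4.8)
The plan is to build a bounded search tree of branching number at most $6$ and depth at most $k$, so that at each of its $6^k$ leaves the residual instance is solved in polynomial time. Correctness will mean that every interval completion of $G$ of size at most $k$ is produced along some root-to-leaf path, and termination will follow because the measure (the remaining budget) strictly decreases at every branching node.

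\textbf{Preprocessing to shallow instances.} First I would destroy all holes and all asteroidal witnesses of bounded size by the now-standard bounded-search recipe: a hole of length more than $k+3$, or a sufficiently large packing of vertex-disjoint obstructions, certifies a \textsc{no}-instance, and for each small obstruction one branches on a fixed, bounded family of non-edges, at least one of which every completion of size at most $k$ must contain, each branch charging one unit of budget. These rules are chosen so that no branching vector exceeds $6$. After exhausting them, $G$ is chordal and, by the known structure of minimal AT-witnesses, every remaining obstruction is \emph{shallow}.

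\textbf{Branching on frames.} For a shallow witness $X$, I extract the frame $F$ (with $|F|\in\{6,7\}$) and the five frame non-edges $e_1,\dots,e_5$ of $G[F]$. The structural heart of the argument, to be proved, is the rigidity statement promised in the abstract: if a completion $\widehat G\supseteq G$ contains none of $e_1,\dots,e_5$, then $\widehat G[F]=G[F]$, the frame is \emph{frozen}, and the positions of all of $X$ along the clique path of $\widehat G$ are forced into one explicit pattern; in particular, with the frame frozen, the AT carried by $X$ can be killed only by inserting one specific further non-edge $e_0$. I therefore branch into at most $6$ subproblems, namely $(G+e_i,k-1)$ for $i=1,\dots,5$ together with $(G+e_0,k-1)$, yielding the recurrence $T(k)\le 6\,T(k-1)$. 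When no undecided shallow witness remains, all surviving obstructions are shallow with frozen frames and branching stops; the rigidity lemma guarantees a frozen frame is never revisited, so this phase terminates.

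\textbf{Polynomial resolution at the leaves and the main obstacle.} At a leaf every obstruction is shallow with a frozen frame and, by the rigidity lemma, imposes only a local, explicitly described constraint on admissible interval models; I would show these constraints are jointly solvable by a polynomial-time procedure, either by greedily assembling a clique path from the forced positions of the frozen frames or by phrasing their mutual compatibility as a polynomial-time consistency/flow computation, and then check that the number of still-needed edges added to those already spent stays within $k$. The total running time is then $6^k\cdot n^{O(1)}$, and the answer is \textsc{yes} iff some leaf succeeds. Essentially all the difficulty lies in the rigidity lemma of the second step: proving that refusing all five frame non-edges pins down not merely the frame but the whole shallow witness inside every target interval representation, identifying the unique further repair edge $e_0$, and showing the resulting frozen patterns are weak enough to be reconciled in polynomial time. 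This requires a delicate case analysis of how a minimal shallow AT-witness can embed into the clique path of an interval graph; bringing the branching number down to $6$ rather than a larger constant additionally hinges on a careful choice of the five frame non-edges and of the small-obstruction branching rules used in preprocessing.
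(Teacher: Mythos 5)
There is a genuine gap at the heart of your second step. The rigidity statement you rely on is only half-available: the paper's Lemma~\ref{lem:fill-long-aw} and Corollary~\ref{lem:fill-long-aw-2} do show that avoiding the five frame non-edges $\{lc_2,c_1r,ht,sh,st\}$ freezes the frame and forces the interval of the shallow terminal $s$ to lie between those of $h$ and $t$, but this does \emph{not} single out ``one specific further non-edge $e_0$.'' What it forces, via Lemma~\ref{thm:interval-representation}, is that $s$ must become completely adjacent to some minimal $h$-$t$ separator of the target graph, and there can be unboundedly many candidate separators (and candidate vertices), whose correct choice depends on the global optimum and on the other frozen witnesses. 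So your sixth branch $(G+e_0,k-1)$ does not exist, and replacing it by branching over all candidates destroys the $6^k$ bound. This is exactly why the paper does something different in the sixth direction: it \emph{defers} the decision, putting the shallow terminal (in fact its whole module) into a set $U$, marking the five edges as avoidable, and using the measure $k-|U|$ instead of $k$; the missing edges are only chosen at the very end (Phase~II) by a polynomial minimization over cut points of an interval representation of $G-U$.

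Relatedly, your leaf-resolution step is unsupported without the paper's module machinery. To treat all deferred shallow witnesses uniformly and to insert their edges in an all-or-none fashion, the paper needs Theorem~\ref{thm:preserving-modules} (connected modules are preserved in minimum interval supergraphs), Theorem~\ref{thm:shallow-is-module} (shallow terminals of a reduced graph cluster into simplicial modules), Theorem~\ref{thm:separable-modules} plus a recursive call to complete $G[M]$ when a deferred module is not yet interval, and Lemma~\ref{lem:common-neighbor-of-base} to evict a module from $U$ when a common neighbor of $h(M)$ and $t(M)$ appears; also note the paper's actual six branches are three single edges $\{lc_2,c_1r,ht\}$ plus the two edge \emph{sets} $h\times M$ and $t\times M$ plus deferral, not six single edges. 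None of these ingredients can be replaced by the ``greedy clique path or flow'' sketch: even the basic fact that two adjacent shallow terminals must receive exactly the same new neighbors is a nontrivial theorem here, not an observation. As written, your plan would either lose completeness (some optimal completions avoid every edge you branch on at the frozen-frame node) or lose the $6^k$ bound.
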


\paragraph{Our techniques.}  The main technical observation behind our
disposal of a shallow witness is its frame and shallow terminal.  A
shallow witness contains a unique AT; they are called \emph{terminals}
of this shallow witness.  Its \emph{frame} is defined to be the union
of the terminals as well as their neighbors; it consists of $6$ or $7$
vertices.  All other vertices of the shallow witness are inner
vertices of the longest defining path.  (See Figure~\ref{fig:at}.)
The terminal neither in nor adjacent to this path is the \emph{shallow
  terminal}.  We show there is a set of $5$ special edges in the frame
such that if an interval supergraph $\widehat G$ contains none of
them, then it contains no other edge in the frame,---such a frame is
already {finalized}.  Then an edge between the shallow terminal and a
vertex out of the frame has to be inserted.  
This observation suggests that we branch on either inserting one of
the $5$ edges to its frame, or assuming the frame will appear in
$\widehat G$ \emph{as is}.
In the last case, we put aside the shallow terminal and work on the
remaining part.  

Here comes the two crucial combinatorial results that justify this
partition.  All shallow terminals in a chordal graph without small
obstructions are well clustered, i.e., either similar or disconnected;
and the similar ones can be treated in the same way.  Formally, we
show those shallow terminals form a set of disjoint \emph{modules} (a
set of vertices with the same neighbors out of it); and more
importantly, such a module can never be broken: edges between it and
other vertices must be inserted in an \emph{all-or-none} manner.  They
together permit us to have a clear cut on any shallow witness whose
frame is fixed.

We then turn to next shallow witness and repeat this process.  After
it is exhausted, we are left with a partition of two disjoint interval
subgraphs.  A polynomial-time procedure will suffice to merge them and
finish the job.  We also observe that for small obstructions a $6$-way
branching will suffice, down from the trivial one that takes up to
$12$ ways.  These studies enable us to achieve the desired time
complexity.

We would like to call special attention to the preservation of modules
in interval completion, which should not be confused with the similar
fact for the deletion problem, where it comes as a trivial consequence
of hereditary property.  On one hand, this property definitely
benefits the further study of this problem; indeed, if we assume the
existence of a small solution, the graph necessarily has many modules.
On the other hand, such a property can be shown to hold in completion
problems to other graph classes and might be helpful.

\paragraph{Related work.} This time complexity asymptotically matches
that of the algorithm for \textsc{interval deletion}
\cite{cao-12-interval-deletion}, which is inherently harder than
\textsc{interval completion}.  Compared to the deletion problem, the
completion of holes is well understood, both combinatorially
\cite{cai-96-hereditary-graph-modification} and computationally
\cite{kaplan-99-chordal-completion, fomin-12-subexponential-fill-in,
  fomin-13-local-search-fill-in}.  To fill a large hole we need a
large number of edges, while comparatively, the removal of a single
vertex will suffice to break an arbitrarily large hole.  So holes pose
themselves as a much more significant trouble to the deletion problem.
This fact, unfortunately, leads some authors to believe an approach
for \textsc{interval deletion} can be trivially adapted for
\textsc{interval completion}.  This is nevertheless not the case, and
the completion problem has its own peculiar difficulties we have to
surmount.  


An easy fact is, by removing any vertex from a minimal forbidden
subgraph, we break this subgraph once and for all.  Noting that
interval graphs are hereditary, an \emph{interval deletion set} can
thus be viewed as a \emph{hitting set} for all minimal forbidden
subgraphs.  On the other hand, an edge inserted to fix an erstwhile
forbidden subgraph might introduce a new one.  Such side effects
arouse great bitterness, and require extreme care on each step.  One
might then be tempted to consider a set $E^+$ of edges that hits every
minimal forbidden subgraph (both ends in this subgraph) and each unit,
i.e., a subset of $E^+$, is free of side effects.\footnote{Some author
  did claim a result based on a falsified assumption that a sequence
  of ``safe'' edge sets fixing all forbidden subgraphs of $G$ makes an
  interval completion set, where a set $E_X$ of edges is called
  ``safe'' if $G+E_X$ contains no new forbidden subgraph.}  A second
thought tells us that nevertheless this additional
\emph{side-effect-free} condition is only a placebo, and
adds nothing to what counts as a proof.  Observe that ``neither
$G+E_1$ nor $G+E_2$ contains a new forbidden subgraph'' is not a
sufficient condition for ``$G + (E_1\cup E_2)$ contains no new
forbidden subgraph.''  Indeed, if this sort of argument might work, it
has to be something like ``every to-be-inserted-edge-unit is
side-effect-free to every intermediate graph.''  Or equivalently,
there is an ordered partition ($E_1,\dots,E_{\ell}$) of $E^+$ such
that for any $1\le i< \ell$, the insertion of $E_{i+1}$ will not
introduce a new forbidden subgraph to $G + (E_1\cup\dots\cup E_i)$ as
a side effect.  This argument has to be extremely complicated, if
doable at all: a single edge will break everything, even if every
previous edge serves its purpose faithfully and successfully.

 \begin{table}[ht]
   \begin{center}
     \begin{tabular}{c r r r}
       \hline
       {\bf Graph Class} ${\cal G}$  & {\sc ${\cal G}$ deletion} 
       & {\sc ${\cal G}$ completion} & kernel for {\sc ${\cal G}$ 
         completion}
       \\
       \hline 
       perfect &  W[2]-hard \cite{heggernes-11-perfect-deletion} 
       & open & open\\
       chordal &  $2^{{\cal O}(k \log{k})}$ \cite{cao-13-chordal-deletion} 
       & $2^{{\cal O}(\sqrt{k} \log{k})}$ \cite{fomin-12-subexponential-fill-in}
       & $k^2$ \cite{natanzon-00-approximate-fill-in}
       \\
       interval & $10^k$ \cite{cao-12-interval-deletion}
       & $6^k$ [This paper] & open\\
       proper interval &  $6^k$ \cite{villanger-13-pivd} &  
       $16^k$ \cite{kaplan-99-chordal-completion} &
       $k^3$ \cite{bessy-11-kernels-proper-interval-completion}
       \\
       \hline
     \end{tabular}
   \end{center}
   \caption{Graph deletion/completion problems to graph classes}
   \label{fig:list-of-problems}
\end{table}

Interval graphs and chordal graphs are not the only graph classes that
receive attention in this respect.  Other graph classes include
perfect graphs and proper interval graphs.  A graph is \emph{perfect}
if neither it or its complement contains an odd hole.  An interval
graph is a \emph{proper interval graph} if it has a representation
with no interval containing another one.  These four classes have a
proper containment relations, proper interval graphs are a subclass of
interval graphs, and all graph classes are perfect.
The known results on these graphs classes are summarized in
Table~\ref{fig:list-of-problems}.  Note that a W[2]-hard problem is
unlikely to be FPT \cite{downey-fellows-99}.

This paper is organized as follows.  Section~\ref{sec:pre} sets the
definitions and recalls some known facts.
Section~\ref{sec:forbidden-subgraphs} depicts minimal ways to fix
minimal forbidden graphs.  Section~\ref{sec:module} characterizes the
external and internal behavior of modules in a minimum interval
supergraph.  Section~\ref{sec:shallow} investigates shallow terminals
in chordal graphs with no small obstructions.  Section~\ref{sec:alg}
presents our bounded search tree algorithm, and proves
Theorem~\ref{thm:main-alg}.  \tagged{full}{ Section~\ref{sec:remark}
  closes this paper with some technical remarks.  }

\section{Preliminaries}\label{sec:pre}

Graphs discussed in this paper shall always be undirected and simple.
The vertex set and edge set of a graph $G$ are denoted by $V(G)$ and
$E(G)$ respectively.  The \emph{size} $||G||$ of graph $G$ is defined
to be $|E(G)|$, i.e., the number of edges in it.  We say $G'$ is a
\emph{supergraph} of $G$ if $V(G) = V(G')$ and $E(G)\subseteq E(G')$.
Given a set $E'$ of \emph{missed} edges in $V(G)^2\setminus E(G)$, the
supergraph $G+E'$ is defined to be $(V(G), E(G)\cup E')$.

We say that a pair of vertices $u$ and $v$ is \emph{adjacent} (to each
other) if they are connected by an edge, denoted by $v_1 \sim v_2$;
otherwise \emph{nonadjacent} and denoted by $v_1 \not\sim v_2$.  Two
vertex sets $X$ and $Y$ are \emph{completely connected} if $x \sim y$
for every pair of $x \in X$ and $y \in Y$.  We denoted by $N_G(v)$ by
the set of \emph{neighbors} of $v$, i.e., vertices adjacent to $v$,
and $N_G[v] = N_G(v) \cup \{v\}$.  Let $U\subseteq V(G)$ be a subset
of vertices.  Neighbors of $U$ are defined analogously: $N_G[U] =
\bigcup_{v \in U} N_G[v]$ and $N_G(U) = N_G[U] \backslash U$.  The
subscript will be omitted if it is clear from context.  A
\emph{clique} in a graph where every pair of vertices is adjacent.  A
vertex is \emph{simplicial} if its neighbors induce a clique.  The
subgraph induced by $U$ is denoted by $G[U]$, and $G - U$ is used as a
shorthand for $G[V(G) \setminus U]$.  We say a connected induced
subgraph $G[U]$ is a \emph{connected component} of $G$ if $N(U) =
\emptyset$.

A sequence of \emph{distinct} vertices $v_0 v_1 \dots v_l$ such that
$v_i \sim v_{i+1}$ for each $0 \leq i < l$ is called a \emph{path}, or
a \emph{$v_0$-$v_l$ path} when the two end vertices are of special
interest; the \emph{length} is defined to be $l$.  If $l > 1$ and $v_0
\sim v_l$, then the sequence $v_0 v_1 \dots v_l v_0$ is called a
\emph{cycle} of \emph{length} $l + 1$.  As an abuse of notation, we
use $u \in P$ (resp., $u \in C$) to denote that the vertex $u$ appears
in the path $P$ (resp., cycle $C$), i.e., we also consider a path
(resp., cycle) as the set of elements in the sequence.  A \emph{hole}
is an induced cycle of length at least $4$.

An interval representation of an interval graph $G$ is given by ${\cal
  I} = \{I_v | v\in V(G)\}$.  Each vertex $v$ corresponds to an
\emph{closed} interval $I_v$ with endpoints $\emph{left}(v)$ and
$\emph{right}(v)$, respectively; i.e., $I_v = [\emph{left}(v),
\emph{right}(v)]$.  Likewise, for a subset $X$ of vertices, we define
$\emph{left}(X) = \min_{x\in X} \emph{left}(x)$ and $\emph{right}(X) =
\max_{x\in X} \emph{right}(x)$.  Observe that if a subset $X$ of
vertices induces a connected subgraph, then the union of $\{I_v|v\in
X\}$ also forms an interval.  Assume without loss of generality, no
intervals for distinct vertices share a same endpoint; as a result, an
interval representation of graph $G$ defines precisely $2 |V(G)|$
distinct endpoints.  For any point $p$, we can find a small positive
value $\epsilon$ such that in $[p-\epsilon, p+\epsilon]$ the only
possible endpoint of an interval for some vertex is $p$.  We define
$K_p = \{v| p\in I_v\}$, which clearly induces a clique.

A set $X$ of vertices is an \emph{$x$-$y$ separator} if $x,y\not\in X$
and there is no $x$-$y$ path in $G -X$.  The following lemma relates
an interval representation of an interval graph and its separators.
\begin{lemma}\label{thm:interval-representation}
  Let $u$ and $v$ be a pair of nonadjacent vertices in a connected
  interval graph $G$, and $\cal I$ be an interval representation of
  $G$.  A set $X$ of vertices is a $u$-$v$ separator if and only if
  $K_p\subseteq X$ for some point $p$ such that $I_u$ and $I_v$ lie on
  different sides of $p$.
\end{lemma}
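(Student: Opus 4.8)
The plan is to prove both directions of the equivalence by translating the combinatorial notion of separator into the geometry of the interval representation $\mathcal I$.

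First I would handle the easy direction: if $K_p \subseteq X$ for some point $p$ with $\mathit{left}(I_u) < \mathit{right}(I_u) < p$ and $p < \mathit{left}(I_v) < \mathit{right}(I_v)$ (say $I_u$ lies to the left of $p$ and $I_v$ to the right), then $X$ is a $u$-$v$ separator. Suppose for contradiction there is a $u$-$v$ path $P = u = w_0 w_1 \cdots w_\ell = v$ in $G - X$. Since the intervals $I_{w_0}$ and $I_{w_\ell}$ lie on opposite sides of $p$, and consecutive intervals $I_{w_i}, I_{w_{i+1}}$ intersect, there must be some index $i$ for which $\mathit{right}(I_{w_i}) < p$ but $p \le \mathit{left}(I_{w_{i+1}})$ is violated — more carefully, walking along the path the quantity ``is the interval entirely left of $p$'' must switch from true (at $w_0$) to false (at $w_\ell$), so some consecutive pair $w_i w_{i+1}$ has $I_{w_i}$ entirely left of $p$ and $I_{w_{i+1}}$ not entirely left of $p$; since $I_{w_i} \cap I_{w_{i+1}} \neq \emptyset$, the only way this happens is that $p \in I_{w_i} \cap I_{w_{i+1}}$ — wait, if $I_{w_i}$ is entirely left of $p$ then $p \notin I_{w_i}$. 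Let me restate: let $i$ be the largest index with $I_{w_i}$ entirely left of $p$ (exists since $w_0$ qualifies, and $w_\ell$ does not, assuming $p \notin I_v$; the no-shared-endpoint convention lets us assume $p$ is distinct from all endpoints by perturbing). Then $I_{w_{i+1}}$ meets or crosses $p$, i.e. $\mathit{left}(I_{w_{i+1}}) < p$ (since it intersects $I_{w_i}$ which ends before $p$) and $\mathit{right}(I_{w_{i+1}}) \ge p$, so $p \in I_{w_{i+1}}$, hence $w_{i+1} \in K_p \subseteq X$, contradicting that $P$ avoids $X$. So $X$ separates $u$ and $v$.

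For the converse, suppose $X$ is a $u$-$v$ separator. Assume without loss of generality $\mathit{right}(I_u) < \mathit{left}(I_v)$ (they are nonadjacent, so their intervals are disjoint, and one lies entirely to the left of the other). Let $C$ be the connected component of $G - X$ containing $u$; then $v \notin C$ and $N(C) \subseteq X$. I would set $p^* = \mathit{right}\big(\{u\} \cup C\big)$, the rightmost endpoint among intervals of vertices in $C$ — actually I want the right endpoint of the interval-union of $C \cup \{u\}$, which is an interval since $C \cup \{u\}$ induces a connected subgraph. Pick $p$ slightly to the right of this: $p = p^* + \epsilon$ for $\epsilon$ small enough that no endpoint lies in $(p^*, p]$. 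Then $I_u$ lies left of $p$. I claim $I_v$ lies right of $p$: since $v \notin C \cup \{u\}$ and $v \not\sim u$, if $I_v$ were not entirely right of $p$ it would have to intersect the interval-union of $C \cup \{u\}$ (because $\mathit{left}(I_v) > \mathit{right}(I_u)$ rules out $I_v$ being left of everything, and $\mathit{left}(I_v) \le p^*$ combined with $\mathit{right}(I_v) > \mathit{right}(I_u)$ would force $I_v$ to overlap the union), so $v$ would have a neighbor in $C \cup \{u\}$; since $v \neq u$ and $v$ is not in $C$, that would put $v \in N(C) \subseteq X$, impossible. Hence $I_u, I_v$ are on opposite sides of $p$. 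Finally I must show $K_p \subseteq X$. Take $w \in K_p$, so $p \in I_w$. If $w \in C \cup \{u\}$, then $\mathit{right}(I_w) \ge p > p^* = \mathit{right}(C \cup \{u\})$, contradiction. So $w \notin C \cup \{u\}$. On the other hand $\mathit{left}(I_w) < p = p^* + \epsilon$, so $\mathit{left}(I_w) \le p^*$, meaning $I_w$ intersects the interval-union of $C \cup \{u\}$, giving $w$ a neighbor in $C \cup \{u\}$; since $w$ itself is outside $C \cup \{u\}$, we get $w \in N(C \cup \{u\}) = N(C) \cup (N(u)\setminus C)$. Using connectedness of $G$ and $N(C) \subseteq X$ plus $u \in C$ one more case-chase gives $w \in N(C) \subseteq X$. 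Thus $K_p \subseteq X$.

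The main obstacle I anticipate is the bookkeeping in the converse around the vertex $u$ itself and the interval-union of $C \cup \{u\}$: one must be careful that $u \in C$ already (it does, as $C$ is $u$'s component in $G - X$ and $u \notin X$), so $C \cup \{u\} = C$ and the argument simplifies, but stating it cleanly while invoking the ``intervals of a connected set form an interval'' fact from the preliminaries requires care. A secondary nuisance is the perturbation argument to keep $p$ away from all $2|V(G)|$ endpoints, which is routine given the paper's standing assumption that no two intervals share an endpoint. I expect the forward direction (separator $\Rightarrow$ clique point) to be where essentially all the real work lies; the reverse direction is a short walk-along-the-path argument.
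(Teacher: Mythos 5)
Your proof is correct. Note that the paper states Lemma~\ref{thm:interval-representation} without proof, treating it as a standard fact about interval representations, so there is no in-paper argument to compare with; your two-direction argument is the standard one, and both halves go through. For the ``if'' direction, the discrete intermediate-value walk (take the last path vertex whose interval lies entirely left of $p$; its successor's interval must then contain $p$, putting that successor in $K_p\subseteq X$) is sound. For the converse, your choice of $p$ just to the right of $\mathtt{right}(C)$, where $C$ is the component of $G-X$ containing $u$, also works; and, as you observe yourself, $u\in C$, so all the bookkeeping with $C\cup\{u\}$ collapses to $C$ and the final ``case-chase'' is vacuous: any $w\in K_p$ has $\mathtt{right}(w)\ge p>\mathtt{right}(C)$ (so $w\notin C$) and $\mathtt{left}(w)\le\mathtt{right}(C)$ (no endpoint lies in $(\mathtt{right}(C),p]$), hence $\mathtt{right}(C)\in I_w$, hence $w$ is adjacent to the vertex of $C$ realizing $\mathtt{right}(C)$ and so $w\in N(C)\subseteq X$; likewise $v\in N(C)\subseteq X$ would follow if $I_v$ were not entirely right of $p$, contradicting $v\notin X$. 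The only genuine wrinkle is in the lemma statement rather than in your argument: with the paper's definition of separator requiring $u,v\notin X$, the ``if'' direction is literally false for sets $X$ containing $u$ or $v$ (e.g.\ $X=V(G)$); your proof establishes the intended content, namely that no $u$-$v$ path survives in $G-X$ whenever $K_p\subseteq X$ for such a point $p$.
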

We say that an interval graph $\widehat G$ is an \emph{interval
  supergraph} of $G$ if $G\subseteq\widehat G$.  An interval
supergraph $\widehat G$ is {\em minimum} if there is no strictly
smaller interval supergraph of $G$.  No generality will be lost by
assuming $G$ is connected.  The problem is hence defined as follows:
\begin{quote}
  {\sc interval completion}: ($G,k$): Given a connected graph $G$ and
  a nonnegative integer $k$, find an interval supergraph of $G$ of
  size no more than $||G|| + k$, or report no such a graph exists.
\end{quote}
Immediately from the definition we have the following observation.
\begin{lemma}\label{lem:partial-solution}
  A graph $\widehat G$ is a minimum interval supergraph of $G$ if and
  only if it is a minimum interval supergraph of any graph $\widehat
  G'$ satisfying $G \subseteq \widehat G' \subseteq \widehat G$.
\end{lemma}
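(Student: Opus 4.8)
The plan is to prove the two directions of the equivalence separately, and in fact the heart of the matter is a single monotonicity observation about sizes.

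First I would fix notation: suppose $G \subseteq \widehat G' \subseteq \widehat G$, with $\widehat G$ an interval graph. The key point is that $\widehat G'$ is again an interval graph only in the trivial sense we need --- actually we do \emph{not} need $\widehat G'$ to be an interval graph, we only need it sandwiched between $G$ and $\widehat G$. So the statement really says: minimum interval supergraphs of $G$ are exactly the minimum interval supergraphs of \emph{any} intermediate (not necessarily interval) graph $\widehat G'$ lying between $G$ and some minimum interval supergraph $\widehat G$ of $G$. The first step is to record the obvious size identity: for any interval supergraph $H$ of $G$ with $G\subseteq H$, one has $\|H\| = \|G\| + |E(H)\setminus E(G)|$, and the number of inserted edges is what the completion problem counts; the same holds with $\widehat G'$ in place of $G$ since $E(\widehat G')\setminus E(G)$ is a fixed set disjoint from any further insertions made on top of $\widehat G'$.

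For the forward direction, suppose $\widehat G$ is a minimum interval supergraph of $G$ and $G\subseteq\widehat G'\subseteq\widehat G$. Clearly $\widehat G$ is an interval supergraph of $\widehat G'$, since $E(\widehat G')\subseteq E(\widehat G)$ and they share the vertex set. If $\widehat G$ were \emph{not} minimum over supergraphs of $\widehat G'$, there would be an interval supergraph $H$ of $\widehat G'$ with $\|H\| < \|\widehat G\|$; but then, since $G\subseteq\widehat G'\subseteq H$, the graph $H$ is also an interval supergraph of $G$ that is strictly smaller than $\widehat G$, contradicting minimality of $\widehat G$ over supergraphs of $G$. Hence $\widehat G$ is a minimum interval supergraph of $\widehat G'$. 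For the converse direction, suppose $\widehat G$ is a minimum interval supergraph of $\widehat G'$ (for some intermediate $\widehat G'$ with $G\subseteq\widehat G'\subseteq\widehat G$); we want $\widehat G$ minimum over supergraphs of $G$. If not, some interval supergraph $H$ of $G$ has $\|H\|<\|\widehat G\|$. This $H$ need not contain $\widehat G'$, so we cannot immediately conclude; the fix is to observe that among all minimum interval supergraphs of $G$ we can choose one containing $\widehat G'$ --- but that itself requires a small argument, which is precisely where the forward direction feeds back in. Cleaner: note that $\|\widehat G\| \ge (\text{min size of interval supergraph of }\widehat G') \ge (\text{min size of interval supergraph of }G)$, the last inequality because every interval supergraph of $\widehat G'$ is one of $G$; combined with $\widehat G$ being \emph{some} interval supergraph of $G$, minimality of $\widehat G$ over $\widehat G'$ together with the forward direction applied to a genuine minimum supergraph $\widehat G_{\min}$ of $G$ (which we may take to contain $\widehat G'$, since we can enlarge $\widehat G'$ inside $\widehat G_{\min}$ if needed --- wait, $\widehat G_{\min}$ need not contain $\widehat G'$) forces $\|\widehat G\| = \|\widehat G_{\min}\|$.

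The one subtlety --- and the step I expect to be the main obstacle to writing cleanly --- is exactly this asymmetry: in the forward direction $\widehat G'$ sits below the \emph{given} minimum supergraph $\widehat G$, so everything is nested, whereas in the converse an arbitrary competitor $H$ need not lie above $\widehat G'$. The resolution is that we do not need nesting with $H$; we only need the chain of inequalities among the three quantities ``$\min$ interval supergraph size of $G$'' $\le$ ``$\min$ interval supergraph size of $\widehat G'$'' $\le \|\widehat G\|$, together with the fact that $\widehat G$ realizes the left-hand quantity as soon as it realizes the middle one and the two are equal --- and they are equal because any interval supergraph of $G$ can be \emph{extended} to an interval supergraph of $\widehat G'$ at no extra cost only when it already contains $E(\widehat G')\setminus E(G)$, which a minimum one need not. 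So the honest argument is the nesting one: take $\widehat G$ minimum over supergraphs of $\widehat G'$; by the forward direction any minimum interval supergraph $\widehat G_0$ of $G$ that happens to contain $\widehat G'$ is also minimum over supergraphs of $\widehat G'$, hence $\|\widehat G_0\| = \|\widehat G\|$; and such a $\widehat G_0$ exists because we may start from \emph{the} $\widehat G$ in the statement, which by hypothesis contains $\widehat G'$ and is an interval supergraph of $G$, and it is minimum over supergraphs of $G$ iff its size equals $\|\widehat G_0\|$ --- which it does since both equal the minimum supergraph size of $\widehat G'$. I would present this as: (i) the size bookkeeping; (ii) forward direction by the nested competitor argument; (iii) converse by combining (ii) with the hypothesis that the \emph{given} $\widehat G'$ lies below the \emph{given} $\widehat G$, so no un-nested competitor ever arises.
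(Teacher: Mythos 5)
Your forward direction is correct and is the only substantive content of the lemma: any interval supergraph $H$ of $\widehat G'$ is, because $G\subseteq\widehat G'\subseteq H$, also an interval supergraph of $G$, so a competitor strictly smaller than $\widehat G$ over $\widehat G'$ would contradict minimality of $\widehat G$ over $G$. The paper gives no proof at all (it states the lemma follows ``immediately from the definition''), and this nested-competitor observation is precisely what makes it immediate.

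The converse is where your write-up breaks down, and the trouble is a quantifier misreading. The lemma's ``any'' is universal: $\widehat G$ is a minimum interval supergraph of $G$ if and only if it is a minimum interval supergraph of \emph{every} $\widehat G'$ with $G\subseteq\widehat G'\subseteq\widehat G$. Under that reading the converse is trivial: $G$ itself satisfies $G\subseteq G\subseteq\widehat G$, so instantiating $\widehat G'=G$ yields the conclusion verbatim, and the un-nested competitor $H$ that worries you never has to be handled. You instead attempt the existential version (``minimum over \emph{some} intermediate $\widehat G'$ implies minimum over $G$''), which is false: take $G$ already an interval graph, let $\widehat G$ be $G$ plus one edge chosen so that $\widehat G$ is still interval, and set $\widehat G'=\widehat G$; then $\widehat G$ is trivially a minimum interval supergraph of $\widehat G'$ but not of $G$. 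Your attempted repair is accordingly circular --- the minimum interval supergraph $\widehat G_0$ of $G$ containing $\widehat G'$ is ``produced'' by starting from the $\widehat G$ of the statement, which presupposes exactly the minimality of $\widehat G$ over $G$ that is to be proved; and your inequality chain only bounds $\|\widehat G\|$ from below by the minimum completion size of $G$, never from above. Replace all of this with the one-line instantiation $\widehat G'=G$ and the proof is complete.
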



\section{Forbidden induced subgraphs}
\label{sec:forbidden-subgraphs}

A \emph{forbidden induced subgraph} refers to a non-interval graph,
and it is minimal if every proper induced subgraph of it is an
interval graph.  Three vertices form an \emph{asteroidal triple} (AT)
if each pair of them is connected by a path that avoids the neighbors
of the third one.  We use \emph{asteroidal witness} (AW) to refer to a
minimal forbidden induced subgraph that is not a hole.  It should be
easy to check that an AW contains precisely one AT,
called \emph{terminals} of this AW; and its vertex set is the union of
these three defining paths for this triple.  By definition, the
terminals are the only simplicial vertices of this AW and they are
nonadjacent to each other.  Lekkerkerker and Boland
\cite{lekkerkerker-62-interval-graphs} observed that a graph is an
interval graph if and only if it is chordal and contains no AW, and
more importantly, proved the following characterization.

\begin{theorem}[\cite{lekkerkerker-62-interval-graphs}]
  A minimal forbidden induced subgraph is either a hole or an AW
  depicted in Figure~\ref{fig:at}.
\end{theorem}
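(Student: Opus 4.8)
The plan is to split the argument on whether the minimal forbidden induced subgraph $G$ is chordal. Invoking the cited characterization (a graph is interval iff it is chordal and has no asteroidal triple), a non-interval $G$ is either non-chordal or contains an AT. If $G$ is non-chordal it contains a hole $H$; since $H$ is itself non-interval (a hole is not chordal), minimality of $G$ forces $G=H$, so $G$ is a hole. Hence the substantive case is that $G$ is chordal and contains an asteroidal triple, and there I must show $G$ is one of the AWs of Figure~\ref{fig:at}.

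So assume $G$ is chordal with an AT $\{a,b,c\}$. I would fix realizing paths $P_a$ (a $b$--$c$ path avoiding $N(a)$), $P_b$, and $P_c$, each chosen to be a shortest such path (hence induced), and among all ATs of $G$ pick one minimizing the total number of vertices on $P_a\cup P_b\cup P_c$. The first step is to observe $V(G)=P_a\cup P_b\cup P_c$: a vertex $v$ on none of the three paths could be deleted without destroying the AT, since the same three paths survive in $G-v$ and $N_{G-v}(a)\subseteq N_G(a)$, etc., contradicting minimality. Then I would extract the structural facts that pin the union down, each coming from minimality together with chordality: the terminals $a,b,c$ are pairwise nonadjacent (immediate from the definition of an AT), they are simplicial, and they are the only simplicial vertices; every non-terminal vertex is an internal vertex of at least one of the three paths (all path-endpoints being terminals); the three paths overlap only in contiguous segments emanating from a common terminal or from a common ``core''; and, because a chordal graph has no long induced cycle, the chords joining two distinct paths are extremely restricted---each such potential chord is either \emph{forced} (identifying two vertices, collapsing part of the picture) or \emph{forbidden} (it would close a short hole). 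Chordality is the lever that turns ``three paths'' into ``three paths glued along a small, rigid gadget.''

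The final step is a case analysis on the three path lengths and on the sizes of the overlaps and of the core gadget. When the paths are long and essentially disjoint one reads off the infinite families of Figure~\ref{fig:at}---the $n$-nets and $n$-tents, whose triangle-or-gadget core is precisely what chordality forces the three path-ends to form---while the short configurations yield the finitely many sporadic obstructions (the long claw, the umbrella, and the remaining graphs of Figure~\ref{fig:at}). For each candidate one verifies minimality directly, i.e., that deleting any single vertex restores chordality and AT-freeness; this is mechanical once the adjacency pattern is fixed, and it simultaneously delivers the converse direction, that each graph in the figure really is an AW. The main obstacle is this enumeration itself: the three paths can meet near their endpoints in several inequivalent ways, chordality still permits a nonempty set of chords, and one must be careful that no candidate secretly contains a shorter hole or a smaller AW. The way to keep it tractable is to normalize first---determine exactly how the three paths overlap, then list the admissible chords, then simply read off the graph---rather than attacking the general configuration head-on.
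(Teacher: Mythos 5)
This theorem is not proved in the paper at all: it is quoted from Lekkerkerker and Boland \cite{lekkerkerker-62-interval-graphs}, so there is no in-paper argument to compare against. Your opening reduction is fine and matches the standard route: if the minimal forbidden subgraph $G$ is not chordal it contains a hole, and minimality forces $G$ to equal that hole; otherwise $G$ is chordal and, by the cited characterization, contains an AT, and one shows $V(G)$ is the union of three shortest realizing paths (deleting any vertex off the paths would leave a chordal graph with the same AT, contradicting minimality).

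The genuine gap is everything after that. The entire content of the theorem is the exact list of Figure~\ref{fig:at}, and your proposal replaces the derivation of that list with assertions: that the terminals are the only simplicial vertices, that the three paths ``overlap only in contiguous segments'' or meet in a ``small, rigid gadget,'' that every cross-path chord is either forced or closes a short hole, and finally that the case analysis is ``mechanical once the adjacency pattern is fixed.'' None of these is established, and the last claim badly understates the difficulty: classifying how a chordal graph can be covered by three induced paths whose endpoints form an AT, with all admissible chords, is precisely the long case analysis that constitutes the Lekkerkerker--Boland proof (it must, for instance, rule out a third center, bound the number of vertices adjacent to the shallow terminal to one or two, and show the base is completely joined to the center(s) exactly as in the $\dag$- and $\ddag$-patterns). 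Without carrying out that enumeration, or at least proving the normalization lemmas that make it finite and then exhibiting the cases, what you have is a plausible proof plan rather than a proof; as written it would also accept hypothetical obstructions not in the figure, since nothing in the sketch excludes them.
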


\tikzstyle{corner}  = [fill=blue,inner sep=3pt]
\tikzstyle{special} = [fill=black,circle,inner sep=2pt]
\tikzstyle{vertex}  = [fill=black,circle,inner sep=2pt]
\tikzstyle{edge}    = [draw,thick,-]
\tikzstyle{at edge} = [draw,ultra thick,-,red]
\tikzstyle{fill edge} = [dashed,red,bend right]
\begin{figure*}[ht]
  \centering
  \begin{subfigure}[b]{0.22\textwidth}
    \centering
    \begin{tikzpicture}[scale=.2]
    \node [corner,label=right:$t_1$] (s) at (0,6.44) {};
    \node [corner,label=above:$t_2$] (a) at (-7, 0) {};
    \node [vertex] (a1) at (-4, 0) {};
    \node [special,label=45:$c$] (v) at (0, 0) {};
    \node [vertex] (b1) at (4, 0) {};
    \node [corner,label=above:$t_3$] (b) at (7, 0) {};
    \node [vertex] (c) at (0,3.5) {};
    \draw[] (a) -- (a1) -- (v) -- (b1) -- (b);
    \draw[] (v) -- (c) -- (s);
    \end{tikzpicture}
    \caption{long claw}
    \label{fig:long-claw}
  \end{subfigure}%
  \qquad
  \begin{subfigure}[b]{0.22\textwidth}
    \centering
    \begin{tikzpicture}[scale=.2]
    \node [corner,label=right:$t_1$] (s) at (0,3.2) {};
    \node [corner,label=above:$t_2$] (a) at (-7, 0) {};
    \node [vertex] (a1) at (-4, 0) {};
    \node [special,label=45:$c$] (v) at (0, 0) {};
    \node [vertex] (b1) at (4, 0) {};
    \node [corner,label=above:$t_3$] (b) at (7, 0) {};
    \node [vertex] (c) at (0,-3.8) {};
    \draw[] (a) -- (a1) -- (v) -- (b1) -- (b) -- (c) -- (a);
    \draw[] (b1) -- (c) -- (a1);
    \draw[] (v) -- (c) -- (s);
    \end{tikzpicture}
    \caption{whipping top}
    \label{fig:top}
  \end{subfigure}%

  \begin{subfigure}[b]{0.45\textwidth}
    \centering
    \begin{tikzpicture}[scale=.45]
    \node [corner,label=right:$s$] (s) at (0,4.44) {};
    \node [corner,label=above:$l$,label=below:$b_{0}$] (a) at (-7, 0) {};
    \node [special,label=above:$h$,label=below:$b_{1}$] (a1) at (-5, 0) {};
    \node [vertex,label=below:$b_{2}$] (a2) at (-3, 0) {};
    \node [vertex,label=below:$b_{i}$] (bi) at (0, 0) {};
    \node [vertex,label=below:$b_{d-1}$] (b2) at (3, 0) {};
    \node [special,label=above:$t$,label=below:$b_d$] (b1) at (5, 0) {};
    \node [corner,label=above:$r$,label=below:$b_{d+1}$] (b) at (7, 0) {};
    \node [special,label=5:$c$] (c) at (0,2.7) {};
    \draw[] (a) -- (a1) -- (a2) (b2) -- (b1) -- (b);
    \draw[] (bi) -- (c) -- (s);
    \draw[] (a1) -- (c) -- (b1);
    \draw[] (a2) -- (c) -- (b2);
    \draw[dashed] (a2) -- (b2);
    \end{tikzpicture}
    \caption{$\dag$-AW ($d\ge 2$)}
    \label{fig:dag}
  \end{subfigure}%
  \begin{subfigure}[b]{0.45\textwidth}
    \centering
    \begin{tikzpicture}[scale=.4]
    \node [corner,label=right:$s$] (s) at (0,4.44) {};
    \node [corner,label=above:$l$,label=below:$b_{0}$] (a) at (-7, 0) {};
    \node [special,label=above:$h$,label=below:$b_{1}$] (a1) at (-5, 0) {};
    \node [vertex,label=below:$b_{2}$] (a2) at (-3, 0) {};
    \node [vertex,label=below:$b_{i}$] (bi) at (0, 0) {};
    \node [vertex,label=below:$b_{d-1}$] (b2) at (3, 0) {};
    \node [special,label=above:$t$,label=below:$b_d$] (b1) at (5, 0) {};
    \node [corner,label=above:$r$,label=below:$b_{d+1}$] (b) at (7, 0) {};
    \node [special,label=left:$c_1$] (c1) at (-1.5,2.7) {};
    \node [special,label=right:$c_2$] (c2) at (1.5,2.7) {};
    \draw[] (a) -- (a1) -- (a2) (b2) -- (b1) -- (b);
    \draw[] (bi) -- (c1) -- (s) -- (c2) -- (bi);
    \draw[] (a) -- (c1) -- (c2) -- (b);
    \draw[] (a1) -- (c1) -- (b1) -- (c2) -- (a1);
    \draw[] (a2) -- (c1) -- (b2) -- (c2) -- (a2);
    \draw[dashed] (a2) -- (b2);
    \end{tikzpicture}
    \caption{$\ddag$-AW ($d\ge 1$)}
    \label{fig:ddag}
  \end{subfigure}%
  \caption{Asteroidal witnesses in a chordal graph (terminals are
    marked as squares).}
  \label{fig:at}
\end{figure*}
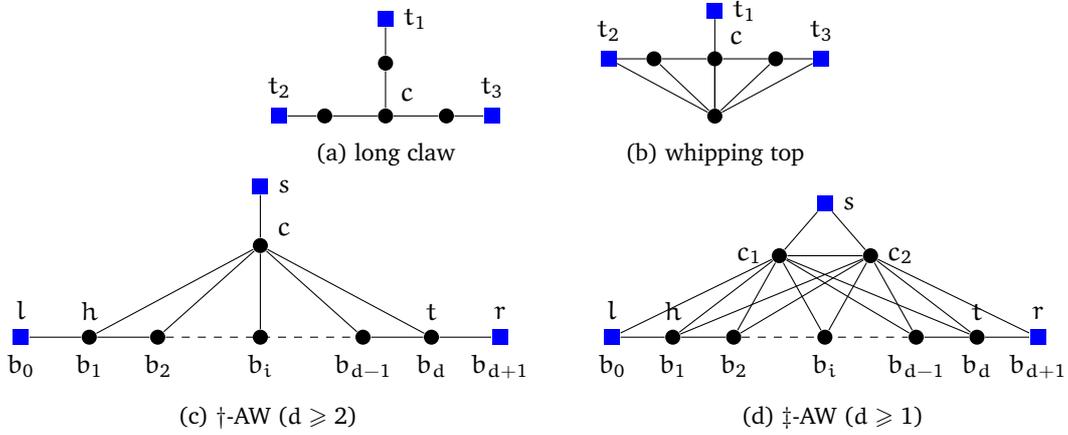

Some remarks are in order.  First, it is easy to verify that a hole of
$6$ or more vertices witnesses an AT, e.g., any three
nonadjacent vertices within it, but following convention, we only
refer to it as a hole, and reserve the term AW for graphs listed in
Figure~\ref{fig:at}.  Second, for the purpose of the current paper, we
single out $\dag$- and $\dag$-AWs with $d\le 3$, and denote them by
\emph{$d$-nets}, and respectively \emph{$d$-tents}; they, together
with long claws and whipping tops, are called \emph{small AWs}.  The
others, i.e., $\dag$- and $\dag$-AWs with $d> 3$, are called
\emph{long AW}.  

The \emph{frame} of a long AW is defined to be the union of the
terminals and their neighbors.  By definition, at least one vertex of
the long AW is not in its frame; all of them belong to the longest
defining path.  
The ends ($l,r$) and inner vertices ($B = \{b_1, \dots, b_{d}\}$) of
this path are called \emph{base terminals} and \emph{base (vertices)}
respectively.
The other terminal $s$ is the \emph{shallow terminal}, whose
neighbor(s) $c$ or ($c_1,c_2$) are the \emph{center(s)}.

To avoid repetition of the essentially same argument for $\dag$-AWs
and $\ddag$-AWs, we use a generalized notation for both $\dag$- and
$\ddag$-AWs.  In particular, both $c_1$ and $c_2$ refer to the only
center $c$ when it is a $\dag$-AW.  As long as we do not use the
adjacency of $c_1$ and $l$, $c_2$ and $r$, or $c_1$ and $c_2$ in any
of the arguments, this unified (abused) notation will not introduce
inconsistencies.\footnote{Albeit the frames in $\dag$- and $\ddag$-AWs
  have different number of vertices and different number of edges,
  they both have $10$ missed edges.  Moreover, using our generalized
  notations, these edges are exactly the same.  See
  Lemma~\ref{lem:fill-long-aw} and its corollary.} For the sake of
notational convenience, we will use $h$ and $t$ to refer to the base
vertices $b_1$ and $b_{d}$, respectively; the frame is then denoted by
$(s: c_1,c_2: l,h; t,r)$.

\begin{lemma}\label{lem:only-terminals-are-simplicial}
  In an AW a vertex is simplicial if and only if it is a terminal. 
\end{lemma}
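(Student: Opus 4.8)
The plan is to invoke the classification of Lekkerkerker and Boland \cite{lekkerkerker-62-interval-graphs} quoted above, so that the asteroidal witness $W$ in question is one of the four graphs in Figure~\ref{fig:at}; write $\{t_1,t_2,t_3\}$ for its unique asteroidal triple, i.e.\ its terminals. The two implications are handled separately: the ``if'' direction by direct inspection of the figure, and the ``only if'' direction from the minimality of $W$.

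\emph{Every terminal is simplicial.} Read the neighborhood of each terminal off Figure~\ref{fig:at}. In the long claw every terminal has degree one; in the whipping top $t_1$ has degree one, and each of $t_2$ and $t_3$ has exactly two neighbors, which are adjacent to each other. For the two infinite families the generalized notation makes this uniform: in a $\dag$-AW, $N(s)=\{c\}$, $N(l)=\{h\}$, and $N(r)=\{t\}$; in a $\ddag$-AW, $N(s)=\{c_1,c_2\}$, $N(l)=\{h,c_1\}$, and $N(r)=\{t,c_2\}$, where in each of the last three cases the two listed vertices are adjacent (recall $c_1\sim c_2$, and each center is adjacent to every base vertex, in particular to $h$ and to $t$). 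In every case the neighborhood of the terminal induces a clique. The only points that deserve a moment's care are the $\dag$/$\ddag$ unification --- the edges $c_1l$, $c_2r$, $c_1c_2$ being present only when $c_1\ne c_2$ --- and the degenerate small values of $d$, which are inspected on the spot.

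\emph{No non-terminal vertex is simplicial.} Because $W$ is a \emph{minimal} forbidden induced subgraph, $W-v$ is an interval graph for every $v\in V(W)$, hence contains no asteroidal triple. For each $i$ fix a \emph{shortest} $t_j$-$t_k$ path $P_i$ among those avoiding $N_W(t_i)$ (one exists since $\{t_1,t_2,t_3\}$ is an asteroidal triple); being shortest, $P_i$ is an induced path of $W$. The claim is that $V(P_1)\cup V(P_2)\cup V(P_3)=V(W)$: if some vertex $v$ were on none of the three paths, then they would all survive in $W-v$ and would still avoid $N_{W-v}(t_i)\subseteq N_W(t_i)$, and the terminals themselves would survive (none of them is $v$, since each terminal lies on two of the paths), so $\{t_1,t_2,t_3\}$ would remain an asteroidal triple in $W-v$ --- impossible. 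Now let $v$ be any non-terminal vertex. It lies on some $P_i$, whose two endpoints are the terminals $t_j,t_k\ne v$, so $v$ is internal to $P_i$; its two $P_i$-neighbors thus lie in $N(v)$ and are nonadjacent, $P_i$ being induced. Hence $N(v)$ is not a clique and $v$ is not simplicial.

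The step I expect to demand the most attention is the covering claim $\bigcup_i V(P_i)=V(W)$ --- in particular the point that a vertex missing from all three paths might still be adjacent to one or two terminals. This causes no trouble because deleting such a vertex can only \emph{shrink} the terminal neighborhoods, so the avoidance condition for the surviving paths stays (at least) as easy to meet and the asteroidal triple is genuinely intact in $W-v$. If one prefers to bypass the appeal to minimality, the ``only if'' direction can instead be verified by exhibiting, for each non-terminal vertex, two nonadjacent vertices in its neighborhood --- e.g.\ $b_{i-1}$ and $b_{i+1}$ for an internal base vertex $b_i$ of a $\dag$- or $\ddag$-AW, and $s$ and $h$ for a center there, the long claw and whipping top being a small finite check; this is the more pedestrian but equally valid route.
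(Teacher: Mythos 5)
Your proof is correct. The paper itself gives no explicit proof of this lemma: it is stated as an immediate consequence of the Lekkerkerker--Boland classification, with the surrounding text merely noting that the vertex set of an AW is the union of the three defining paths and that the terminals are the only simplicial vertices. Your ``if'' direction and your ``pedestrian'' alternative for the ``only if'' direction (exhibiting $b_{i-1},b_{i+1}$ for an inner base vertex, $s,h$ for a center, etc.) are exactly the inspection of Figure~\ref{fig:at} that the paper implicitly intends. Your primary route for the ``only if'' direction --- using minimality of $W$ to show the three shortest defining paths cover $V(W)$, so every non-terminal is internal to an induced path and hence has two nonadjacent neighbors --- is a slightly more abstract, case-free argument that the paper does not spell out; it is sound (the covering claim is handled correctly, since deleting a vertex off all three paths only shrinks the terminal neighborhoods and interval graphs are AT-free), and it buys independence from the explicit shape of the witnesses, at the cost of invoking minimality and AT-freeness rather than a finite check.
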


In time ${\cal O}(n^{5})$, we can find a minimal forbidden induced
subgraph or asserts its nonexistence as follows.  For a hole, we guess
three consecutive vertices $\{h_1,h_2,h_3\}$, and then search for the
shortest $h_1$-$h_3$ path in $G - (N[h_2]\setminus \{h_1,h_3\})$.  For
an AW, we guess three independent vertices $\{t_1,t_2,t_3\}$, and for
$i=1,2,3$, search for the shortest path between the other two in $G -
N[t_i]$.  Since the AW found as such is the minimum among what
witnesses the AT $\{t_1,t_2,t_3\}$, in the same time we
can actually construct a small AW or asserts its nonexistence.

It is now well known that holes can be easily filled in.  A long hole
of more than $k+3$ vertices will immediately imply ``NO,'' while a
short hole has only a bounded number of minimal ways to fill
\cite{kaplan-99-chordal-completion,cai-96-hereditary-graph-modification},
of which an interval supergraph $\widehat G$ of $G$ must contain one.

\begin{lemma}\label{lem:holes}
  A minimal set of edges that fill a hole $H$ has size $|H| - 3$, and
  the number of such sets is upper bounded by $4^{|H|-2}$.
\end{lemma}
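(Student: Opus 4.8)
The plan is to notice that, since $H$ is a hole, $G[V(H)]$ is simply a chordless cycle $C_m$ on $m:=|H|$ vertices, and that ``filling'' a hole means adding chords until the resulting supergraph of $C_m$ is chordal (no new hole on $V(H)$); every added edge therefore lies in $V(H)^2\setminus E(G)$. So the statement is really a statement about \emph{minimal triangulations} of $C_m$, and I would establish it in three steps.

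First I would show that every minimal triangulation $H'$ of $C_m$ is a triangulation of the convex polygon on $v_1,\dots,v_m$ (the vertices of $C_m$ in cyclic order), i.e.\ a maximal set of pairwise non-crossing chords. The tool is an ear-peeling induction: pick a simplicial vertex $v$ of the chordal graph $H'$; its two $C_m$-neighbors $a,b$ must be adjacent in $H'$, and since $C_m$ is triangle-free ($m\ge 4$) the edge $ab$ is a fill edge. Deleting $v$ replaces the path $a\,v\,b$ by the edge $ab$, so $H'-v$ is a chordal supergraph of a shorter cycle $C_{m-1}$, and a short argument shows it is again \emph{minimal}: if a fill edge $e$ could be removed from $H'-v$ leaving it chordal, then (since $v$ stays simplicial in $H'-e$) also $H'-e$ would be chordal, contradicting minimality of $H'$; and $e=ab$ cannot occur, because $ab$ is a cycle edge of $C_{m-1}$, not a fill edge. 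The induction then yields both that $H'$ is a full polygon triangulation and that it has exactly $\big((m-1)-3\big)+1=m-3$ fill edges, with base case $m=4$ (one chord). This is the size claim $|H|-3$.

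Second, for the count I would use the same correspondence: the minimal triangulations of $C_m$ are precisely the triangulations of the convex $m$-gon, whose number is the Catalan number $C_{m-2}=\frac{1}{m-1}\binom{2m-4}{m-2}$ (Euler--Segner). Since $C_n=\frac{1}{n+1}\binom{2n}{n}\le\binom{2n}{n}\le 4^n$, substituting $n=m-2=|H|-2$ gives the asserted bound $4^{|H|-2}$ on the number of minimal fill sets.

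The delicate point — and the step I would be most careful with — is the forward direction of the first step: that a minimal chordal supergraph of $C_m$ cannot contain crossing chords and must in fact be a full polygon triangulation. The ear-peeling induction handles this, but it leans on the characterization ``a graph is chordal iff it has a simplicial vertex whose removal leaves a chordal graph'' and on checking that the bad case $e=ab$ in the minimality step does not arise; once those are pinned down, the size count and the Catalan estimate are routine. Alternatively, one could simply invoke the well-known facts about minimal triangulations of cycles from \cite{kaplan-99-chordal-completion,cai-96-hereditary-graph-modification} and reduce the proof to the Catalan bound $C_{|H|-2}\le 4^{|H|-2}$.
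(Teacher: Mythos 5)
Your overall route (minimal fills of a hole are exactly the triangulations of the corresponding convex polygon, counted by the Catalan number $C_{|H|-2}\le 4^{|H|-2}$) is the standard one; the paper itself offers no proof of this lemma and simply cites Kaplan et al.\ and Cai, which is also your suggested fallback. So the only real question is whether your self-contained ear-peeling induction is airtight, and there it has one concrete gap: you never rule out that the simplicial vertex $v$ has neighbors in $H'$ other than its two cycle neighbors $a,b$. This is used silently in three places. First, the parenthetical ``$v$ stays simplicial in $H'-e$'' can fail if the removable fill edge $e$ of $H'-v$ joins two neighbors of $v$ (possible as soon as $\deg_{H'}(v)>2$), in which case chordality of $H'-e$ does not follow from chordality of $(H'-v)-e$. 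Second, the count $\big((m-1)-3\big)+1=m-3$ tacitly assumes that $ab$ is the only fill edge lost when $v$ is deleted, i.e.\ that no fill edge of $H'$ is incident to $v$. Third, the conclusion that $H'$ is a non-crossing (full polygon) triangulation likewise needs $v$ to contribute only the triangle on $\{v,a,b\}$.

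The repair is one line, and it should be stated explicitly before the induction: in a minimal triangulation $H'$ of $C_m$, every simplicial vertex has degree exactly $2$. Indeed, if $v$ is simplicial and $vx$ is an incident fill edge with $x\notin\{a,b\}$, then in $H'-vx$ the neighborhood of $v$ is still a clique, so $v$ remains simplicial there, and $(H'-vx)-v=H'-v$ is chordal; adding a simplicial vertex preserves chordality, so $H'-vx$ is chordal, contradicting minimality of $H'$. With $N_{H'}(v)=\{a,b\}$ established, all three issues disappear ($e$ cannot lie inside $N_{H'}(v)$ unless $e=ab$, which you already excluded), and the remainder of your argument---the induction on $m$, the injection of minimal fills into polygon triangulations, and the estimate $C_{|H|-2}\le\binom{2|H|-4}{|H|-2}\le 4^{|H|-2}$---goes through as written.
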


Given any AW $W$, unless we insert an edge between one terminal and
the defining path connecting the other terminals, the terminals will
remain an AT; witnessed by a subset of $W$.  This
speedily produces a set of ${\cal O}(|W|)$ edges $E_W$ that any
interval supergraph cannot avoid; the number is $d+7$ and $d+5$ for a
$\dag$- and $\ddag$-AW respectively.
 However, the insertion of some edge in $E_W$, e.g., an edge between
 two terminals, might bring hole(s) to $W$, which in turn demands the
 insertion of a set of edges stated in {Lemma}~\ref{lem:holes}.  Upon
 a closer scrutiny, one sees it can be done slightly more efficiently.
 If there exists a smaller set $E'_W$ of edges than $E_W$ such that
 for each edge $e\in E_W\setminus E'_W$, we have to insert at least
 one edge of $E'_W$ after the insertion of $e$, then we may branch on
 insertion of edges in $E'_W$ instead.  Here we do not require $E'_W$
 to be a subset of $E_W$.  Specifically for long AWs we have (noting
 $d>3$)

\begin{lemma}\label{lem:fill-long-aw}
  Let $(s: c_1,c_2: l,h;t,r)$ and $\{b_1,\dots,b_d\}$ be the frame and
  base of a long AW in $G$.  Any interval supergraph $\widehat G$ of
  $G$ must contain one of the $d+3$ edges:
  \begin{equation}
    \label{eq:E_W}
    \{l c_2, c_1 r, h t, s h, s t\} \cup \{s b_i | 1< i< d\}.    
  \end{equation}
  Moreover, if $\widehat G$ contains none of $\{l c_2, c_1 r, h t, s
  h, s t\}$, then the frame induces the same subgraph in $G$ and
  $\widehat G$.
\end{lemma}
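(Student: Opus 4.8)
The plan is to establish both halves by arguing about the asteroidal triple $\{s,l,r\}$ in $\widehat G$ together with a careful use of chordality. Recall that a long AW has $d>3$, so the base $B=\{b_1,\dots,b_d\}$ has at least four vertices; write $h=b_1$, $t=b_d$, and note $l=b_0$, $r=b_{d+1}$ by the figure's labelling. First I would prove the ``moreover'' clause, since it is the crux and the first half follows from it with one extra observation.

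\medskip
\noindent\textbf{Step 1: the frame is finalized.}
Suppose $\widehat G$ contains none of $\{lc_2, c_1r, ht, sh, st\}$. I want to show no other missed edge of $G[\text{frame}]$ is present in $\widehat G$ either. The missed edges within the frame are exactly the $10$ pairs listed in the footnote; after removing the five above, the remaining candidates are, in the unified notation, $s l$, $s r$, $l t$, $r h$, and $l r$ (and, in the $\ddag$ case, the pairs that were identified with these under $c_1=c_2=c$). The key tool is Lemma~\ref{thm:interval-representation}: in $\widehat G$ the vertices $s$ and $l$ (and likewise $s$ and $r$) are nonadjacent, so their intervals are separated by some clique $K_p$; since $c_1$ (resp.\ $c_2$) is adjacent to both $s$ and a vertex deep in the base, the center intervals must straddle these separators, which pins down the cyclic order of the intervals for $s$, the centers, and the endpoints of the base path. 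The idea is: because the path $l\,h\,b_2\cdots b_{d-1}\,t\,r$ survives in $G$ as an induced path on the base terminals and base (the only chords that could destroy it, $ht$ and neighbours of $s$, are excluded by hypothesis together with Lemma~\ref{lem:only-terminals-are-simplicial}), the base path forces $I_l$ and $I_r$ to lie at the two extremes of the union of base intervals, and $I_s$ must sit ``above'' the middle of the base (attached only through $c_1,c_2$). Each of the forbidden-in-$\widehat G$ edges $sl, sr, lt, rh, lr$ would then create an interval overlap incompatible with this layout — e.g.\ $I_l$ overlapping $I_t$ forces $I_l$ to stretch across almost the whole base, which then must overlap $I_s$ (because $I_s$ hangs over the middle), contradicting $s\not\sim l$ in $\widehat G$. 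I would make each such contradiction precise by locating the separating clique guaranteed by Lemma~\ref{thm:interval-representation} and checking a vertex of $B$ or a center lands on the wrong side.

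\medskip
\noindent\textbf{Step 2: $\widehat G$ must contain one of the $d+3$ edges.}
For this half, observe that if $\widehat G$ contains none of the five frame edges $\{lc_2, c_1r, ht, sh, st\}$, then by Step~1 the frame is unchanged, so in particular $l, r$ are still the base terminals, $\{s,l,r\}$ is still an independent triple, and the base path $l\,b_1\cdots b_d\,r$ is still present. For $\{s,l,r\}$ to fail to be an AT in $\widehat G$ — which it must, since $\widehat G$ is interval — one of the three ``avoiding'' paths has to be destroyed. The $l$–$r$ path avoiding $N(s)$ and the $s$–$l$, $s$–$r$ paths using the centers are the natural witnesses; destroying them while keeping the frame intact forces an edge from $s$ to some interior base vertex $b_i$ with $1<i<d$ (to kill the $l$–$r$ path through the base by putting a neighbour of $s$ on it), giving the edges $\{sb_i : 1<i<d\}$. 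Combined with the five frame edges, this is exactly the set in~\eqref{eq:E_W}; the count is $5+(d-2)=d+3$. I would spell this out by taking $\widehat G$ interval, assuming it omits all $d+3$ edges, invoking Step~1 to fix the frame, and then exhibiting the three AT-witnessing paths explicitly inside $\widehat G$ to contradict that $\widehat G$ has no AW.

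\medskip
\noindent\textbf{Main obstacle.}
The delicate point is Step~1: translating ``$\widehat G$ omits the five edges'' into a rigid interval layout and then ruling out each of the five remaining frame non-edges. The challenge is that the base has unbounded length, so the argument must be uniform in $d$; this is where the hypothesis $d>3$ is used, giving enough room strictly between $h=b_1$ and $t=b_d$ for the middle structure to be forced. I expect the cleanest route is to fix an interval representation of $\widehat G$, use Lemma~\ref{thm:interval-representation} on the pairs $(s,l)$, $(s,r)$, and $(l,r)$ simultaneously to obtain three separating cliques, and argue these cliques must be nested in the order dictated by the base path — after which each forbidden frame edge contradicts the nesting. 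Handling $\dag$- versus $\ddag$-AWs together should be free, exactly as the paper's unified $(c_1,c_2)$ convention promises, since at no point do we need the adjacencies $c_1l$, $c_2r$, or $c_1c_2$.
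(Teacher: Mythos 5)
Your Step~2 is essentially the paper's argument for the first assertion (enumerate the edges that can break the AT $\{s,l,r\}$, observe that the ones outside~\eqref{eq:E_W} all lie inside the frame, and appeal to the ``moreover'' clause), and that part is fine. The genuine gap is in Step~1, which is the crux. Your layout argument rests on the claim that the path $l\,h\,b_2\cdots b_{d-1}\,t\,r$ remains induced in $\widehat G$ (``the only chords that could destroy it, $ht$ and neighbours of $s$, are excluded''), and hence that $I_l,I_r$ sit at the two extremes of the base and $I_s$ hangs over its middle. That premise is false under the hypothesis of the ``moreover'' clause: only the five edges $\{lc_2, c_1r, ht, sh, st\}$ are excluded, so $\widehat G$ may freely contain chords $b_ib_j$ inside the base, edges from $l$ or $r$ to interior base vertices, and even the edges $sb_i$ with $1<i<d$ (these belong to~\eqref{eq:E_W} but are not assumed absent here). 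Lemma~\ref{lem:only-terminals-are-simplicial} is a statement about the AW in $G$ and gives no control over $\widehat G$. With such extra edges the base intervals need not follow the path order, ``the middle of the base'' is not geometrically meaningful, and the contradictions you sketch for $sl$, $sr$, $lt$, $hr$, $lr$ (e.g.\ ``$I_l$ must stretch across the base and hence meet $I_s$'') do not go through; your proposed repair via three nested separating cliques from Lemma~\ref{thm:interval-representation} relies on the same unproved rigidity.

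The missing idea is much more local and does not use interval representations at all: $\widehat G$ is chordal, and each of the five remaining frame non-edges, once inserted, closes a short cycle inside the frame whose only possible chords are among the five excluded edges. Concretely, $sl$ creates the $4$-cycle $(s\,c_2\,h\,l)$ with chords $sh$, $lc_2$; symmetrically $sr$; $lt$ creates $(l\,h\,c_2\,t)$ with chords $ht$, $lc_2$; symmetrically $hr$; and $lr$ creates the $5$-cycle $(c\,h\,l\,r\,t)$ (or the $4$-cycle $(l\,c_1\,c_2\,r)$ in the $\ddag$ case), every minimal chordalization of which uses an excluded edge. Since all five chords are forbidden by hypothesis, none of $sl,sr,lt,hr,lr$ can appear in $\widehat G$, which is exactly the ``moreover'' clause, uniformly in $d$ and unaffected by whatever $\widehat G$ does outside the frame. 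Note also that the correct layout statement one can extract afterwards is only the weaker Corollary~\ref{lem:fill-long-aw-2} ($I_s$ lies between $I_h$ and $I_t$), not the rigid picture your Step~1 needs.
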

\begin{proof}
  We may prove the second assertion first.  
  We show $\widehat G$ cannot contain any of the other $5$ missed
  edges $\{s l, s r, l r, l t, h r\}$ in the frame.  The insertion of
  edge $s l$ will introduce a $4$-hole $(s c h l s)$, which requires
  the insertion of at least one of $s h$ and $l c_2$, which are not
  allowed.  A symmetric argument apply to the edge $s r$.  The
  insertion of edge $l t$ will introduce a $4$-hole $(l h c_2 t l)$,
  which requires the insertion of at least one of $h t$ and $l c_2$,
  which are not allowed.  A symmetric argument apply to the edge $h
  r$.  Now we are left with only $l r$, whose insertion will introduce
  a $5$-hole $(c h l r t c)$ or $4$-hole $(l c_1 c_2 r l)$ depending
  on the type of the AW.  Every minimal set of edges that fill this
  hole need edges that have been excluded.

  To make the terminals cease to form an AT, we have to insert an edge
  to connect one terminal and the defining path connecting other
  terminals.  This makes a list of at most $d+7$ edges, among which
  only $\{s l, s r, l r, l t, h r\}$ are not included in
  (\ref{eq:E_W}).  All of them are in the frame, and hence cannot be
  inserted without also inserting one of (\ref{eq:E_W}).  This proves
  the first assertion and completes the proof.
\end{proof}

The exclusion of edges $\{l c_2, c_1 r, h t, s h, s t\}$ will
perpetuate the structure of the frame, which is hence called an
\emph{unchangeable frame}.  In most case, the frame is the only
structure in a long AW that concerns us.

The purpose of {Lemma}~\ref{lem:fill-long-aw} is surely not trying to
decrease the directions we need to branch on a long AW, from $d+7$ or
$d+5$ to $d+3$.  Instead, we are after the structural information that
turns out to be crucial.
Of special significance is the edge $h t$; though its insertion does
not break this AT.  This is formulated in the following
corollary and visualized in Figure~\ref{fig:fixed-frame}.  Noting that
as a consequence of $h\not\sim t$, intervals $I_h$ and $I_t$ are
disjoint.

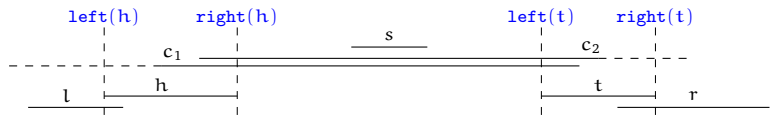
\begin{figure*}[h!]
  \centering
  \begin{tikzpicture}[scale=.5]
    \scriptsize
    \draw  (9.5,1.5) -- (13,1.5);  \node at (11,1.8) {$h$};
    \draw  (7.5,1.2) -- (10,1.2);   \node at (8.5,1.5) {$l$};
    \draw  (23,1.2) -- (27,1.2); \node at (25,1.5) {$r$};
    \draw (21,1.5) -- (24,1.5);  \node at (22.5,1.8) {$t$};
    
    \draw[dashed] (7, 2.3) -- (11, 2.3); \draw(11, 2.3)-- (22, 2.3); \node
    at (11.3, 2.6) {$c_1$};
    \draw (12,2.5) -- (22.5,2.5); \draw[dashed] (22.5,2.5) --(25,2.5);
    \node at (22.3, 2.8) {$c_2$};
    
    \draw  (16, 2.8) -- (18, 2.8);  \node at (17, 3.1) {$s$};
    
    \draw[ultra thin,dashed] (9.5,1) -- (9.5,3.5);\node[blue] at (9.5,3.6)
    {$\texttt{left}(h)$};
    \draw[ultra thin,dashed] (13,1) -- (13,3.5);\node[blue] at (13,3.6)
    {$\texttt{right}(h)$};
    \draw[ultra thin,dashed] (21,1) -- (21,3.5);\node[blue] at (21,3.6)
    {$\texttt{left}(t)$};
    \draw[ultra thin,dashed] (24,1) -- (24,3.5);\node[blue] at (24,3.6)
    {$\texttt{right}(t)$};
  \end{tikzpicture}
  \caption{Interval rep. of a unchangeable frame.}
  \label{fig:fixed-frame}
\end{figure*}
\begin{corollary}\label{lem:fill-long-aw-2}
  Let $(s: c_1,c_2: l,h;t,r)$ be the frame of a long AW in graph $G$.
  If an interval supergraph $\widehat G$ of $G$ contains none of $\{l
  c_2, c_1 r, h t, s h, s t\}$, then in any interval representation of
  $\widehat G$, the interval for $s$ is between intervals for $h$ and
  $t$.  
\end{corollary}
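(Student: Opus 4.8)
The plan is to argue inside an arbitrary but fixed interval representation $\mathcal I$ of $\widehat G$ and show that $I_s$ is trapped in the gap between $I_h$ and $I_t$. First I would invoke Lemma~\ref{lem:fill-long-aw}: since $\widehat G$ contains none of $\{l c_2, c_1 r, h t, s h, s t\}$, the frame $(s:c_1,c_2:l,h;t,r)$ induces the same subgraph in $\widehat G$ as in $G$. Hence in $\widehat G$ I may freely use the edges $l\sim h$, $t\sim r$, $s\sim c_1$, $s\sim c_2$, $c_1\sim h$, $c_1\sim t$, $c_2\sim h$, $c_2\sim t$ and the non-edges $h\not\sim t$, $s\not\sim h$, $s\not\sim t$, $l\not\sim t$, $h\not\sim r$, $s\not\sim l$, $s\not\sim r$, $l\not\sim c_2$, $r\not\sim c_1$. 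From $h\not\sim t$ the intervals $I_h$ and $I_t$ are disjoint, and since the statement is symmetric in $h$ and $t$ I may assume $\emph{right}(h)<\emph{left}(t)$; it then remains to prove $\emph{right}(h)<\emph{left}(s)$ and $\emph{right}(s)<\emph{left}(t)$.

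Suppose this fails, so $\emph{left}(s)\le\emph{right}(h)$ or $\emph{left}(t)\le\emph{right}(s)$. The frame is invariant under exchanging $h\leftrightarrow t$, $l\leftrightarrow r$ and $c_1\leftrightarrow c_2$ (which is exactly what the generalized notation is set up to allow), and this swap composed with reflecting the line maps the second disjunct to the first; so it suffices to reach a contradiction from $\emph{left}(s)\le\emph{right}(h)$. Together with $I_s\cap I_h=\emptyset$ (as $s\not\sim h$), this forces $I_s$ to lie entirely to the left of $I_h$, i.e.\ $\emph{right}(s)<\emph{left}(h)$.

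Next I would locate $I_l$. From $l\sim h$ we get $I_l\cap I_h\neq\emptyset$; combined with $l\not\sim s$ and the fact that $I_s$ is strictly left of $I_h$ this forces $\emph{right}(s)<\emph{left}(l)$, and combined with $l\not\sim t$ and the fact that $I_t$ is strictly right of $I_h$ it forces $\emph{right}(l)<\emph{left}(t)$. So $I_l$ sits strictly between $I_s$ and $I_t$. Finally, $s\sim c_2$ gives $\emph{left}(c_2)\le\emph{right}(s)$ and $t\sim c_2$ gives $\emph{left}(t)\le\emph{right}(c_2)$, hence $\emph{left}(c_2)\le\emph{right}(s)<\emph{left}(l)\le\emph{right}(l)<\emph{left}(t)\le\emph{right}(c_2)$, so $I_l\subseteq I_{c_2}$ and in particular $l\sim c_2$ in $\widehat G$ — contradicting $l\not\sim c_2$. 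This disposes of the remaining case and proves the corollary.

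The whole argument is a short chase through $\mathcal I$, so I do not anticipate a genuine obstacle; the one thing that needs care is bookkeeping — every adjacency and non-adjacency used above must be one that Lemma~\ref{lem:fill-long-aw} actually guarantees in $\widehat G$. The decisive non-edge is $l\not\sim c_2$ (and its mirror $r\not\sim c_1$), available precisely because $l c_2$ is one of the five excluded edges; this is also the reason the unified $c_1/c_2$ notation must never assume $c_2\sim l$ or $c_1\sim r$. Checking that the $h\leftrightarrow t$, $l\leftrightarrow r$, $c_1\leftrightarrow c_2$ symmetry respects exactly the set of relations invoked is the only other point worth verifying.
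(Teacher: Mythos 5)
Your proof is correct: it rests, as the paper intends, on the second assertion of Lemma~\ref{lem:fill-long-aw} (the frame is unchanged in $\widehat G$), and then carries out the routine interval-chasing that the paper leaves implicit (it states the corollary without proof, appealing only to Figure~\ref{fig:fixed-frame}), with the decisive contradiction coming from $I_l$ being trapped inside $I_{c_2}$, i.e.\ from the excluded edge $lc_2$ (and $c_1r$ in the mirrored case). You also correctly respect the unified $c_1/c_2$ notation by never using the adjacencies $c_1l$, $c_2r$, or $c_1c_2$, so the argument is valid for both $\dag$- and $\ddag$-AWs.
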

A similar and straightforward check furnishes for each small AW a set
of at most $6$ edges, reduced from e.g., 12 candidate edges for a
long-claw.  The proof is deferred to the appendix for lack of space.

\begin{lemma}\label{lem:6-enough}
  For each small AW in a graph $G$, there is a set of at most $6$
  edges, depicted as dashed edges in Figure~\ref{fig:fill-edges}, such
  that any interval supergraph of $G$ contains at least one of them.
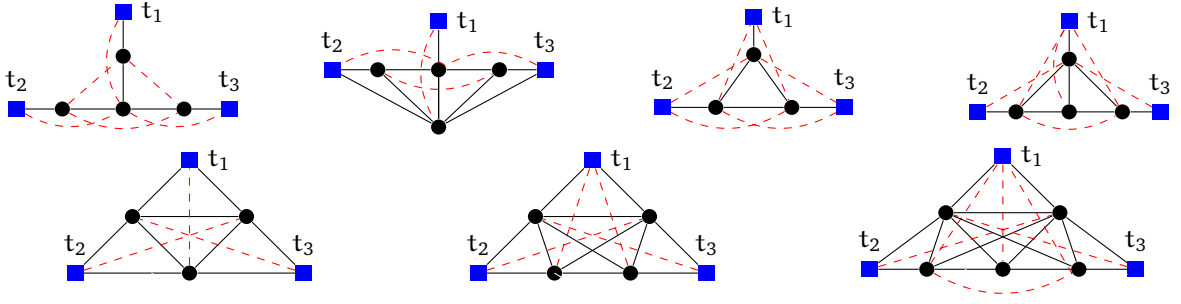
\begin{figure*}[ht]
  \centering
  \begin{subfigure}[b]{0.23\textwidth}
    \centering
    \begin{tikzpicture}[scale=.2]
    \node [corner,label=right:$t_1$] (s) at (0,6.44) {};
    \node [corner,label=above:$t_2$] (a) at (-7, 0) {};
    \node [vertex] (a1) at (-4, 0) {};
    \node [special] (v) at (0, 0) {};
    \node [vertex] (b1) at (4, 0) {};
    \node [corner,label=above:$t_3$] (b) at (7, 0) {};
    \node [vertex] (c) at (0,3.5) {};
    \draw[] (a) -- (a1) -- (v) -- (b1) -- (b);
    \draw[] (v) -- (c) -- (s);

    \draw[fill edge] (s) to (v) (a) to (v);
    \draw[fill edge] (v) to (b) (a1) to (b1);
    \draw[fill edge] (a1) -- (c) -- (b1);
    \end{tikzpicture}
  \end{subfigure}%
  \quad
  \begin{subfigure}[b]{0.23\textwidth}
    \centering
    \begin{tikzpicture}[scale=.2]
    \node [corner,label=right:$t_1$] (s) at (0,3.2) {};
    \node [corner,label=above:$t_2$] (a) at (-7, 0) {};
    \node [vertex] (a1) at (-4, 0) {};
    \node [special] (v) at (0, 0) {};
    \node [vertex] (b1) at (4, 0) {};
    \node [corner,label=above:$t_3$] (b) at (7, 0) {};
    \node [vertex] (c) at (0,-3.8) {};
    \draw[] (a) -- (a1) -- (v) -- (b1) -- (b) -- (c) -- (a);
    \draw[] (b1) -- (c) -- (a1);
    \draw[] (v) -- (c) -- (s);

    \draw[fill edge] (s) to (c);
    \draw[fill edge] (b) to (v) to (a) (a1) to (b1);
    \end{tikzpicture}
  \end{subfigure}%
  \quad
  \begin{subfigure}[b]{0.23\textwidth}
    \centering
    \begin{tikzpicture}[scale=.2]
    \node [corner,label=right:$t_1$] (s) at (0,6) {};
    \node [corner,label=above:$t_2$] (a) at (-6, 0) {};
    \node [vertex] (a1) at (-2.5, 0) {};
    \node [vertex] (b1) at (2.5, 0) {};
    \node [corner,label=above:$t_3$] (b) at (6, 0) {};
    \node [vertex] (c) at (0,3.5) {};
    \draw[] (a) -- (a1) -- (b1) -- (b);
    \draw[] (c) -- (s);
    \draw[] (a1) -- (c) -- (b1);

    \draw[fill edge] (a) to (b1) (a1) to (b);
    \draw[fill edge] (a1) -- (s) -- (b1);
    \draw[fill edge] (a) -- (c) -- (b);
    \end{tikzpicture}
  \end{subfigure}%
  \quad
  \begin{subfigure}[b]{0.23\textwidth}
    \centering
    \begin{tikzpicture}[scale=.2]
    \node [corner,label=right:$t_1$] (s) at (0,6) {};
    \node [corner,label=above:$t_2$] (a) at (-6, 0) {};
    \node [vertex] (a1) at (-3.5, 0) {};
    \node [vertex] (u) at (0, 0) {};
    \node [vertex] (b1) at (3.5, 0) {};
    \node [corner,label=above:$t_3$] (b) at (6, 0) {};
    \node [vertex] (c) at (0,3.5) {};
    \draw[] (a) -- (a1) -- (b1) -- (b);
    \draw[] (u) -- (c) -- (s);
    \draw[] (a1) -- (c) -- (b1);

    \draw[fill edge] (a1) to (b1) (s) to (u);
    \draw[fill edge] (a1) -- (s) -- (b1);
    \draw[fill edge] (a) -- (c) -- (b);
    \end{tikzpicture}
  \end{subfigure}%

  \begin{subfigure}[b]{0.3\textwidth}
    \centering
    \begin{tikzpicture}[scale=.25]
    \node [corner,label=right:$t_1$] (s) at (0,6) {};
    \node [corner,label=above:$t_2$] (a) at (-6, 0) {};
    \node [vertex] (a1) at (0, 0) {};
    \node [corner,label=above:$t_3$] (b) at (6, 0) {};
    \node [vertex] (c1) at (-3,3) {};
    \node [vertex] (c2) at (3,3) {};
    \draw[] (a) -- (a1) -- (b) -- (c2) -- (s) -- (c1) -- (a);
    \draw[] (c1) -- (c2) -- (a1) -- (c1);

    \draw[fill edge] (a1) -- (s) (b) -- (c1) (a) -- (c2);
    \draw[white, bend right] (-2,0)  to (b);
    \end{tikzpicture}
  \end{subfigure}%
  \quad
  \begin{subfigure}[b]{0.3\textwidth}
    \centering
    \begin{tikzpicture}[scale=.25]
    \node [corner,label=right:$t_1$] (s) at (0,6) {};
    \node [corner,label=above:$t_2$] (a) at (-6, 0) {};
    \node [vertex] (a1) at (-2, 0) {};
    \node [vertex] (b1) at (2, 0) {};
    \node [corner,label=above:$t_3$] (b) at (6, 0) {};
    \node [vertex] (c1) at (-3,3) {};
    \node [vertex] (c2) at (3,3) {};
    \draw[] (a) -- (a1) -- (b) -- (c2) -- (s) -- (c1) -- (a);
    \draw[] (c1) -- (c2) -- (a1) -- (c1);

    \draw[white, bend right] (-2,0)  to (b); 
    \draw[fill edge] (b1) -- (s);
    \draw[fill edge] (s) -- (a1) (b) -- (c1) (a) -- (c2);
    \draw (c1) -- (b1) -- (c2);
  \end{tikzpicture}
  \end{subfigure}
  \quad
  \begin{subfigure}[b]{0.3\textwidth}
    \centering
    \begin{tikzpicture}[scale=.25]
    \node [corner,label=right:$t_1$] (s) at (0,6) {};
    \node [corner,label=above:$t_2$] (a) at (-7, 0) {};
    \node [vertex] (a1) at (-4, 0) {};
    \node [vertex] (u) at (0, 0) {};
    \node [vertex] (b1) at (4, 0) {};
    \node [corner,label=above:$t_3$] (b) at (7, 0) {};
    \node [vertex] (c1) at (-3,3) {};
    \node [vertex] (c2) at (3,3) {};
    \draw[] (a) -- (a1) -- (b) -- (c2) -- (s) -- (c1) -- (a);
    \draw[] (c1) -- (c2) -- (a1) -- (c1) -- (u) -- (c2);

    \draw[white, bend right] (-2,0)  to (b); 
    \draw[fill edge] (a1) to (b1) -- (s) -- (u);
    \draw[fill edge] (s) -- (a1) (b) -- (c1) (a) -- (c2);
    \draw (c1) -- (b1) -- (c2);
  \end{tikzpicture}
  \end{subfigure}
  \caption{A minimum interval supergraph must contain a dashed edges.}
  \label{fig:fill-edges}
\end{figure*}
\end{lemma}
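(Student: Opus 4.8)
The plan is to mimic the proof of Lemma~\ref{lem:fill-long-aw}, running through the finitely many types of small AW---long claws, whipping tops, $d$-nets ($d\in\{2,3\}$), and $d$-tents ($d\in\{1,2,3\}$)---one at a time. Fix a small AW $W$ with terminals $\{t_1,t_2,t_3\}$, and record the generic ``break the AT'' list $L$: the missed edges of $W$ joining one terminal to the defining path between the other two. As recalled just before the lemma, any interval supergraph $\widehat G\supseteq G$ contains at least one edge of $L$, for otherwise the three defining paths and the nonadjacency of the terminals all survive in $\widehat G$ and $\{t_1,t_2,t_3\}$ is still an AT. This list has $12$ edges for a long claw, and at most $12$ for every type (precisely $d+7$ or $d+5$ for a $d$-net or $d$-tent). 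Let $D$ be the set of at most $6$ dashed edges of Figure~\ref{fig:fill-edges}; it then suffices to show that $\widehat G$ must contain an edge of $D$.

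I would split $L$ as follows. A few edges of $L$---joining the shallow terminal to a base vertex, or (for $d$-nets and $d$-tents) a base terminal to a center, or (for long claws) a terminal to the spider center---close only a triangle when inserted into $W$; these are exactly the ones to be placed directly into $D$. Every other $e\in L$, inserted into $W$, closes a short induced cycle $H_e$: either a $4$-hole, or a $5$-hole obtained by routing through a center (as in Lemma~\ref{lem:fill-long-aw}, where inserting $lr$ produces the $5$-hole $(c\,h\,l\,r\,t)$), and $D$ is designed so that every minimal fill of $H_e$ contains an edge of $D$. Concretely: if $H_e$ is a $4$-hole, both of its two chords lie in $D$; if $H_e$ is a $5$-hole, $D$ contains three of its five chords and the remaining two are vertex-disjoint, so---since every minimal fill of a $5$-cycle is a ``fan'' using two chords with a common endpoint---each of the five minimal fills meets $D$. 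As $H_e\subseteq W+e\subseteq\widehat G$, the interval graph $\widehat G[V(H_e)]$ is an interval supergraph of the hole $H_e$, so by Lemma~\ref{lem:holes} and the remark following it, it contains one of those minimal fills, hence an edge of $D$. Together with the preceding sentence, this shows $\widehat G$ contains an edge of $D$.

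The one real obstacle is the bookkeeping. For each of the four families one must list $L$, isolate its few ``triangle-only'' members, and for every remaining $e$ exhibit a suitable $H_e$, verifying both that the chosen cycle is genuinely induced in $W+e$ (no chord of $W$ present, so $|H_e|\in\{4,5\}$) and that the chord pattern of Figure~\ref{fig:fill-edges} covers it---two chords for a $4$-hole, three leaving the other two vertex-disjoint for a $5$-hole. As in Lemma~\ref{lem:fill-long-aw}, care is needed with the unified notation for $\dag$- and $\ddag$-AWs: the cycles $H_e$ and their chords must be described without ever invoking the adjacencies $c_1 l$, $c_2 r$, or $c_1 c_2$ (just as $(s\,c_2\,h\,l)$, not $(s\,c_1\,h\,l)$, is used there), so that a single argument handles both the $d$-nets and the $d$-tents. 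None of this is conceptually hard, which is why the full verification is deferred to the appendix.
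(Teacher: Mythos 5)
Your overall plan coincides with the paper's appendix proof: a case analysis over the seven small AWs in which every non-dashed AT-breaking candidate edge is reduced, via a short hole it creates, to the dashed set. Your uniform criteria do hold for the long claw, the whipping top, the $2$-net, the $1$-tent, and (word for word as in Lemma~\ref{lem:fill-long-aw}) the $3$-net and $3$-tent; your observation that a $5$-hole is covered once three chords are dashed and the remaining two are vertex-disjoint is exactly what makes the $lr$-insertion cases work. The gap is the $2$-tent. In the labelling of Figure~\ref{fig:fill-edges-labeled}, its depicted dashed set is only $\{t_1v_0,\,t_1v_1,\,t_2v_2,\,t_3v_3\}$, and inserting the base-terminal edge $t_2t_3$ creates the $4$-hole $(t_2\,v_0\,v_1\,t_3)$ whose two chords, $t_2v_1$ and $t_3v_0$, are \emph{not} dashed; there is no other chordless $4$- or $5$-cycle through $t_2t_3$ to use instead. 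Moreover no hole-based argument can dispose of these chords at all: the $2$-tent plus $t_2v_1$ is chordal (it contains a $1$-tent, which is chordal), so Lemma~\ref{lem:holes} gives you nothing once $\widehat G$ contains $t_2v_1$. The paper closes this case with a second-level AT argument: if $\widehat G$ contains $t_2v_1$, then $\{t_1,t_2,t_3,v_1,v_2,v_3\}$ induces a $1$-tent in $G+t_2v_1$, whose own forced edges $\{t_1v_1,t_2v_2,t_3v_3\}$ are all dashed for the $2$-tent (symmetrically for $t_3v_0$). So your declared toolkit---``each non-dashed candidate closes a hole all of whose minimal fills meet $D$''---must be extended by this nested small-AW step; alternatively one could enlarge the $2$-tent's set by $t_2v_1$ and $t_3v_0$ (still six edges), but then you are no longer proving the statement about the set depicted in Figure~\ref{fig:fill-edges}.

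A smaller inaccuracy: you describe $D$ as the ``triangle-only'' members of the AT-breaking list $L$, but for the long claw and the whipping top the dashed set contains edges (such as $v_1v_2$ or $v_2v_3$) that are not in $L$ at all---the paper explicitly allows $E'_W\not\subseteq E_W$. This does not damage the argument, since those edges only ever appear as chords of the holes $H_e$, but $D$ should be read off the figure rather than carved out of $L$.
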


One should compare the edges used for $3$-nets and $3$-tents with
those for long AWs in Lemma~\ref{lem:fill-long-aw}.  It is worth
noting that the threshold for the base length of a long AW, $d\ge 4$,
is carefully chosen.  A partial motivation is the following lemma.
See also the proof of Lemma~\ref{lem:shallow-1} (Section
~\ref{sec:shallow-1} ) for more motivations and
Section~\ref{sec:remark} for a discussion.

\begin{lemma}\label{lem:common-neighbor-of-base}
  Let $X = (s: c_1,c_2: l,h;t,r)$ be the frame of a long AW $W$ in
  graph $G$, and $x$ be a neighbor of both $h$ and $t$.  If $x\not\sim
  s$, then there is a small AW in $G[X\cup \{x\}]$, and any interval
  supergraph of $G$ contains at least one of $\{l c_2, c_1 r, s h, s
  t, h t\}$ or $s x$.
\end{lemma}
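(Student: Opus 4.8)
I would prove the two assertions separately, and I would do the one about interval supergraphs first, because it needs nothing beyond Corollary~\ref{lem:fill-long-aw-2} and no case analysis at all. Let $\widehat G$ be an arbitrary interval supergraph of $G$. If $\widehat G$ contains one of the five edges $l c_2,\ c_1 r,\ s h,\ s t,\ h t$, we are already done. Otherwise Corollary~\ref{lem:fill-long-aw-2} applies verbatim: in every interval representation of $\widehat G$, the interval $I_s$ lies between $I_h$ and $I_t$; since moreover $sh,st\notin E(\widehat G)$ this ``between'' is strict, so $I_s$ is disjoint from both of them and sits in the open gap between them, and since $ht\notin E(\widehat G)$ the intervals $I_h$ and $I_t$ are themselves disjoint, say with $I_h$ entirely to the left of $I_t$. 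Now $x$ is adjacent in $G$, hence in $\widehat G$, to both $h$ and $t$, so $I_x$ meets $I_h$ and $I_t$; being a single interval, $I_x$ must contain the whole closed gap between $I_h$ and $I_t$, and in particular it contains $I_s$. Therefore $I_x\cap I_s\neq\emptyset$, i.e.\ $sx\in E(\widehat G)$. This settles the second assertion, and notably it uses nothing about how $x$ attaches to $c_1,c_2,l,r$.

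For the first assertion I would run a short case distinction on $N(x)\cap\{c_1,c_2,l,r\}$, exploiting that $x$ short-circuits the long base $b_1=h,b_2,\dots,b_d=t$ of $W$ (here $d\ge 4$, so $h\not\sim t$) through the length-two path $h$--$x$--$t$. In the principal case, where $x$ is adjacent to every center but to neither $l$ nor $r$, the set $\{s,c_1,c_2,l,h,x,t,r\}$ induces precisely a $\dag$-AW (resp.\ $\ddag$-AW) with $d=3$, that is, a $3$-net (resp.\ $3$-tent) with $x$ in the role of $b_2$: every edge demanded by the template of Figure~\ref{fig:at} is present by the case hypotheses together with the fact that centers are adjacent to all base vertices, and every forbidden edge is absent by minimality of $W$, by $h\not\sim t$, and by $x\not\sim l,r$. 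If $x$ is adjacent to a center but also to $l$ or to $r$, a similar but shorter check produces a smaller net or tent (for instance in the $\dag$ case with $x\sim l$, the triangle $ctx$ with pendants $s,r,l$ is a $2$-net; if $x$ is adjacent to $c_1,c_2,l,r$ one recovers a $1$-tent on $\{s,c_1,c_2,l,x,r\}$). The only remaining possibility is that $x$ misses some center $c'$, and then $\{h,x,t,c'\}$ induces a $4$-hole, since $h\not\sim t$, $x\not\sim c'$, and $c'$ is adjacent to both $h$ and $t$. Hence $G[X\cup\{x\}]$ is never an interval graph; it contains a small AW except in this last degenerate situation, where the obstruction is a $4$-hole.

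I expect the only genuine labour to be the bookkeeping in this second part: in each branch one must verify that the advertised $6$-, $7$- or $8$-vertex subgraph is \emph{induced} (no stray chord or shortcut creeps in) and, where relevant, \emph{minimal}. Nothing conceptual is at stake there, precisely because the edge-list assertion has already been obtained independently via Corollary~\ref{lem:fill-long-aw-2}; the case analysis only has to certify that the configuration is genuinely obstructive and to identify which small template is responsible. (One could also re-derive the edge list branch by branch --- from Lemma~\ref{lem:fill-long-aw} read off with $d=3$ for the net/tent cases, whose list is exactly $\{lc_2,c_1r,ht,sh,st\}\cup\{sx\}$, and from Lemma~\ref{lem:holes} together with the fact that $W$ itself must still be fixed in the $4$-hole case --- but the representation argument above is cleaner and avoids the split entirely.)
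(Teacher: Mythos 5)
Your treatment of the second assertion is correct and takes a genuinely different route from the paper's. The paper obtains that assertion as a consequence of the first one: it observes $x\notin W$, replaces the base of $W$ by the path $(l\,h\,x\,t\,r)$ to get a witness for the AT $\{s,l,r\}$ inside $X\cup\{x\}$, and then reads off the admissible fill edges exactly as in Lemma~\ref{lem:fill-long-aw} (with base $h,x,t$), so that the list $\{lc_2,c_1r,sh,st,ht\}\cup\{sx\}$ ``ensues.'' You instead get it directly from Corollary~\ref{lem:fill-long-aw-2}: if none of the five frame edges is present, $I_s$ lies in the gap between the disjoint intervals $I_h$ and $I_t$, while $I_x$ meets both and therefore covers that gap, forcing $sx\in E(\widehat G)$. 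That argument is airtight, needs no case analysis, and indeed uses nothing about how $x$ attaches to $c_1,c_2,l,r$; it is arguably cleaner than the paper's derivation.

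The first assertion is where your proposal has genuine problems. First, in the $\dag$ case with $x\sim c$ and $x$ adjacent to \emph{both} $l$ and $r$, your recipe fails: the suggested $1$-tent on $\{s,c_1,c_2,l,x,r\}$ degenerates when $c_1=c_2=c$ (that set then induces a tree), and in fact no net and no tent is induced anywhere in $G[X\cup\{x\}]$ in this branch; the small AW there is the whipping top on all of $\{s,c,l,h,x,t,r\}$, which is precisely the case $x\sim l,r$ of the paper's Figure~\ref{fig:common-neighbor}. So the claim ``a similar but shorter check produces a smaller net or tent'' is false for that branch, and this is exactly the bookkeeping you waved off. (Your $2$-net for $x\sim l$ also tacitly needs $x\not\sim r$, which you never state.) Second, in the branch where $x$ misses a center you only exhibit a $4$-hole, which is not a small AW, so the assertion as stated is simply not proved there; you acknowledge this but then treat ``$G[X\cup\{x\}]$ is never interval'' as if it were the claim. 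In the paper's applications this branch cannot occur, because the lemma is invoked in reduced (hence chordal) graphs, where the $4$-hole $(h\,x\,t\,c')$ forces $x$ to be adjacent to the center(s); a standalone proof should either make that reduction explicit or note the needed hypothesis, rather than leave the first assertion open in the very case you call degenerate.
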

\begin{proof}
  From the adjacencies between $x$ and $\{h,t,s\}$ it can be inferred
  $x\not\in W$.  There is an $l$-$r$ path $(l h x t r)$ that avoids
  $N[s]$, which replaces $(l B r)$ in $W$ to define a small AW.  (See
  Figure~\ref{fig:common-neighbor}, where the case $N(x)\cap \{l,r\} =
  l$ is symmetric to the case $N(x)\cap \{l,r\} = r$.)  The second
  assertion ensues.
\end{proof}

\section{Modules}\label{sec:module}
A subset $M$ of vertices forms a \emph{module} of $G$ if all vertices
in $M$ have the same neighbors outside of $M$.  In other words, for
any pair of vertices $u,v \in M$, a vertex $x \not\in M$ is either
adjacent to both or neither of $u$ and $v$.
A brief inspection reveals that none of graphs in Figure~\ref{fig:at}
has a module $M$ satisfying $1<|M| < |V(G)|$, and this is true also
for holes of length greater than 4.  Recall that a minimal forbidden
induced subgraph contains at least $4$ vertices.
\begin{lemma}\label{lem:module-exchangeable}
  Let $M$ be a module of graph $G$.  If a subset $X$ of vertices
  induces a minimal forbidden subgraph, then either $X \subseteq M$,
  or $|M \cap X| \leq 2$, where equality only holds if $X$ induces a
  $4$-hole.
\end{lemma}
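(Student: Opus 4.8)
The plan is to argue by contradiction, assuming a subset $X$ inducing a minimal forbidden subgraph (a hole of length $\ge 4$ or one of the AWs of Figure~\ref{fig:at}) such that $X\not\subseteq M$ and $|M\cap X|\ge 3$, or $|M\cap X|=2$ with $X$ not a $4$-hole. First I would record the two basic structural consequences of $M$ being a module. Since $X\not\subseteq M$, there is some $y\in X\setminus M$; as $M$ is a module, $y$ is adjacent either to all of $M\cap X$ or to none of it. Combined with the fact that $X\setminus M\subseteq N(M)\cup\big(V(G)\setminus N[M]\big)$, the set $X\setminus M$ splits into $A:=(X\setminus M)\cap N(M)$, which is completely connected to every vertex of $M\cap X$, and $D:=(X\setminus M)\setminus N[M]$, which has no edge to $M\cap X$. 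The key point is then that $M\cap X$, being a subset of a module, induces a subgraph on which $G[X]$ "looks modular": every vertex of $X\setminus M$ sees either all or none of $M\cap X$.

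Next I would use the hypothesis together with the observation already stated in the excerpt — that none of the graphs in Figure~\ref{fig:at}, nor any hole of length $>4$, has a nontrivial module — to derive the contradiction. Set $W:=M\cap X$ with $|W|\ge 3$ (the case $|W|=2$, $X$ not a $4$-hole, being handled the same way at the end). If $W=X$ we are in the excluded case $X\subseteq M$, so $X\setminus M\neq\emptyset$, i.e.\ $|W|<|X|$. Now I claim $W$ is a module of $G[X]$ with $1<|W|<|V(G[X])|$: indeed for $u,v\in W$ and any $x\in X\setminus W=X\setminus M$, the module property of $M$ forces $x\sim u\iff x\sim v$, and within $W$ there is nothing to check. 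This directly contradicts the quoted fact that the minimal forbidden induced subgraphs admit no such module. Hence $|W|\le 2$. For the borderline $|W|=2$: if $X$ is a hole of length $\ge 5$, two vertices of $X$ form a module of $G[X]$ only if they are the two nonneighbors of a degree-$2$ configuration — but in a hole of length $\ge 5$ any two vertices have distinct neighborhoods within the hole, so again $W$ would be a nontrivial module of $G[X]$, impossible; if $X$ is an AW, one checks from Figure~\ref{fig:at} that no two vertices of an AW have the same neighborhood within it (e.g.\ this follows since the terminals are the only simplicial vertices by Lemma~\ref{lem:only-terminals-are-simplicial}, the three terminals are pairwise nonadjacent with pairwise different neighborhoods, and every non-terminal is non-simplicial and distinguished by an adjacency along the spine). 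The only remaining possibility with $|W|=2$ is therefore that $W$ is a nontrivial module of $G[X]$ precisely when $X$ induces a $4$-hole, since in $C_4$ the two antipodal vertices do form a module. This matches the statement exactly.

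I expect the main obstacle to be the verification, for the borderline case, that no two vertices of an AW (or of a long hole) induce a nontrivial module of the forbidden subgraph — that is, the case analysis establishing "no two distinct vertices of an AW share the same neighborhood inside the AW." The cleanest way I foresee is to avoid a literal enumeration over the four AW types: use Lemma~\ref{lem:only-terminals-are-simplicial} to dispose of terminal-terminal and terminal-nonterminal pairs (a terminal is simplicial, a nonterminal is not, so they cannot have equal neighborhoods; two terminals are nonadjacent and lie on different defining paths, hence have different neighborhoods), and for two nonterminals observe that each nonterminal lies on the unique longest defining path or is a center, and in all of Figure~\ref{fig:at} consecutive spine vertices and centers are pairwise distinguished by at least one private neighbor along the path — a short uniform argument rather than a per-figure check. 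If that uniform argument proves fiddly, the fallback is simply to inspect the finitely many small cases in Figure~\ref{fig:at} and note holes separately, which is routine.
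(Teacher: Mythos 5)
Your argument is correct and is essentially the paper's own (implicit) proof: the paper states the lemma right after observing that no graph of Figure~\ref{fig:at} and no hole of length greater than $4$ has a module $M'$ with $1<|M'|<|V|$, the intended reasoning being exactly yours, namely that $M\cap X$ is a module of $G[X]$, so a nontrivial intersection would contradict that inspection except for the antipodal pair in a $4$-hole. Your verification of the borderline case is the same ``brief inspection'' the paper relies on (note only that the AW families are parameterized by $d$, so the check is over the four families rather than finitely many graphs), so there is no substantive difference in approach.
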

Naturally, one will surmise that a minimum interval supergraph
$\widehat G$ of $G$ will preserve modules of $G$.  Nonetheless, this
does not hold true in general.  Here we manage to show a slightly
weaker version; by \emph{connected module} we mean a module that
induces a connected subgraph.  The way we prove this crucial theorem
is by working on an interval representation $\cal I$ of the interval
supergraph $\widehat G$: if $M$ is not a module of $\widehat G$, then
we modify $\{I_v|v\in M\}$ to make a new interval representation $\cal
I'$ that gives a strictly smaller interval supergraph $\widehat G'$ of
$G$.  In one case we will use the \emph{project} operation defined as
follows.  Given a set $X$ of sub-intervals of $[p, q]$ and another
interval $[p', q']$, we \emph{project} $x\in X$ to $x'$ by applying
the mapping $$\imath\to \frac{q'-p'}{q - p}(\imath-p) + p'$$ to both
endpoints of $x$.  It is easy to verify that $x'$ is a sub-interval of
$[p', q']$, and more importantly, these two sets of intervals
represent the same interval graphs.

\begin{theorem}\label{thm:preserving-modules}
  A connected module $M$ of graph $G$ remains a module in any minimum
  interval supergraph of $G$.
\end{theorem}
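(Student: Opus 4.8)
The plan is to argue by contradiction. Suppose $M$ is a connected module of $G$ but is \emph{not} a module of some minimum interval supergraph $\widehat G$, and fix an interval representation $\cal I$ of $\widehat G$. Write $A:=N_G(M)$ for the common outside neighbourhood of $M$; since $\widehat G\supseteq G$, every vertex of $A$ is completely connected to $M$ in $\widehat G$, so any vertex witnessing the failure of moduleness lies outside $A\cup M$ and is joined to some but not all of $M$ by inserted edges. Because $M$ induces a connected subgraph, $\{I_m:m\in M\}$ has union a single interval $J=[p,q]$. I would produce a representation $\cal I'$ that coincides with $\cal I$ outside $M$ and merely redraws the intervals of $M$, so that the resulting interval graph $\widehat G'$ still satisfies $G\subseteq\widehat G'$ but $\|\widehat G'\|<\|\widehat G\|$, contradicting minimality. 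Since only the $M$-intervals move, it is enough to keep $\widehat G'[M]\supseteq G[M]$, keep $A$ completely connected to $M$, and make the number of edges between $M$ and $V\setminus M$ strictly decrease. I would split into two cases according to whether $\bigcap_{a\in A}I_a$ is empty; note first that when it is nonempty every $I_m$ meets it, since $I_m\cap I_a\ne\emptyset$ for all $a$ yields $\mathit{left}(m)\le\min_{a}\mathit{right}(a)$ and $\mathit{right}(m)\ge\max_{a}\mathit{left}(a)$.

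\emph{First case: $\bigcap_{a\in A}I_a=[\alpha,\beta]\ne\emptyset$} (with $A=\emptyset$ treated trivially). I would pick a point $x$ in the interior of $I_{m^{\star}}\cap[\alpha,\beta]$ for a carefully chosen $m^{\star}\in M$, then take $\delta>0$ so small that an interval $I_v$ meets $[x-\delta,x+\delta]$ precisely when $v\in K_x$, and \emph{project} the sub-intervals $\{I_m:m\in M\}$ of $[p,q]$ into $[x-\delta,x+\delta]$. By the stated property of the project operation the projected intervals still represent $\widehat G[M]$, so $\widehat G'[M]=\widehat G[M]\supseteq G[M]$; since $x\in I_a$ and $[x-\delta,x+\delta]\subseteq I_a$ for every $a\in A$, the set $A$ stays completely connected to $M$; and since every $I'_m$ lies inside the tiny window, each $v\notin M$ becomes adjacent to all of $M$ (if $v\in K_x$) or to none of $M$ (otherwise), so $M$ is now a module of $\widehat G'$ with $N_{\widehat G'}(M)=K_x\setminus M$. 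Thus the number of $M$-to-outside edges becomes $|M|\cdot|K_x\setminus M|$, and $K_x\setminus M\subseteq N_{\widehat G}(m^{\star})\setminus M$ because $x\in I_{m^{\star}}$. To beat the old total $\sum_{m\in M}|N_{\widehat G}(m)\setminus M|$ I would choose $m^{\star}$ thus: if the numbers $|N_{\widehat G}(m)\setminus M|$ are not all equal, let $m^{\star}$ realise the minimum; if they are all equal to some $\nu$ while the sets differ, use the connectedness of $\widehat G[M]$ to walk along a path between two module vertices with distinct outside-neighbourhoods and extract an \emph{adjacent} pair $m\sim m'$ with $N_{\widehat G}(m)\setminus M\ne N_{\widehat G}(m')\setminus M$; fixing $y\in(N_{\widehat G}(m)\setminus M)\setminus N_{\widehat G}(m')$ and taking $x\in I_m\cap I_{m'}\cap[\alpha,\beta]$, one gets $K_x\setminus M\subseteq(N_{\widehat G}(m)\cap N_{\widehat G}(m'))\setminus M$, which omits $y$ and hence has size at most $\nu-1$, so the count strictly drops.

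\emph{Second case: $\bigcap_{a\in A}I_a=\emptyset$}, i.e.\ $\max_{a}\mathit{left}(a)>\min_{a}\mathit{right}(a)$; writing $[\lambda,\rho]$ for this genuine nonempty interval, the fact that each $I_m$ meets every $I_a$ forces $I_m\supseteq[\lambda,\rho]$, so $M$ is a clique in $\widehat G$. I would redraw every $I_m$ as an infinitesimal enlargement of $[\lambda,\rho]$, with all new endpoints distinct and avoiding every other interval endpoint; then $M$ stays a clique, every $a\in A$ (whose interval meets $[\lambda,\rho]$) stays adjacent to all of $M$, and, the enlargements being small enough, all $I'_m$ acquire the same outside-neighbourhood $\{y\notin M:I_y\cap[\lambda,\rho]\ne\emptyset\}$, which is contained in $N_{\widehat G}(m)\setminus M$ for every $m$ because $I_m\supseteq[\lambda,\rho]$. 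Hence $M$ is a module of $\widehat G'$ and the $M$-to-outside edge count does not increase; equality would force $N_{\widehat G}(m)\setminus M$ to be the same set for all $m$, i.e.\ $M$ a module of $\widehat G$, contrary to assumption, so it strictly decreases. In either case $\widehat G'$ is an interval supergraph of $G$ strictly smaller than $\widehat G$, the desired contradiction.

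I expect the only genuinely delicate point to be the strict decrease in the first case when all module vertices have outside-neighbourhoods of the same size: this is exactly where the connectedness of $M$ is indispensable, through the path in $\widehat G[M]$ that furnishes an adjacent pair of module vertices with different outside-neighbourhoods. The routine general-position bookkeeping that keeps all endpoints distinct after the projections and enlargements I would settle at the very end by one infinitesimal global perturbation of the representation.
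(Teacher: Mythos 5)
Your proof is correct, and at its core it is the same kind of argument as the paper's: fix an interval representation of a hypothetical minimum interval supergraph $\widehat G$, redraw only the intervals of $M$ (projecting them into a tiny window, or normalizing them onto a common core), check $G\subseteq\widehat G'$, and contradict minimality by a strict drop in the number of $M$-to-outside edges. The organization differs, though, in two respects worth noting. First, your case split is governed by whether the intervals of $N_G(M)$ have a common point (Helly), whereas the paper splits on whether $M$ is a clique in $\widehat G$ (its $p>q$ versus $p<q$); the two dichotomies are not the same, e.g.\ your first case can contain non-clique modules. Second, the strictness argument in the projection case is genuinely different: the paper picks the point $\ell\in[p,q]$ minimizing $|K_\ell\setminus M|$ and then does an endpoint case analysis on the witness $z$ (using connectedness of $\widehat G[M]$ to find a module vertex covering $\mathtt{right}(z)$, plus the boundary cases $\mathtt{right}(z)<p$, $\mathtt{left}(z)>q$), while you choose the window around a point inside a single well-chosen $M$-interval (the one of minimum outside-degree) or, in the all-degrees-equal tie case, inside the intersection of an adjacent pair $m\sim m'$ with differing outside-neighbourhoods, obtained from a path in $\widehat G[M]$; the Helly property then gives the point and the count drops by at least one per module vertex. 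Your tie-breaking is arguably cleaner and isolates exactly where connectedness is indispensable (consistent with the disconnected counterexample in Figure~\ref{fig:broken-modules}); the paper's choice of $\ell$ minimizing $|K_\ell\setminus M|$ has the side benefit that the very same selection rule is reused verbatim in Procedure~3 of the algorithm. The general-position issues you defer (the chosen point avoiding interval endpoints, distinct endpoints after projection/enlargement) are indeed routine under the paper's standing assumption that no two intervals share an endpoint, and your clique case needs no connectedness, just as the paper's.
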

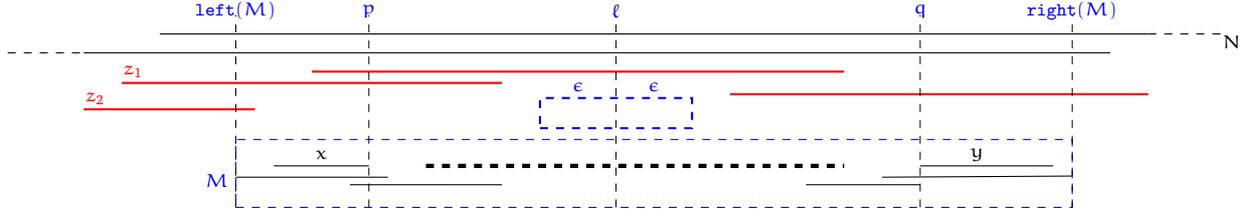
\begin{figure*}[h!]
  \centering
\begin{tikzpicture}[scale=.5]
\scriptsize

\draw  (6,1.2) -- (10,1.2);  
\draw  (7,1.5) -- (9.5,1.5);  \node at (8.25,1.8) {$x$};
\draw  (9,1) -- (13,1);  
\draw  (21,1) -- (24,1);  
\draw  (23,1.2) -- (28,1.23);
\draw (24,1.5) -- (27.5,1.5);  \node at (25.5,1.8) {$y$};
\draw[ultra thick,dashed] (11,1.5) -- (22,1.5);
\draw[blue,dashed] (6,0.4) -- (28,0.4) -- (28,2.2) -- (6,2.2) -- (6,0.4);  
\node[blue] at (5.5,1.1) {$M$};

\draw[thick,red]  (2,3) -- (6.5,3) (3,3.7) -- (13,3.7) (8,4) -- (22,4) (19,3.4) -- (30,3.4);
\node[red] at (2.3,3.3) {$z_2$}; \node[red] at (3.3,4) {$z_1$};  

\draw[dashed] (0,4.5) -- (2,4.5);  \draw  (2,4.5) -- (29,4.5);  
\draw  (4,5) -- (30,5);  \draw[dashed] (30,5) -- (32,5);  
\node[thick] at (32.2,4.8) {$N$};

\draw[ultra thin,dashed] (9.5,0.4) -- (9.5,5.5);\node[blue] at (9.5,5.6) {$p$};
\draw[ultra thin,dashed] (24,0.4) -- (24,5.5);\node[blue] at (24,5.6) {$q$};
\draw[ultra thin,dashed] (6,0.4) -- (6,5.5);\node[blue] at (6,5.6) {$\texttt{left}(M)$};
\draw[ultra thin,dashed] (28,0.4) -- (28,5.5);\node[blue] at (28,5.6) {$\texttt{right}(M)$};
\draw[ultra thin,dashed] (16,0.4) -- (16,5.5);\node[blue] at (16,5.6) {$\ell$};
\draw[blue,thick,dashed] (14,2.5) -- (18,2.5) -- (18,3.3) -- (14,3.3) -- (14,2.5);
\node[blue] at (15,3.6) {$\epsilon$};\node[blue] at (17,3.6) {$\epsilon$};
\end{tikzpicture}
  \caption{Interval rep. of module $M$ and $N(M)$.\\
  A thick (red) interval breaks the modularity of $M$.}
  \label{fig:interval-representation}
\end{figure*}
\begin{proof}
  Without loss of generality, we may assume $|M| > 1$ and
  $N_G(M)\ne\emptyset$.  Let ${\widehat G} = (V, \widehat E)$ be a
  minimum interval supergraph of $G$ and ${\cal I} = \{I_v|v\in V\}$
  be an interval representation of ${\widehat G}$.  We define $p =
  \min_{v\in M}\mathtt{right}(v)$ and $q = \max_{v\in
    M}\mathtt{left}(v)$; in particular $\mathtt{right}(x) = p$ and
  $\mathtt{left}(y) = q$.  Denote by $N = \bigcap_{v\in M}N_{\widehat
    G}(v)$ the set of common neighbors of $M$ in ${\widehat G}$; by
  definition, $N_G(M) \subseteq N \subseteq N_{\widehat G}(M)$.  The
  theorem can be formulated as $N_{\widehat G}(M) = N$.  Suppose to
  its contrary, there exists a vertex $z\in N_{\widehat G}(M)\setminus
  N$, then we modify $\cal I$ into another interval set ${\cal I}' =
  \{I'_v|v\in V\}$.  We argue that the interval graph $\widehat G'$
  extracted from $\cal I'$ is a supergraph of $G$ and has strictly
  smaller size than $\widehat G$.
  This contradiction to the fact that $\widehat G$ is a minimum
  interval supergraph of $G$ will prove this theorem.
  In the sequel of this proof, the positive value $\epsilon$ is chosen
  to be small enough such that for the point $\ell$ under concern,
  neither $[\ell - \epsilon, \ell)$ nor $(\ell, \ell+\epsilon]$
  contains any endpoint of $\cal I$.  By assumption, $p\ne q$.

  {\em Case 1. $p > q$.}  Then $M$ induces a clique in ${\widehat G}$.
  We have $[q,p]\subset I_v$ for every $v\in M$, and $I_u\cap
  [q,p]\ne\emptyset$ for every $u\in N$.  We construct ${\cal I'}$ as
  follows.  For each $v\in M$, we set the endpoints of $I'_v$ to a
  distinct value in intervals $(q-\epsilon, q]$ and, respectively,
  $[p, p+\epsilon)$; in particular, we keep the right endpoint of
  $I'_x$ and the left endpoint of $I'_y$ to be $p$ and $q$,
  respectively.  For each $u\in V\setminus M$, we set $I'_u = I_u$.
  In the graph $\widehat G'$ represented by $\cal I'$, the subgraph
  induced by $M$ is a clique; the subgraph induced by $V\setminus M$
  is the same as $\widehat G - M$; and $M$ is completely connected to
  $N_G(M)\subseteq N$.  This verifies $G\subseteq \widehat G'$.
  On the other hand, for any $v\in M$, from $I'_v\subseteq I_v$ it can
  be inferred $N_{\widehat G'}(v) \subseteq N_{\widehat G}(v)$; it
  follows that $\widehat G'\subseteq \widehat G$.  By assumption,
  $I'_z$($ = I_z$) is either left to $q-\epsilon$ or right to
  $p+\epsilon$; hence $z\not\sim M$ in $\widehat G'$ and $\widehat
  G'\ne \widehat G$.  Putting them together we prove the case $p>q$.

  {\em Case 2. $p < q$.}  Then $x\not\sim y$.  We have $I_v\cap
  [p,q]\ne\emptyset$ for every $v\in M$, and $[p,q]\subset I_u$ for
  every $u\in N$.  (See Figure~\ref{fig:interval-representation}.)  We
  construct ${\cal I'}$ as follows.  Let $\ell$ be a point in $[p, q]$
  such that $K_{\ell}\setminus M$ has the minimum cardinality; without
  loss of generality, we may assume $\ell$ is different from any
  endpoint of intervals in $\cal I$.  For each $v\in M$, we set $I'_v$
  by projecting $I_v$ from $[p,q]$ to $[\ell-\epsilon,
  \ell+\epsilon]$.  For each $u\in V\setminus M$, we set $I'_u = I_u$.
  Let $\widehat G'$ be represented by $\cal I'$ and $N' = N_{\widehat
    G'}(M)$.  Observe that no interval in $\cal I$ has an endpoint in
  $[\ell-\epsilon, \ell+\epsilon]$; for each $u\in V\setminus M$, the
  interval $I'_u$ ($= I_u$) contains $\ell$ if and only if
  $[\ell-\epsilon, \ell+\epsilon] \subset I'_u$.  In other words, $N'
  = K_{\ell}\setminus M$.  For each $u\in N_G(M)$, the interval
  $I'_u$($=I_u$) contains $[p,q]$ which contains $[\ell-\epsilon,
  \ell+\epsilon]$ in turn; hence $N_G(M) \subseteq N'$.  The subgraphs
  induced by $M$ and $V\setminus M$ are the same as $\widehat G[M]$
  and $\widehat G - M$ respectively.  It follows that $G\subseteq
  \widehat G'$.

  It remains to show $||\widehat G'||< ||\widehat G||$, which is
  equivalent to $|E(\widehat G)'\cap (M\times V\setminus M)|<
  |E(\widehat G)\cap (M\times V\setminus M)|$, where $|E(\widehat
  G)'\cap (M\times V\setminus M)| = |M| |N'|$.  By the selection of
  $\ell$, for each point $\imath\in [p,q]$, we have
  $|K_{\imath}\setminus M| \ge |N'|$.  In other words, each vertex of
  $M$ has at least $|N'|$ neighbors in ${\widehat G}$, and thus
  $|E(\widehat G)\cap (M\times V\setminus M)| \ge |M| |N'|$, where
  equality only holds if $|N_{\widehat G}(v)\setminus M| = |N'|$ for
  every vertex $v\in M$.  We argue that this inequality has to be
  strict, which implies $||\widehat G'||< ||\widehat G||$.  As
  $z\not\in (N\cup M)$, it follows that $[p,q]\not\subset I'_z$ ($=
  I_z$) (see the thick/red edges in
  Figure~\ref{fig:interval-representation}).  If $p <
  \mathtt{right}(z) < q$ (see $z_1$ in
  Figure~\ref{fig:interval-representation}), then
  $|K_{\mathtt{right}(z)}\setminus M|> |K_{\mathtt{right}(z) +
    \epsilon}\setminus M|\ge |K_{\ell}\setminus M| = |N'|$.  As
  $\widehat G[M]$ is connected, there exists a vertex $v\in M$ such
  that ${\mathtt{right}(z)}\in I_v$, i.e., $|N_{\widehat
    G}(v)\setminus M| > |N'|$.  A symmetric argument applies if $p <
  \mathtt{left}(z) < q$.  Hence we may assume there exists no vertex
  $u\in V\setminus M$ such that $I'_u$ ($=I_u$) has an endpoint in
  $[p, q]$.  As a consequence, $K_{p-\epsilon}\setminus M =
  K_{p}\setminus M = N'$.  If $\mathtt{right}(z) < p$ (see $z_2$ in
  Figure~\ref{fig:interval-representation}), then there exists a
  vertex $v\in M$ adjacent to both $N'$ and $z$; e.g., the one with
  $\mathtt{left}(v)=\mathtt{left}(M)$ (not necessarily $x$).  A
  symmetric argument applies if $\mathtt{left}(z) > q$.  This
  completes the proof.
\end{proof} 

{Theorem}~\ref{thm:preserving-modules} will ensure preservation of any
connected module in perpetuity.  With the help of
{Lemma}~\ref{lem:partial-solution}, it can be further
strengthened to:
\begin{corollary}\label{lem:preserving-modules}
  Let $\widehat G$ be a minimum interval supergraph of graph $G$.  A
  connected module $M$ of any graph $\widehat G'$ satisfying $G
  \subseteq \widehat G' \subseteq \widehat G$ is a module of $\widehat
  G$.
\end{corollary}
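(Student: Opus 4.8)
The plan is to reduce the statement directly to Theorem~\ref{thm:preserving-modules} by using Lemma~\ref{lem:partial-solution} to shift the r\^ole of the ``base graph.'' Fix a graph $\widehat G'$ with $G\subseteq \widehat G'\subseteq \widehat G$, and let $M$ be a connected module of $\widehat G'$. The first step is to apply Lemma~\ref{lem:partial-solution}: since $\widehat G$ is a minimum interval supergraph of $G$ and $G\subseteq \widehat G'\subseteq \widehat G$, the same lemma asserts that $\widehat G$ is also a minimum interval supergraph of $\widehat G'$. The second step is to invoke Theorem~\ref{thm:preserving-modules} with $\widehat G'$ in place of the base graph~$G$: a connected module of $\widehat G'$ is preserved in \emph{every} minimum interval supergraph of $\widehat G'$, and in particular in $\widehat G$. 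Chaining these two facts gives that $M$ is a module of $\widehat G$, which is exactly the claim.

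The one point worth checking is that Theorem~\ref{thm:preserving-modules} is stated for an arbitrary graph together with one of its minimum interval supergraphs, with the only hypothesis being that $M$ is a connected module of the base graph; hence substituting $\widehat G'$ for $G$ there is legitimate, the required hypothesis ``$M$ is a connected module of $\widehat G'$'' holding by assumption. No connectivity of $\widehat G'$ itself is needed for this argument (and it holds anyway, as $\widehat G'\supseteq G$ and $G$ is connected).

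I do not expect a genuine obstacle: all the combinatorial content is already carried by Theorem~\ref{thm:preserving-modules}, and the corollary merely records that being a minimum interval supergraph propagates along the chain $G\subseteq \widehat G'\subseteq \widehat G$, which is precisely what Lemma~\ref{lem:partial-solution} supplies. The proof is therefore a two-line deduction.
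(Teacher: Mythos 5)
Your proposal is correct and matches the paper's intended derivation exactly: the paper obtains the corollary by combining Lemma~\ref{lem:partial-solution} (so $\widehat G$ is also a minimum interval supergraph of $\widehat G'$) with Theorem~\ref{thm:preserving-modules} applied with $\widehat G'$ as the base graph. Your additional check that the theorem's hypotheses only concern the base graph and the module is the right sanity check, and nothing more is needed.
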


The following theorem characterizes internal structures of modules,
and can be viewed as a complement to
Theorem~\ref{thm:preserving-modules}, which characterizes the external
behavior of modules in a minimum interval supergraph.
\begin{theorem}\label{thm:separable-modules}
  Let $\widehat G$ be a minimum interval supergraph of $G$ and $M$ be
  a {connected} module of $\widehat G$.  If $\widehat G[M]$ is not a
  clique, then for any minimum interval supergraph $\widehat G_M$ of
  $G[M]$, replacing $E(\widehat G[M])$ by $E(\widehat G_M)$ in
  $\widehat G$ gives a minimum interval supergraph $\widehat G'$ of
  $G$; and in particular, $\widehat G[M]$ is a minimum interval
  supergraph of $G[M]$.
\end{theorem}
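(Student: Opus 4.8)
The plan is to work with an interval representation $\cal I$ of the given minimum interval supergraph $\widehat G$ and exploit the fact that, by Theorem~\ref{thm:preserving-modules}, $M$ remains a module of $\widehat G$, so all intervals of $N := N_{\widehat G}(M)$ behave uniformly with respect to $M$. First I would observe that since $\widehat G[M]$ is not a clique, there are two nonadjacent vertices in $M$, hence (as in Case~2 of the previous proof) there is a nonempty ``core'' interval $[p,q]$ with $p<q$ that is contained in $I_u$ for every $u\in N$ and meets $I_v$ for every $v\in M$. Intuitively the vertices of $N$ ``see'' the whole of $M$ only through this window $[p,q]$, so the part of $\cal I$ restricted to $M$ can be freely rearranged inside a tiny neighbourhood of any point of that window without disturbing adjacencies to $V\setminus M$.

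The core step is a cut-and-paste argument. Take any minimum interval supergraph $\widehat G_M$ of $G[M]$ with interval representation ${\cal I}_M$; all its intervals live in some bounded range, which we rescale (using the \emph{project} operation defined before Theorem~\ref{thm:preserving-modules}) into $[\ell-\epsilon,\ell+\epsilon]$, where $\ell\in[p,q]$ is, as in the previous proof, a point minimizing $|K_\ell\setminus M|$ and $\epsilon$ is small enough that $[\ell-\epsilon,\ell+\epsilon]$ contains no endpoint of $\cal I$ outside $M$. Define $\cal I'$ by keeping $I'_u=I_u$ for $u\notin M$ and setting, for $v\in M$, $I'_v$ to be the projected copy of the ${\cal I}_M$-interval of $v$. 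Exactly as in Case~2 of Theorem~\ref{thm:preserving-modules}, for every $u\in V\setminus M$ the interval $I'_u$ contains $\ell$ iff it contains all of $[\ell-\epsilon,\ell+\epsilon]$, so $N_{\widehat G'}(M)=K_\ell\setminus M$; and since every $u\in N_G(M)$ has $[p,q]\subset I_u$, we get $N_G(M)\subseteq N_{\widehat G'}(M)$. Meanwhile $\widehat G'[M]=\widehat G_M\supseteq G[M]$ and $\widehat G'-M=\widehat G-M\supseteq G-M$, so $\widehat G'$ is an interval supergraph of $G$, and it is the graph obtained from $\widehat G$ by replacing $E(\widehat G[M])$ with $E(\widehat G_M)$ on the $M$-side while the cross edges become $M\times(K_\ell\setminus M)$.

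To finish I would count edges. Write $|N'|:=|K_\ell\setminus M|$. On one hand $\|\widehat G'\| = \|\widehat G_M\| + \|\widehat G - M\| + |M|\cdot|N'|$. On the other hand, by the choice of $\ell$ every point of $[p,q]$ meets at least $|N'|$ non-$M$ intervals, and since $\widehat G[M]$ is connected the union of the $M$-intervals covers $[p,q]$, so every cross-section forces $|E(\widehat G)\cap (M\times(V\setminus M))|\ge |M|\cdot|N'|$; hence $\|\widehat G\| \ge \|\widehat G[M]\| + \|\widehat G - M\| + |M|\cdot|N'| \ge \|\widehat G_M\| + \|\widehat G - M\| + |M|\cdot|N'| = \|\widehat G'\|$, using that $\widehat G_M$ is a \emph{minimum} interval supergraph of $G[M]$ and $\widehat G[M]$ is \emph{some} interval supergraph of $G[M]$. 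Since $\widehat G$ is minimum we must have equality throughout, which forces $\|\widehat G'\|=\|\widehat G\|$ (so $\widehat G'$ is also minimum) and, crucially, $\|\widehat G[M]\|=\|\widehat G_M\|$, i.e.\ $\widehat G[M]$ is itself a minimum interval supergraph of $G[M]$. The main obstacle, and the one place I would be careful, is justifying the inequality $|E(\widehat G)\cap (M\times(V\setminus M))|\ge |M|\cdot|N'|$ in full rigor — it needs the connectedness of $\widehat G[M]$ together with Lemma~\ref{thm:interval-representation} (or a direct interval argument) to guarantee every cross-section of $[p,q]$ is stabbed by some $M$-interval, so that no vertex of $M$ has fewer than $|N'|$ outside neighbours; the rest is bookkeeping that mirrors Case~2 of the preceding proof.
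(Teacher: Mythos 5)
Your proof is correct and follows essentially the same route as the paper's: replace the $M$-part of the representation by a projected copy of a representation of $\widehat G_M$, observe the result is an interval supergraph of $G$ of size at most $\|\widehat G\|$, and let the minimality of $\widehat G$ force equality, which yields both conclusions. The only difference is cosmetic: the paper projects into $[\texttt{left}(M),\texttt{right}(M)]$ and keeps the cross edges untouched, so your minimization over $\ell$ and the lower bound $|M|\cdot|N'|$ are not really needed --- since $M$ is a connected module of $\widehat G$, every $\ell\in[p,q]$ already satisfies $K_\ell\setminus M=N_{\widehat G}(M)$.
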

\begin{proof}
  Let ${\cal I} = \{I_v|v\in V(G)\}$ be an interval representation of
  ${\widehat G}$.  We define $p = \min_{v\in M}\mathtt{right}(v)$ and
  $q = \max_{v\in M}\mathtt{left}(v)$; in particular
  $\mathtt{right}(x) = p$ and $\mathtt{left}(y) = q$.  As $\widehat
  G[M]$ is not a clique, $x\not\sim y$ and $p< q$.  It follows that
  $[p,q]\subset I_u$ for each $u\in N_{\widehat G}(M)$ and $I_u$ is
  disjoint from $[\mathtt{left}(M), \mathtt{right}(M)]$ for $u\not\in
  N_{\widehat G}[M]$.  It is easy to verify that $\widehat G'$
  corresponds to the following interval representation obtained by
  modifying $\cal I$: we build an interval representation for
  $\widehat G_M$ and project it to $[\mathtt{left}(M),
  \mathtt{right}(M)]$ to replace $\{I_v|v\in M\}$.  Observing that
  $\widehat G[M]$ is an interval supergraph of $G[M]$, if follows that
  $|\widehat G_M|\le|\widehat G[M]|$ and $|\widehat G'|\le||\widehat
  G||$.  Since $\widehat G$ is minimum, both inequalities have to be
  equalities; this completes the proof.
\end{proof}

According to {Corollary}~\ref{lem:preserving-modules}, any connected
module $M$ of an intermediate graph between $G$ and $\widehat G$ is a
connected module of $\widehat G$, to which
{Theorem}~\ref{thm:separable-modules} applies.  As a result, the
subgraph induced by $M$ in $\widehat G$ is either a clique, or a
minimum interval supergraph of $G[M]$.  In either case we have
\begin{corollary}\label{lem:separable-modules}
  Let $M$ be a connected module of graph $G$.  For any minimum
  interval supergraph $\widehat G_M$ of $G[M]$, there exists a minimum
  interval supergraph $\widehat G$ of $G$ such that $\widehat
  G_M\subseteq \widehat G[M]$.
\end{corollary}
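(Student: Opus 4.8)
The plan is to deduce Corollary~\ref{lem:separable-modules} by combining Corollary~\ref{lem:preserving-modules} and Theorem~\ref{thm:separable-modules}, which together already tell us almost everything about what happens to a connected module inside a minimum interval supergraph. The starting point is to fix an arbitrary minimum interval supergraph $\widehat G_0$ of $G$. Since $M$ is a connected module of $G$ and $G\subseteq\widehat G_0\subseteq\widehat G_0$, Corollary~\ref{lem:preserving-modules} (applied with $\widehat G'=G$) guarantees that $M$ is still a connected module of $\widehat G_0$. Thus Theorem~\ref{thm:separable-modules} is applicable to the pair $(\widehat G_0,M)$.

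Next I would split into the two cases of Theorem~\ref{thm:separable-modules}. If $\widehat G_0[M]$ is not a clique, the theorem says that replacing $E(\widehat G_0[M])$ by $E(\widehat G_M)$ yields a minimum interval supergraph $\widehat G$ of $G$, and for this $\widehat G$ we have $\widehat G[M]=\widehat G_M$ (up to isomorphism fixing $M$), so in particular $\widehat G_M\subseteq\widehat G[M]$, as desired. If instead $\widehat G_0[M]$ is a clique, then $\widehat G_0[M]$ already contains every possible edge among vertices of $M$, so trivially $\widehat G_M\subseteq\widehat G_0[M]$, and we may simply take $\widehat G=\widehat G_0$. In both cases we have produced a minimum interval supergraph $\widehat G$ of $G$ with $\widehat G_M\subseteq\widehat G[M]$, which is exactly the claim.

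The one point that needs slight care is the identification in Theorem~\ref{thm:separable-modules}: the theorem as stated says ``replacing $E(\widehat G[M])$ by $E(\widehat G_M)$ in $\widehat G$ gives a minimum interval supergraph $\widehat G'$.'' I would make explicit that in the resulting $\widehat G'$ the induced subgraph on $M$ is precisely $\widehat G_M$, since the edge set on $M$ has been set to $E(\widehat G_M)$ and no edges between $M$ and $V(G)\setminus M$ lie inside $M$. Hence $\widehat G_M=\widehat G'[M]$ and in particular $\widehat G_M\subseteq\widehat G'[M]$; renaming $\widehat G'$ as $\widehat G$ finishes the argument. There is essentially no obstacle here — the corollary is a straightforward packaging of the two preceding results, with the clique case handled separately because Theorem~\ref{thm:separable-modules} only speaks about the non-clique case. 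The only thing to double-check is that a clique on $|M|$ vertices is indeed itself the unique (hence minimum) interval supergraph of $G[M]$ when $G[M]$ is already complete, so that $\widehat G_M$ must equal that clique and the inclusion $\widehat G_M\subseteq\widehat G_0[M]$ is vacuous but valid.
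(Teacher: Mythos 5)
Your proposal is correct and follows essentially the same route as the paper: fix a minimum interval supergraph of $G$, use Corollary~\ref{lem:preserving-modules} (equivalently Theorem~\ref{thm:preserving-modules}) to see that $M$ remains a connected module there, and then apply Theorem~\ref{thm:separable-modules}, the clique case being trivial since every graph on $M$ is contained in the complete graph on $M$. No gap to report.
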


\section{Shallow terminals in reduced graphs}\label{sec:shallow}
We say a graph is \emph{reduced} if it contains no hole or small
AW.\footnote{On this ostensibly counterintuitive notion a remark is
  worthwhile.  We reduce graphs by inserting edges; a reduced graph is
  thus a supergraph of the original graph.  We use ``reduced'' in the
  sense that its structure is simpler, and its size is closer to
  $\widehat G$ than $G$.}  As indicated by
Lemma~\ref{lem:fill-long-aw}, the shallow terminal shall be of special
interest during the disposal of a long AW.  The following
characterizations of shallow terminals 
were first proved on a class of less restricted graphs that excludes
small AWs and small holes in \cite{cao-12-interval-deletion}.  Reduced
graphs, excluding small AWs and \emph{all} holes, are a trivial subset
of them, whereby {Lemma}~\ref{lem:shallow-1} and \ref{lem:shallow-2}
apply to reduced graphs as well.  For completeness, their proofs are
repeated in Appendix.

\begin{lemma}\label{lem:shallow-1}
  Let $W$ be an AW with shallow terminal $s$ and base $B$ in a
  reduced graph, and $x$ is adjacent to $s$.
  \begin{enumerate}
  \item[(1)] Then $x$ is \tagged{full}{also} adjacent to the center(s)
    of $W$ (different from $x$).
  \item[(2)] Classifying $x$ with respect to its adjacency to $B$, we
    have the following categories:
    \begin{description}
    \item [(full)] $x$ is adjacent to every base vertex.\\
      Then $x$ is also adjacent to every vertex in $N(s)\setminus
      \{x\}$.
    \item [(partial)] $x$ is adjacent to some, but not all base vertices.\\
      Then there is an AW whose shallow terminal is $s$, one center is
      $x$, and base is a proper sub-path of $B$.
    \item [(none)] $x$ is adjacent to no base vertex.\\
      Then $x$ is adjacent to neither base terminals, and thus
      replacing the shallow terminal of $W$ by $x$ makes another AW.
    \end{description}
  \end{enumerate}
\end{lemma}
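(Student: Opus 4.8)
The plan is to lean on the two defining properties of a reduced graph: it is chordal, so contains no induced cycle of length four or more, and it contains none of the four small AWs (long claw, whipping top, $d$-net or $d$-tent with $d\le 3$). Every assertion will therefore be established by producing, from $x$ and a bounded number of vertices of $W$, either a short induced cycle or one of these small AWs, contradicting reducedness. I keep the generalized frame notation $(s:c_1,c_2:l,h;t,r)$ and recall that $d=|B|\ge 4$ because $W$ is long, and that the only neighbors of $s$ inside $W$ are $c_1$ and $c_2$. Hence if $x\in W$ then $x\in\{c_1,c_2\}$, and in that case the assertions hold by the convention on the centers or reduce to the generic argument below; so from now on $x\notin W$. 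One observation is used throughout: $N(x)\cap B$ is a (possibly empty) sub-path $\{b_i,\dots,b_j\}$ of $B$, since a ``gap'' --- consecutive base vertices with $b_p\sim x$, $b_{p+1}\not\sim x$, $b_{p+2}\sim x$ --- would yield the induced $C_4$ $x\,b_p\,b_{p+1}\,b_{p+2}\,x$.

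For part (1), assume $x\not\sim c_1$ (the case of $c_2$ is symmetric). If $x\sim b_i$ for some $b_i\in B$, then $x\,s\,c_1\,b_i\,x$ is an induced $C_4$ (using $c_1\sim b_i$ and $s\not\sim b_i$), a contradiction; so $x$ misses all of $B$. If moreover $x\sim l$ (resp.\ $x\sim r$), then, depending on whether $l\sim c_1$, the set $\{x,s,c_1,b_1,l\}$ (resp.\ its mirror) induces a short hole. Finally, if $x$ also misses $l$ and $r$, then $x$ sits in exactly the position of $s$ apart from the missing edge $xc_1$, and a small AW appears on a $7$-vertex subset --- a long claw on $\{c_1,s,x,b_1,b_d,l,r\}$ with legs $c_1{-}s{-}x$, $c_1{-}b_1{-}l$, $c_1{-}b_d{-}r$ when $W$ is a net, and a $2$-net on $\{x,s,c_1,c_2,l,r\}$ (or a whipping top) when $W$ is a tent; the inequality $d\ge4$ is what makes the bounding base vertices distinct and guarantees the non-edges such as $b_1\not\sim b_d$. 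Each sub-case is a contradiction, so $x\sim c_1$, and symmetrically $x\sim c_2$.

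For part (2), fix $x\sim s$; by part (1) $x\sim c_1,c_2$, and write $N(x)\cap B=\{b_i,\dots,b_j\}$. In the \emph{partial} case this block is a proper sub-path of $B$, so up to symmetry $j<d$; the bounding vertices $b^-$ (equal to $b_{i-1}$, or to $l$ if $i=1$) and $b^+:=b_{j+1}$ are non-adjacent to $x$ and to $s$, and $\{s,x,b^-,b_i,\dots,b_j,b^+\}$ induces an AW whose shallow terminal is $s$, whose center is $x$, and whose base $\{b_i,\dots,b_j\}$ is a proper sub-path of $B$ --- one only has to check that the witnessed triple is genuinely asteroidal, which it is when $j>i$, while the degenerate one-vertex block $j=i$ would instead force a whipping top on $\{c_1,b_{i-1},b_i,b_{i+1},b_{i+2},x,s\}$ (again using $d\ge4$), which cannot occur, so a genuine sub-path always remains. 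In the \emph{none} case $x$ misses all of $B$; one first shows $x\not\sim l,r$ (else a short hole or small AW through $l$ or $r$ and a center appears exactly as in part (1)), after which $x$ plays the role of $s$ in $W$, so swapping $s$ for $x$ gives another AW. In the \emph{full} case, take $w\in N(s)\setminus\{x\}$ and suppose $w\not\sim x$: if $w\in\{c_1,c_2\}$ then $w\sim x$ by part (1); if $w$ is ``full'' then $x\,b_1\,w\,b_d\,x$ is an induced $C_4$; if $w$ is ``partial'' then the previous construction produces an AW with shallow terminal $s$ one of whose centers is $w$, and this AW is necessarily long (a small one would contradict reducedness), so part (1) applied to it forces $x\sim w$; and the residual ``none'' sub-case is closed by a small-AW construction like those in part (1).

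The principal difficulty is not any isolated step but the bookkeeping of this case analysis: for each bad configuration one must select the correct six or seven vertices and verify that \emph{every} non-edge demanded by the chosen small AW is actually absent, which is delicate near the degenerate shapes (a one- or two-vertex base block, or an index $i$ within distance one of an end of $B$) --- precisely the situations that the threshold $d\ge4$ is designed to rule out. A secondary point is sequencing: the ``full'' and ``none'' conclusions of part (2) are bootstrapped from part (1) and the ``partial'' construction, so those must be in place first.
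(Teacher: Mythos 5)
Your part~(1) and the ``full''/``none'' branches of part~(2) follow the paper's argument in substance, but the \emph{partial} case --- the heart of the lemma --- has two genuine defects. First, you assert that the left bounding vertex $b^-$ (which is $l$ when $i=1$) is non-adjacent to $x$; nothing forces this, and when $x\sim l$ your single-center $\dag$-type witness $\{s,x,b^-,b_i,\dots,b_j,b^+\}$ is simply not an AW (its base terminal $b^-=l$ is adjacent to the claimed center $x$). That sub-case cannot be symmetrized away (the symmetry only buys you $j<d$ on the right), and it genuinely requires a \emph{two-center} witness: the $\ddag$-type AW $(s:x,c_2:\,l,b_1\dots b_j,b_{j+1})$, where $c_2$ is the center of $W$ not adjacent to $l$ --- this is exactly the $p=0$ row of the paper's Table~\ref{fig:neighbor-of-shallow-terminal}, and it is not recoverable from your construction.

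Second, your disposal of the one-vertex block $j=i$ is wrong. Both centers of $W$ are adjacent to \emph{all} base vertices, to $s$, and (by part~(1)) to $x$, so in the set $\{c_1,b_{i-1},b_i,b_{i+1},b_{i+2},x,s\}$ the vertex $c_1$ is adjacent to every other vertex (for $i\ge 2$, and also for $i=1$ in the $\ddag$ case); deleting this universal vertex leaves the caterpillar $b_{i-1}\!-\!b_i\!-\!b_{i+1}\!-\!b_{i+2}$ with $x$ pendant on $b_i$ and $s$ pendant on $x$, so the set induces an \emph{interval} graph, not a whipping top. This case is essential and cannot be absorbed: with $N(x)\cap B$ a single vertex there is no AW at all whose center is $x$ and whose base is a sub-path of $B$, so the configuration must be refuted outright, and the correct obstructions avoid the centers --- the long claw $\{b_{i-2},b_{i-1},b_i,s,x,b_{i+1},b_{i+2}\}$ for interior $i$, a whipping top through the center \emph{not} adjacent to $l$ when the block sits at $b_1$, and a $4$-hole $(x\,c_2\,b_1\,l\,x)$ when $x\sim l$, as in the paper's $q=p+1$ row. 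Two smaller points of the same flavor: in the ``none'' case the phrase ``exactly as in part~(1)'' does not apply, since those holes used $x\not\sim c_1$, which you have since refuted --- you need the hole $(x\,c_2\,b_1\,l\,x)$ exploiting $c_2\not\sim l$; and the residual ``none'' sub-case inside ``full'' is only gestured at, whereas the paper closes it concretely via Lemma~\ref{lem:common-neighbor-of-base} applied to the AW obtained by replacing $s$ with $w$.
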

\begin{lemma}\label{lem:shallow-2}
  Let $W$ be an AW with shallow terminal $s$ and base $B$ in a reduced
  graph $G$.  Let $C=N(s)\cap N(B)$ and $M$ be the connected component
  of $G-C$ containing $s$. Then $C$ induces a clique and is completely
  connected to $M$.
\end{lemma}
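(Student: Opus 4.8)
The plan is to work with a fixed interval representation once we have linearized enough of the structure, but first to set up the combinatorial skeleton purely in $G$. Recall $C = N(s)\cap N(B)$; every center of $W$ lies in $C$, and by Lemma~\ref{lem:shallow-1}(1) every neighbor of $s$ is adjacent to the center(s) of $W$. The first step is to show $C$ induces a clique. Take two vertices $x,y\in C$. Each of them is adjacent to $s$ and to some base vertex; I would argue that if $x\not\sim y$ then one can assemble a small AW inside $G[W\cup\{x,y\}]$, contradicting that the graph is reduced. Concretely, combine the $s$-to-base edges through $x$ and through $y$ with a sub-path of $B$ to produce a long-claw / whipping-top / net / tent structure on at most the bounded number of vertices these small AWs allow; the case analysis branches on how $x$ and $y$ attach to $B$ (using the full/partial/none trichotomy of Lemma~\ref{lem:shallow-1}(2), which forces each of $x,y$ into one of the controlled patterns). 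Since by Lemma~\ref{lem:only-terminals-are-simplicial} the centers themselves are non-simplicial and $x,y$ see $s$, a non-edge $xy$ creates a new asteroidal-type obstruction of bounded size. This rules out $x\not\sim y$, so $C$ is a clique.

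The second step is that $C$ separates $s$ from the base terminals $l,r$ and hence from the rest of $B$: indeed $C\supseteq N(s)\cap N(B)$, and any $s$-to-$B$ path either goes through a common neighbor (hence through $C$) or is long enough to witness, together with the defining paths of $W$, yet another AW — again excluded in a reduced graph, or handled by the partial/none cases of Lemma~\ref{lem:shallow-1}(2), which keep $s$ as a shallow terminal of a smaller AW until the path is captured inside $C$. So $M$, the component of $G-C$ containing $s$, is disjoint from $B$, and $N_G(M)\subseteq C$. It remains to show $M$ is \emph{completely} connected to $C$, i.e. $N_G(M)=C$ and moreover every vertex of $C$ sees every vertex of $M$.

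For the complete-connection part I would invoke the module machinery. Having shown $C$ is a clique and $N_G(M)\subseteq C$, I want to show $N_G(M)=C$ and that each $c\in C$ is adjacent to all of $M$. Suppose some $c\in C$ misses a vertex $m\in M$. Pick a shortest $s$–$m$ path $Q$ inside $M$; replacing $s$ by $m$ along $W$ (using that $s$ is shallow and $Q$ avoids $N[c']$ appropriately) manufactures an AW in which $m$ plays the shallow-terminal role but now $c$ fails the adjacency guaranteed by Lemma~\ref{lem:shallow-1}(1) for a neighbor of the shallow terminal — contradiction, unless $c$ is in fact adjacent to everything on $Q$, in particular to $m$. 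Iterating this along $M$ (which is connected) forces every vertex of $C$ to be adjacent to every vertex of $M$, which simultaneously gives $N_G(M)=C$. The main obstacle I anticipate is the clean bookkeeping of the first and second steps: each time a non-edge is assumed, one must exhibit a \emph{specific} small AW (long claw, whipping top, $d$-net or $d$-tent with $d\le 3$) on an explicitly bounded vertex set, and the number of attachment patterns of $x,y$ (or of $c$ relative to the path $Q$) to the base is what drives the case split; Lemma~\ref{lem:shallow-1}(2) is exactly the tool that keeps this finite, so the argument should be pushed through that trichotomy rather than attempted directly on an interval representation.
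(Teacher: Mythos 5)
Your plan assembles the right tools (the full/partial/none trichotomy of Lemma~\ref{lem:shallow-1}, shortest paths inside $M$, re-rooting the AW at a new shallow terminal), but the two steps you treat as routine are exactly where the work lies, and as written they do not go through. First, for an arbitrary $m\in M$ you assert that one can ``replace $s$ by $m$ along $W$'' provided the shortest $s$--$m$ path $Q\subseteq M$ ``avoids $N[c']$ appropriately''; nothing in your sketch establishes the crucial fact that $Q$ is anticomplete to the base $B$. That claim is the technical heart of the paper's proof: one takes the least index $q$ with $v_q\sim B$, notes that every earlier $v_i$ is in category ``none'' so the AW can be re-rooted at each $v_i$, and then derives a contradiction with the minimality of $Q$ in both the ``full'' and ``partial'' subcases. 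Without it you cannot re-root the AW along $Q$ at all. Related to this, several of your contradictions have the form ``this would witness yet another AW, excluded in a reduced graph'' --- but a reduced graph excludes only holes and \emph{small} AWs; long AWs are allowed (indeed $W$ itself is one), so producing ``an asteroidal-type obstruction'' is a contradiction only if you exhibit a specific hole or small AW, which is precisely the case analysis you defer.

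Second, your mechanism for concluding $c\sim m$ is circular: Lemma~\ref{lem:shallow-1}(1) constrains vertices that are \emph{already} neighbors of the shallow terminal (they must see the centers), so it cannot be used to show that $c$ is a neighbor of the new shallow terminal $m$. The paper instead propagates adjacency along $Q$ by induction using part (2): if $c$ is ``full'' with respect to the AW rooted at $v_{i-1}$, it is adjacent to all of $N(v_{i-1})$, in particular to $v_i$; if $c$ is ``partial'', it becomes a center of a smaller AW with base $B'\subset B$, and since $Q$ is anticomplete to $B'$ that smaller AW re-roots at every $v_i$, so $c$, being its center, is adjacent to each $v_i$, hence to $m$. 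The clique assertion for $C$ is then a short consequence of the same dichotomy (full: $x$ sees all of $N(s)\ni y$; partial: $y\in N(s)$ must see the centers of the smaller AW, one of which is $x$) --- no new small AW needs to be manufactured there. So the route you sketch can be completed, but only after adding the anticompleteness claim for $Q$ and replacing the Lemma~\ref{lem:shallow-1}(1)-based contradiction with the inductive use of part (2).
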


Let $ST(G)$ denote the set of shallow terminals of a reduced graph
$G$.  The lemmas above indicate a nice structure for $ST(G)$.
Observe that for a module $M$ in a chordal graph, at least one of $M$
and $N(M)$ induces a clique.  We say $M$ is \emph{simplicial module} if it
is a connected module and $N(M)$ induce a clique.  We remark that this
name is suggested by {Lemma}~\ref{lem:only-terminals-are-simplicial}
and the fact that any $s\in M$ is simplicial in $G - (M\setminus
\{s\})$.
\begin{lemma}\label{lem:simple-module}
  Let $M$ be a simplicial module of a reduced graph $G$.  If an AW $W$
  contains a vertex $x\in M$ and $W\not\subseteq M$, then $x$ is a
  terminal of $W$.  Moreover, if $G$ is connected, then $G - M$ is
  connected.
\end{lemma}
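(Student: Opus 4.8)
The plan is to handle the two assertions separately, using Lemmas~\ref{lem:shallow-1}, \ref{lem:shallow-2}, \ref{lem:module-exchangeable}, and the structure of AWs in Figure~\ref{fig:at}. For the first assertion, let $W$ be an AW with $x \in M \cap V(W)$ and $W \not\subseteq M$. Since $M$ is a connected module and $|M| > 1$ would be the interesting case, I would first invoke Lemma~\ref{lem:module-exchangeable}: because $W \not\subseteq M$, we get $|M \cap V(W)| \le 2$, with equality only if $W$ is a $4$-hole. But $W$ is an AW, not a hole, so in fact $|M \cap V(W)| = 1$, i.e. $x$ is the \emph{only} vertex of $M$ in $W$. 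Now suppose for contradiction that $x$ is not a terminal of $W$. Then by Lemma~\ref{lem:only-terminals-are-simplicial}, $x$ is not simplicial in $W$, so $x$ has two nonadjacent neighbors $u, v$ in $W$; since $x$ is the unique vertex of $M$ in $W$, both $u$ and $v$ lie in $N(M) = N_G(x) \setminus M$ (using that $M$ is a module, so every vertex of $V(W) \setminus \{x\}$ outside $M$ that is adjacent to $x$ is adjacent to all of $M$). But $N(M)$ induces a clique by hypothesis (simplicial module), so $u \sim v$, a contradiction. Hence $x$ is a terminal of $W$.

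For the second assertion, assume $G$ is connected and suppose $G - M$ is disconnected. I would pick two vertices $a, b$ in different components of $G - M$. Since $G$ is connected, every $a$-$b$ path passes through $M$, so $N(M)$ is an $a$-$b$ separator; moreover since $M$ induces a connected subgraph, $N(M) \cup M$ being "collapsible", the key point is that $N(M)$ separates the components of $G - M$ from each other. The natural move is: take $a \in A$ and $b \in B$ with $A, B$ distinct components of $G - M$; then $a, b$ together with a vertex $s \in M$ (or with the whole module) — here I want to produce an AW or a hole, contradicting that $G$ is reduced. Concretely, since $N(M)$ is a clique (simplicial module) and separates $A$ from $B$, and since $M$ is connected and completely joined to $N(M)$'s relevant part, I would try to build a long claw or a net: take a shortest $a$–$b$ path through $M \cup N(M)$, combine it with a pendant vertex from $M$ attached via $N(M)$, and check the asteroidal triple condition. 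The cleaner route is probably: $G$ connected plus $M$ simplicial connected module plus $G - M$ disconnected forces a witness with the terminals being $a$, $b$, and an internal vertex of $M$; because $N(M)$ is a clique, the paths stay out of each other's neighborhoods, yielding an AW in a reduced graph — impossible.

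I expect the main obstacle to be the second assertion, specifically constructing the forbidden subgraph cleanly. The subtlety is that a shortest path between two components of $G - M$ through $N(M)$ might be very short (length $2$, since $N(M)$ is a clique), so one does not automatically get a long path; the AW would then have to be a small one like a long claw or a net, and I must verify the vertices from $M$ genuinely give an asteroidal triple rather than collapsing. I would likely need to use that $|M| > 1$ or that $M$ contains a non-adjacent pair, or otherwise reduce to the case $|M| = 1$ where $x$ simplicial in $G$ and $G - x$ disconnected forces $x$ to be a cut vertex whose removal disconnects a clique-neighborhood — and here I would argue that a simplicial cut vertex cannot disconnect a connected graph unless... actually a simplicial vertex's neighborhood is a clique, so removing it cannot disconnect the graph at all, which immediately settles $|M|=1$; for $|M|>1$ the same argument applied after contracting $M$ to a single simplicial vertex works, since contracting a connected module to one vertex preserves connectivity and turns $M$ into a simplicial vertex whose removal would have to disconnect $G$ — contradiction. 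This contraction trick is the shortcut I would pursue to avoid exhibiting an explicit AW.
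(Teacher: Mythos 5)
Your proposal is correct and takes essentially the same route as the paper: the first assertion is exactly the paper's argument (Lemma~\ref{lem:module-exchangeable} gives $|M\cap W|=1$, so all neighbors of $x$ in $W$ lie in the clique $N(M)$, making $x$ simplicial in $W$, and Lemma~\ref{lem:only-terminals-are-simplicial} finishes), and for the second assertion your contraction-to-a-simplicial-vertex trick is just a repackaging of the paper's one-line observation that any path meeting $M$ can be shortcut across the clique $N(M)$. Your initial detour of trying to exhibit an AW or hole is unnecessary — no use of reducedness is needed for the connectivity claim.
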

\begin{proof}
  By {Lemma}~\ref{lem:module-exchangeable}, $x$ is the only vertex in
  $M\cap W$, thus simplicial; the first assertion follows from
  {Lemma}~\ref{lem:only-terminals-are-simplicial}.  As $N(M)$ induces
  a clique, if a pair of vertices $u,v\not\in M$ is connected by a
  path intersecting $M$, then there is a $u$-$v$ path avoiding $M$;
  the second assertion follows.
\end{proof}

\begin{theorem}\label{thm:shallow-is-module}
  In the subgraph induced by $ST(G)$, each connected component makes a
  simplicial module of $G$.
\end{theorem}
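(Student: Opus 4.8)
The plan is to prove the theorem in two parts: (i) each connected component $D$ of $G[ST(G)]$ is a module of $G$, and (ii) $N(D)$ induces a clique, so that $D$ is in fact a simplicial module. Throughout I would use Lemma~\ref{lem:shallow-1} and Lemma~\ref{lem:shallow-2}, which control the neighborhood of a single shallow terminal, and bootstrap these to the whole component by a connectivity argument.

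First I would settle the case of a single shallow terminal $s$ and then extend it along edges of $G[ST(G)]$. Fix an AW $W$ with shallow terminal $s$, center(s) $c_1,c_2$, and base $B$; let $C = N(s)\cap N(B)$. By Lemma~\ref{lem:shallow-2}, $C$ induces a clique and is completely connected to the component $M_s$ of $G-C$ containing $s$; in particular every neighbor of $s$ outside $M_s$ lies in $C$, and $M_s$ is a connected module with $N(M_s)=C$ a clique — i.e.\ a simplicial module containing $s$. So the natural candidate for the component $D$ is exactly such an $M_s$; what I must show is that $M_s\cap ST(G)$ is precisely the connected component of $G[ST(G)]$ through $s$, i.e.\ that $ST(G)\cap M_s$ is connected, that it contains no vertex outside $M_s$, and that $M_s\subseteq N[s']$-structure is the same regardless of which shallow terminal $s'\in M_s$ we start from.

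The key steps, in order: (1) Show every vertex of $M_s$ other than $s$ that is a shallow terminal "sees the same outside," using that $N(M_s)=C$ is a module-boundary — any $s'\in M_s\cap ST(G)$ has $N(s')\setminus M_s\subseteq C$ and, by applying Lemma~\ref{lem:shallow-2} to an AW witnessing $s'$, we get $N(s')\cap N(\text{its base}) = C$ as well (here I would need that the simplicial module is canonical, not depending on the chosen witness; this should follow because $M_s$ is literally a connected component of $G-C$, and $C$ is forced as $N(M_s)$). (2) Show that if $s,s'$ are two shallow terminals that are adjacent, then they lie in a common simplicial module: the edge $ss'$ together with the fact that neither is adjacent to any base vertex of a witness of the other (by the (none) case of Lemma~\ref{lem:shallow-1}(2), swapping shallow terminals yields new AWs) forces $s'\in M_s$ and $s\in M_{s'}$, and then $M_s = M_{s'}$ since both equal the component of $G$ minus the common clique-separator $C$. (3) Iterate (2) along a path in $G[ST(G)]$ to conclude the whole connected component $D$ lies in one simplicial module $M$; (4) conversely, show $D = M\cap ST(G)$ is all of $M\cap ST(G)$ and that $M\cap ST(G)$ is connected in $G[ST(G)]$ — here I would argue that $M$ itself, or at least the clique $K_p$ structure near the "middle" of $M$, forces enough edges among its shallow terminals (using Lemma~\ref{lem:shallow-1}(full): a full vertex $x\in N(s)$ is adjacent to all of $N(s)\setminus\{x\}$, and shallow terminals inside $M$ are pairwise non-separated by $C$). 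Finally $D=M$ would follow if every vertex of a simplicial module generated by a shallow terminal is itself a shallow terminal; if that is false, the theorem only claims $D$ makes a simplicial module, so it suffices that $D\subseteq M$, $D$ is a module (inheriting $N(D)=C$ once $D$ is "fat enough," or directly because $D$ is a union of an orbit that still has common outside-neighborhood $C$), and $N(D)\subseteq C$ is a clique.

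The main obstacle I anticipate is step (4) combined with the canonicity issue in step (1): showing that the simplicial module attached to $s$ does not depend on the choice of witnessing AW $W$, and that two shallow terminals in the same simplicial module are actually connected \emph{within} $G[ST(G)]$ rather than merely within $G[M]$. The first half of this should reduce to the observation that $C=N(s)\cap N(B)$ is forced to equal $N(M_s)$ where $M_s$ is a component of $G-C$ — a fixed-point-type argument — but making this airtight when $s$ witnesses several AWs with different bases will require care, likely invoking Lemma~\ref{lem:simple-module} (any AW meeting $M$ but not contained in it has its unique $M$-vertex as a terminal). For the second half, I expect to use that inside $M$ the shallow terminals are pairwise "full" with respect to each other's bases and hence, by Lemma~\ref{lem:shallow-1}, pairwise adjacent or linked through a short chain of shallow terminals, so $G[ST(G)\cap M]$ is connected. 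If that pairwise-adjacency is too strong to hope for in general, the fallback is to show directly that the \emph{union} of all simplicial modules meeting a given component $D$ of $G[ST(G)]$ is again a single simplicial module, which still yields the statement.
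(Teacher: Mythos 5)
Your starting point is sound and matches the paper: for a shallow terminal $s$ with witness $W$, the set $M_s$ (the component of $G-C$ containing $s$, where $C=N(s)\cap N(B)$) is, by Lemma~\ref{lem:shallow-2}, a connected module with $N(M_s)=C$ a clique, i.e.\ a simplicial module. But the way you try to identify these modules with the components of $G[ST(G)]$ has two genuine gaps. First, your step (2) is circular: you claim that two adjacent shallow terminals $s,s'$ are never adjacent to the base of each other's witness, ``by the (none) case of Lemma~\ref{lem:shallow-1}(2).'' That case is \emph{conditioned} on nonadjacency to the base; it does not establish it. The delicate situation is precisely a shallow terminal $s'\in C=N(s)\cap N(B)$, i.e.\ adjacent to $s$ \emph{and} to $B$ (category ``full'' or ``partial''); then $s'\notin M_s$ and $M_s\neq M_{s'}$, so your claimed equality of the two modules fails. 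What actually happens there (and is the engine of the paper's proof) is growth, not equality: since $s'$ is adjacent to every vertex of $C\cup M_s$, the base of any AW witnessing $s'$ is disjoint from $C\cup M_s$, and applying Lemma~\ref{lem:shallow-2} to that AW yields a strictly larger simplicial module containing $M_s\cup\{s'\}$; one iterates until the module's neighborhood contains no shallow terminal, at which point it is a component of $G[ST(G)]$. This also makes your ``canonicity'' worries moot.

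Second, you never establish that every vertex of $M_s$ is itself a shallow terminal, i.e.\ $M_s\subseteq ST(G)$, which is exactly what makes the final module coincide with a connected component of $G[ST(G)]$. Your proposed fallback --- taking $D\subsetneq M$ and arguing $D$ is still a simplicial module --- does not work: if $D$ is a proper connected subset of the connected module $M$, some vertex of $M\setminus D$ is adjacent to part of $D$ but need not be adjacent to all of it, so $D$ need not be a module, and $N(D)$ then contains vertices of $M\setminus D$, which together with $C$ need not induce a clique (since $M$ itself need not be one). The missing fact $M_s\subseteq ST(G)$ is true, but it requires the inductive propagation of AWs along paths inside $M_s$ via the (none) case of Lemma~\ref{lem:shallow-1} (this is the argument inside the proof of Lemma~\ref{lem:shallow-2}, not its statement); you flag it only as a hope. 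With these two repairs --- propagation to get $M_s\subseteq ST(G)$, and the grow-and-iterate step in place of your step (2) --- your outline collapses to the paper's proof.
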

\begin{proof}
  Given any AW, we can use Lemma~\ref{lem:shallow-2} to construct the
  pair of sets $M$ and $C$; using definition we can verify that $M$
  makes a simplicial module of $G$.  If $C \cap ST(G) = \emptyset$ then
  $M$ is a connected component of the subgraph induced by $ST(G)$.
  Hence we assume otherwise, and let $W$ be another AW with shallow
  terminal $s\in C \cap ST(G)$.  As $s$ is adjacent to every other
  vertex of $C\cup M$, the base $B$ of $W$ is disjoint from $C\cup M$;
  i.e., $M$ is adjacent to $s$ but not $B$.  The new set $M'$ obtained
  by applying Lemma~\ref{lem:shallow-2} on $W$ contains both $M$ and
  $s$; it is also a simplicial module of $G$.  This process can be
  repeated for a finite number of steps, until a connected component
  of the subgraph induced by $ST(G)$, also a simplicial module, is found.
\end{proof}

A reduced graph $G$ contains no hole or small AW; $ST(G)$ intersects
every long AW, and thus $G - ST(G)$ is an interval graph.  On the
other hand, applying {Lemma}~\ref{lem:simple-module} repetitively on
connected components of $ST(G)$ gives:

\begin{corollary}\label{lem:connected-backbone}
  If a reduced graph $G$ is connected, then $G-ST(G)$ is a connected
  interval graph.
\end{corollary}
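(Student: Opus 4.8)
The statement to prove is that if $G$ is a connected reduced graph, then $G - ST(G)$ is a connected interval graph. The interval-graph part is essentially already observed in the paragraph preceding the corollary: $G$ is reduced, so it contains no hole and no small AW; every long AW has its shallow terminal in $ST(G)$ by definition; hence $G - ST(G)$ contains neither a hole nor any AW (long or small), and by the Lekkerkerker--Boland characterization (the Theorem of \cite{lekkerkerker-62-interval-graphs} quoted in Section~\ref{sec:forbidden-subgraphs}) it is an interval graph. So the only real content is connectivity, and that is where the work lies.

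For connectivity I would argue by repeatedly removing one simplicial module at a time and invoking Lemma~\ref{lem:simple-module}. By Theorem~\ref{thm:shallow-is-module}, the vertex set $ST(G)$ decomposes into connected components $M_1, \dots, M_r$, each of which is a simplicial module of $G$. The key point is that these modules are pairwise ``independent'' in the sense needed: since $M_i$ is a module and $M_j$ (for $j \ne i$) is a connected subset of $V(G) \setminus M_i$, every vertex of $M_j$ has the same adjacency to $M_i$; moreover $M_i$ and $M_j$ lie in different connected components of $G[ST(G)]$, so in fact $M_i$ and $M_j$ are nonadjacent. Consequently, after deleting $M_1$ from $G$, the set $M_2$ is still a module of $G - M_1$, its external neighborhood is unchanged, hence still a clique, so $M_2$ is still a simplicial module of $G - M_1$; and the same for $M_3, \dots, M_r$. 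I would make this precise by an induction on $r$: set $G_0 = G$ and $G_i = G_{i-1} - M_i$; by the observation above $M_i$ is a simplicial module of $G_{i-1}$, and $G_{i-1}$ is connected by the induction hypothesis, so Lemma~\ref{lem:simple-module} gives that $G_i = G_{i-1} - M_i$ is connected. After $r$ steps we reach $G_r = G - ST(G)$, which is therefore connected.

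The step I expect to be the main obstacle is the bookkeeping that each $M_i$ remains a \emph{simplicial} module (connected module with clique neighborhood) in the intermediate graph $G_{i-1}$, rather than just a module. Connectedness of $M_i$ is inherited trivially since we only delete vertices outside $M_i$. That $M_i$ is still a module of $G_{i-1}$ is also immediate. The delicate part is that $N_{G_{i-1}}(M_i)$ still induces a clique: a priori, deleting earlier modules could only remove vertices from this neighborhood, never add any, and deleting vertices cannot create nonadjacencies among the survivors, so a clique stays a clique --- but one must check $N_{G_{i-1}}(M_i) = N_G(M_i)$, i.e., that none of $M_1, \dots, M_{i-1}$ meets $N_G(M_i)$. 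This is exactly where the ``different connected components of $ST(G)$'' hypothesis from Theorem~\ref{thm:shallow-is-module} is used: if some $x \in M_j$ were adjacent to $M_i$, then since both are modules, $M_j \cup M_i$ would induce a connected subgraph of $G[ST(G)]$, contradicting that $M_i$ and $M_j$ are distinct connected components. With that lemma in hand the whole argument is a short induction, so I would keep the write-up to a few lines once this nonadjacency observation is recorded.
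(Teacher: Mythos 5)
Your proposal is correct and follows essentially the same route as the paper, which obtains the interval-graph part from reducedness plus the fact that $ST(G)$ meets every long AW, and the connectivity part by repeatedly applying Lemma~\ref{lem:simple-module} to the connected components of $G[ST(G)]$, each of which is a simplicial module by Theorem~\ref{thm:shallow-is-module}. Your added bookkeeping (pairwise nonadjacency of the components, so each remains a simplicial module of the intermediate graphs) is exactly the detail the paper leaves implicit.
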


To apply the results of this section we need to first
find $ST(G)$.  We check for each triple of vertices whether they form
an AT or not, and identify an AW for them if yes.  The
AW is necessarily a long AW and contains a shallow terminal.  Clearly
it takes polynomial time to check all triples.  The following lemma
assures us that all shallow terminals can be found as such.
\begin{lemma}\label{lem:shallow-is-shallow}
  In a reduced graph, all AWs with the same set of terminals have the
  same shallow terminal.
\end{lemma}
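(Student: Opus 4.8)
The plan is to argue by contradiction, assuming there are two long AWs $W_1$ and $W_2$ with the same terminal set $\{t_1,t_2,t_3\}$ but with different shallow terminals, say $s_1$ for $W_1$ and $s_2\ne s_1$ for $W_2$. Since both witnesses have the same three terminals and each is ``shallow'' (i.e.\ has the structure of a $\dag$- or $\ddag$-AW from Figure~\ref{fig:at}), the shallow terminal of $W_i$ is precisely the terminal that is neither on nor adjacent to the longest defining path of $W_i$; equivalently, using the notation of the figure, it is the terminal whose defining paths to the other two terminals each pass through a center. So the hypothesis $s_1\ne s_2$ means that in $W_1$ the terminal $s_1$ plays this ``shallow'' role while in $W_2$ it is one of the base terminals $l,r$, and symmetrically for $s_2$.

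The key step will be to combine the two witnesses to manufacture a forbidden configuration that a reduced graph cannot contain --- that is, either a small AW or a hole --- which is exactly what ``reduced'' forbids. Concretely, from $W_1$ we know $s_1$ is nonadjacent to the base $B_1$ of $W_1$ and its only neighbors inside $W_1$ are the center(s); from $W_2$ we know $s_1$ (now a base terminal) is adjacent to $h_2$ (the first base vertex of $W_2$) and its defining path to $s_2$ runs along the base of $W_2$. I would feed the relevant vertex of $W_2$ back into $W_1$: take a neighbor $x$ of $s_1$ coming from $W_2$ and apply Lemma~\ref{lem:shallow-1}(2). Since $x$ is adjacent to $s_1$, it is either \textbf{full}, \textbf{partial}, or \textbf{none} with respect to $B_1$. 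In the \textbf{partial} case Lemma~\ref{lem:shallow-1} immediately produces a \emph{smaller} AW with shallow terminal $s_1$, and by iterating (the base strictly shrinks) we eventually reach a small AW, contradicting reducedness. In the \textbf{none} case Lemma~\ref{lem:shallow-1} says $x$ is adjacent to neither base terminal of $W_1$, so replacing $s_1$ by $x$ yields another AW on terminals $\{x,t,t'\}$; chasing this back toward $s_2$ along the $W_2$-structure forces $s_2$ into the role of a non-terminal of an AW, contradicting Lemma~\ref{lem:only-terminals-are-simplicial} (only terminals are simplicial). The \textbf{full} case is the tightest: here $x$ is adjacent to all of $B_1$, and in particular to $h$ and $t$ of $W_1$; if moreover $x\not\sim s_1$ we would invoke Lemma~\ref{lem:common-neighbor-of-base} to get a small AW, but here $x\sim s_1$, so I instead use the ``full'' conclusion of Lemma~\ref{lem:shallow-1}(2) (that $x$ is adjacent to everything in $N(s_1)\setminus\{x\}$) together with the clique structure of $C=N(s_1)\cap N(B_1)$ from Lemma~\ref{lem:shallow-2} to pin down where $s_2$ and its centers sit, and derive that $s_2$'s defining paths in $W_2$ cannot avoid $N[s_1]$ --- contradicting that $\{t_1,t_2,t_3\}$ is an AT of $W_2$.

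The main obstacle I anticipate is the bookkeeping in the \textbf{full} case: ruling it out requires simultaneously exploiting the module/clique structure around $s_1$ (Lemmas~\ref{lem:shallow-2} and, implicitly, Theorem~\ref{thm:shallow-is-module}, which says shallow terminals of a reduced graph form disjoint simplicial modules) and the internal structure of $W_2$. A clean way to organize it is: first show that if $s_1$ and $s_2$ were both shallow terminals of \emph{some} AW in $G$, then by Theorem~\ref{thm:shallow-is-module} they lie in simplicial modules of $G$; show these modules must coincide (since each AW on $\{t_1,t_2,t_3\}$ forces $s_1,s_2\in N(s)\cap N(B)$ relative to the other), hence $s_1$ and $s_2$ are twins with the same neighborhood outside the module; then a direct symmetry argument swaps $s_1\leftrightarrow s_2$ in either witness, so $W_1$ with $s_1$ replaced by $s_2$ is still a long AW with shallow terminal $s_2$ and identical base, forcing $s_1\sim$ the same base that $s_2$ is nonadjacent to --- impossible since they are twins. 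I would write up the argument along these lines, with the twin/module observation doing the heavy lifting and the three-case split of Lemma~\ref{lem:shallow-1} handling the preliminary reduction to that observation.
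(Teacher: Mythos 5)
Your plan does not actually close any of its three cases, and two of its pivotal steps are incorrect. In the \emph{partial} case, Lemma~\ref{lem:shallow-1}(2) only yields another AW with a strictly shorter base; in a reduced graph this is no contradiction at all (reducedness forbids small AWs and holes, not long AWs), and the iteration you describe has no engine: after one application the vertex $x$ becomes a center of the new witness, and nothing supplies a further ``partial'' vertex, so you never arrive at a small AW. In the \emph{none} case, the observation that some vertex ends up as a non-terminal of some AW contradicts nothing; Lemma~\ref{lem:only-terminals-are-simplicial} concerns simpliciality inside a fixed AW, and a vertex can be a terminal of one witness and a non-terminal of another. Finally, the module/twin argument that you want to carry the main load fails at its first step: $s_1$ and $s_2$ are both terminals of the same witnesses, hence nonadjacent, so the claim that each lies in $N(s)\cap N(B)$ relative to the other is false, and there is no reason the connected components of $G[ST(G)]$ containing them coincide; without that, the ``twins'' conclusion and the final swap argument never get off the ground. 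As written, the proposal is a statement of intent rather than a proof, with the admittedly hard \emph{full} case left entirely open.

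The paper's own argument is much more direct and avoids all of this. With $(s:c_1,c_2:l,h;t,r)$ the frame of the given witness, it examines the distance between $l$ and $r$ in $G-N[s]$. Inside any AW the shallow terminal reaches each base terminal by a path of length $2$ or $3$ that avoids the closed neighborhood of the third terminal, so if this distance exceeds $3$, neither $l$ nor $r$ can be the shallow terminal of any witness on the same triple and the lemma follows. If the distance were $2$ or $3$, chordality (the absence of holes) forces the one or two shortcut vertices to be adjacent to $h$ and $t$ while lying outside $N[s]$, i.e.\ it produces a vertex of $N(h)\cap N(t)\setminus N(s)$, contradicting Lemma~\ref{lem:common-neighbor-of-base} in a reduced graph. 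You do invoke Lemma~\ref{lem:common-neighbor-of-base}, but only in your \emph{full} case, precisely where it cannot apply (there $x\sim s$); the missing idea is to apply it to a short $l$--$r$ path outside $N[s]$, which is exactly what the paper does.
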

\begin{proof}
  Let $(s: c_1,c_2: l,h; t,r)$ be the frame of an AW $W$.  We consider
  the distance between $l$ and $r$ in $G - N[s]$, which cannot be $1$
  by definition.  As a shallow terminal is in distance either $2$ or
  $3$ to a base terminal, if the distance between $l$ and $r$ in $G -
  N[s]$ is strictly larger than $3$, then this assertion must hold
  true.  Suppose it is $2$ and $x\not\in N[s]$ is a common neighbor of
  $l$ and $r$; clearly $x\not\in W$.  As there cannot be a hole by
  assumption, $x$ must be adjacent to both $h$ and $t$.  Noting
  $x\not\sim s$, this contradicts
  Lemma~\ref{lem:common-neighbor-of-base}.
  Suppose now it is $3$ and $(l x y r)$ is a shortest $l$-$r$ path in
  $G - N[s]$; then $l\not\sim y$ and $x\not\sim r$.  It might happen
  that $x$ or $y$ is in $W$; in particular, $x=h$ or $y=t$, but not
  both.  If $x,y\not\in W$, then $x,y$ make a cycle with the path $(l
  h c_1 c_2 t r)$.  From the nonexistence of holes and the known
  adjacencies, it can be inferred that $x\sim h$ and $y\sim t$; and at
  least one of $x\sim t$ and $y\sim h$ holds true.  If $x=h$ or $y=t$,
  then the other vertex is adjacent to both $h$ and $t$.  Therefore,
  we always ends with a vertex in $N(h)\cap N(t)\setminus N(s)$,
  contradicting Lemma~\ref{lem:common-neighbor-of-base}.  This
  completes this proof.
\end{proof}

It should be noted that this does not rule out the possibility of the
shallow terminal of an AW being a base terminal of another AW; if this
happens, these AWs necessarily have at least one different terminal.
Indeed, for any AW not fully contained in $ST(G)$ in a reduced graph
$G$, we can conclude from {Theorem}~\ref{thm:shallow-is-module} and
Lemma~\ref{lem:simple-module} that
\begin{inparaitem}
\item its shallow terminal is in $ST(G)$;
\item its base terminals might or might not be in $ST(G)$; and
\item all other vertices are disjoint from $ST(G)$.
\end{inparaitem} 

Finally, our branching shall be conducted on a ``locally minimal'' AW
which can be found as follows.
\begin{lemma}\label{lem:min-at}
  For any $s\in ST(G)$ in a reduced graph $G$, there is an AW whose
  shallow terminals is $s$ and whose base is completely connected to
  $N(s)\setminus ST(G)$.  Moreover, such an AW can be found in
  polynomial time.
\end{lemma}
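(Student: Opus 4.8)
\textbf{Proof proposal for Lemma~\ref{lem:min-at}.}

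The plan is to start from \emph{any} AW with shallow terminal $s$ and gradually push its base toward $N(s)\setminus ST(G)$ by repeatedly invoking Lemma~\ref{lem:shallow-1}. Since $s\in ST(G)$, there is at least one AW $W$ with shallow terminal $s$; pick one whose base $B$ is as short as possible (equivalently, a base that is a shortest $l$-$r$ path in $G-N[s]$, of which there is a smallest one by Lemma~\ref{lem:shallow-1}(2, partial)). I claim that for this $W$, every center of $W$ lies in $ST(G)$, and moreover the base $B$ is completely connected to $N(s)\setminus ST(G)$. The first part of the claim is what powers the rest: if a center $c_i$ of $W$ were \emph{not} a shallow terminal, that would already be strange, but more to the point, what we really need is to understand an arbitrary neighbor $x\in N(s)\setminus ST(G)$ and show it is adjacent to all of $B$.

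So fix $x\in N(s)\setminus ST(G)$ and apply Lemma~\ref{lem:shallow-1} to $W$ and $x$. By part~(1), $x$ is adjacent to the center(s) of $W$. By part~(2), $x$ falls into one of the three categories. If $x$ were in the \textbf{(none)} category, then replacing the shallow terminal of $W$ by $x$ produces another AW whose shallow terminal is $x$ (by Lemma~\ref{lem:shallow-is-shallow}, since that new AW has terminal set containing $x$ as its shallow terminal); but then $x\in ST(G)$, contradicting $x\notin ST(G)$. If $x$ were in the \textbf{(partial)} category, there would be an AW with shallow terminal $s$, one center $x$, and base a \emph{proper} sub-path of $B$, contradicting the minimality of $|B|$. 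Hence $x$ is in the \textbf{(full)} category, i.e., $x$ is adjacent to every base vertex. As $x\in N(s)\setminus ST(G)$ was arbitrary, $B$ is completely connected to $N(s)\setminus ST(G)$, which is the assertion. (One subtlety: the (none)/(partial) dichotomy must be applied carefully because ``base a proper sub-path of $B$'' needs $|B|\ge 2$; if $|B|=1$ the AW is not long, but a reduced graph has no small AW, so the base of any AW here has $d\ge 4$ and this is not an issue.)

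For the algorithmic part: given $s\in ST(G)$, we first recover \emph{some} AW with shallow terminal $s$ in polynomial time — e.g., by the $\mathcal O(n^5)$ procedure of Section~\ref{sec:forbidden-subgraphs}, guessing the AT containing $s$ and extracting shortest connecting paths — and then minimize the base: among all shortest $l$-$r$ paths in $G-N[s]$ (over the polynomially many choices of base terminals $l,r$), a shortest one of lexicographically smallest vertex set can be found in polynomial time, and the argument above certifies it is the desired locally minimal AW. The main obstacle I anticipate is purely bookkeeping: making sure that when we select ``the AW with shortest base,'' the (partial) case of Lemma~\ref{lem:shallow-1} genuinely contradicts minimality — this requires that the AW produced there really does have shallow terminal $s$ (so that it competes in the same minimization), which is exactly what Lemma~\ref{lem:shallow-is-shallow} guarantees since it shares the terminal $s$ and has a strictly shorter base hence a different third terminal is fine. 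Everything else is a direct chase through Lemmas~\ref{lem:shallow-1}, \ref{lem:common-neighbor-of-base}, and \ref{lem:shallow-is-shallow}.
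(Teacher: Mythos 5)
Your argument is correct and is essentially the paper's own proof: the paper starts from an arbitrary AW with shallow terminal $s$ and repeatedly applies Lemma~\ref{lem:shallow-1}, ruling out category ``none'' because such a vertex would itself be a shallow terminal and hence lie in $ST(G)$, and using category ``partial'' to replace the AW by one with shallow terminal $s$ and a strictly shorter base --- which is exactly the iterative form of your minimum-base extremal choice. The only minor differences are that the paper gets the polynomial running time directly from this iteration (each step shortens the base), which is cleaner than your attempt to compute a minimum-base AW outright via shortest $l$-$r$ paths in $G-N[s]$ (it is not immediate that such a path is the base of an AW with shallow terminal $s$), and your stray, unused claim that the centers of $W$ lie in $ST(G)$ should simply be dropped.
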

\begin{proof}
  We start from any AW $W$ with shallow terminal $s$.  We use
  Lemma~\ref{lem:shallow-1} to categorize vertices in $N(s)\setminus
  ST(G)$ with respect to $W$.  None of them cannot be in category
  ``none,'' as otherwise such a vertex is a shallow terminal and has
  to be in $ST(G)$.  If every vertex in $N(s)\setminus ST(G)$ is in
  category ``full,'' then $W$ is already what we need and we are done.
  Hence let us assume $x$ is in category ``partial,'' then we have
  another AW with shallow terminal $s$ and a strictly shorter base.
  Applying this argument repeatedly will eventually procure an AW with
  shallow terminal $s$ such that every vertex in $N(s)\setminus ST(G)$
  is in category ``full.''  It is easy to verify that this procedure
  can be implemented in polynomial time; this completes this proof.
\end{proof}

\section{The algorithm}
\label{sec:alg}

Now we are ready to present the main algorithm and prove
Theorem~\ref{thm:main-alg}.  Our basic strategy is a sandwich
approach, which either inserts edges to $G$, or excludes some other
edges by setting them as ``avoidable.''  As such we narrow the search
space from both sides, until the objective graph is obtained.

The execution of the algorithm, an intermixed application of several
branching rules, can be described as a search tree of which every node
contains a pair ($G,A$), where $A$ denotes the set of ``avoidable''
edges.  On a non-leaf node of this tree, we make mutually exclusive
decisions, each generating a different child node.
Let ($G,A$) and ($G',A'$) be contained in a parent node and,
respectively, a child node in the search tree; it holds $G\subseteq
G'$ and $A\subseteq A'$.  We say an interval supergraph $\widehat G$
of $G$ is \emph{feasible} for ($G,A$) if $E(\widehat G)$ is disjoint
from $A$, even its size is larger than $||G||+k$.  Any graph feasible
for ($G',A'$) is also feasible for ($G,A$).

We say an input instance ($G,k$) of \textsc{interval completion} is a
``YES'' instance if the size of a minimum interval supergraph of $G$
is at most $||G|| + k$; a ``NO'' instance otherwise.  We define
$\mathtt{mis}$ to be the size of minimum interval supergraphs of the
input graph for a ``YES'' instance; and $||G|| +k$ for a ``NO''
instance.  By definition, it always holds $\mathtt{mis}\le ||G|| + k$.

The way we prove the correctness of our algorithm is by showing if a
node in the search tree has a feasible supergraph of size upper
bounded by $\mathtt{mis}$, then at least one of its children nodes has
a feasible supergraph of the same size.  For a ``NO'' instance, this
holds trivially: no matter which edges are chosen to be inserted or
forbidden in any step, the monotonicity of $G$ and $A$ ensures every
path in the search tree faithfully ends with ``NO.''  Hence we can
focus on ``YES'' instances, where the root node surely satisfies this
condition.  With inductive reasoning, we conclude that there is a
leaf node containing an interval graph $\widehat G$ of size
$\mathtt{mis}$.  To such a leaf node there is a unique path from the
root, and $\widehat G$ is feasible for every node in the path.

On the complexity analysis, we focus on the number of leaves of the
search tree the algorithm traverses.  It is achieved by bounding the
number of children nodes of a non-leaf node with respect to the
decrease of measure.  As we have no intention of optimizing the order
of the polynomial factor, for the generation of a child node, we are
satisfied if it can be executed in polynomial time.

To facilitate the recursive calls and inductive proofs, we augment the
algorithm inputs and strengthen inductive hypothesis as follows.  In
addition to the graph $G$ and parameter $k$, our algorithm takes as
inputs:
\begin{itemize}
    \itemsep0mm   
  \item[$U$:] a set  of shallow terminals; 
  \item[$\cal X$:] a set  of unchangeable frames; and
  \item[$A$:] a set  of ``avoidable'' edges.
\end{itemize}
They are related as follows.  Let $M$ be a connected component of
$G[U]$.  Within $\cal X$ there is an unchangeable frame ($s: c_1,c_2:
l,h; t,r$), denoted by $X(M)$, such that $s$ is the only vertex in
$X(M)\cap M$, and $\{l c_2, c_1 r, h t, x h, x t| x\in M\}\subseteq A$
(Lemma~\ref{lem:fill-long-aw}).  We will use $h(M)$, $t(M)$, and
$s(M)$ to refer to the vertices $h,t$, and $s$ respectively in $X(M)$.
The set $A$ is disjoint from $E(G)$, and both of them increase only;
and $||G|| + k$ remains the same throughout.  The original instance
($G,k$) is supplemented with empty $U$, $\cal X$, and $A$; it thus
makes ($G, k, \emptyset, \emptyset, \emptyset$).  We define the
measure as $m = k - |U|$.

Aside from the aforementioned condition on size, we delineate $6$
other conditions that hold throughout.  All the $7$ invariants are
summarized in Figure~\ref{fig:invariants}.  It should be easy to
verify the base case, ($G, k, \emptyset, \emptyset, \emptyset$),
satisfies all the conditions.
\begin{figure}[ht]
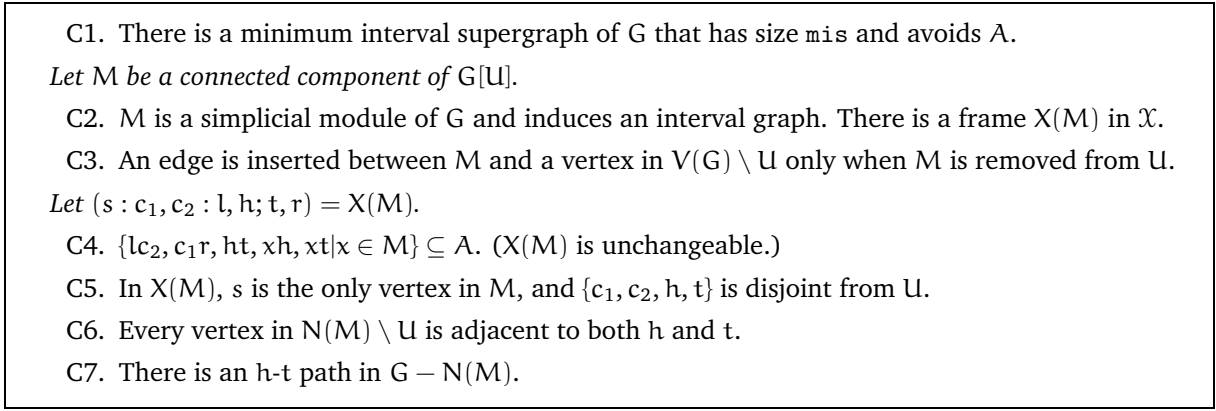

  \centering
\fbox{\parbox{0.95\linewidth}{
    \vspace*{-2mm} 
  \begin{enumerate}[ C1.]
    \itemindent6mm\itemsep0mm 

  \item\label{c:solution} There is a minimum interval supergraph of 
    $G$ that has size $\mathtt{mis}$ and avoids $A$.
  \item[] \hspace*{-9mm}{\em Let $M$ be a connected component of $G[U]$.}
  \item\label{c:module}  $M$ is a simplicial module of $G$ and induces an 
    interval graph.  There is a frame $X(M)$ in $\cal X$.
  \item\label{c:exit-u} An edge is inserted between $M$ and a vertex in 
    $V(G)\setminus U$ only when $M$ is removed from $U$.
  \item[] \hspace*{-9mm}{\em Let $(s: c_1, c_2:  l,h; t,r) = X(M)$.}
  \item\label{c:A} $\{l c_2, c_1 r, h t, x h, x t| x\in M\}\subseteq A$.  
    ($X(M)$ is unchangeable.)
  \item\label{c:frame}  In $X(M)$, $s$ is the only vertex in $M$, and $\{c_1, 
    c_2,h,t\}$ is disjoint from $U$.  
  \item\label{c:center} Every vertex in $N(M)\setminus U$ is adjacent to both
    $h$ and $t$.
  \item\label{c:base} There is an $h$-$t$ path in $G - N(M)$.
  \end{enumerate}
    \vspace*{-2mm}
}}
  \caption{Invariants during our algorithm}
  \label{fig:invariants}
\end{figure}

The remainder of this section is devoted to presenting the algorithm
and proving the following lemma.  

\begin{lemma}
  On input ($G, k, U, {\cal X}, A$) that satisfies all conditions
  C1-7, the algorithm runs in $6^{k - |U|}\cdot n^{{\cal O}(1)}$.
  Moreover, at the exit of this algorithm,
  \begin{itemize}
  \item[($\bot$)]: The algorithm returns a minimum interval supergraph
    of $G$, and $U=\emptyset$.
  \end{itemize}
\end{lemma}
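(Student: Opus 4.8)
The plan is to present the search-tree algorithm and the proof in one go: every non-leaf node applies one branching rule, the leaves run a polynomial-time ``cleanup'' subroutine, correctness is kept by carrying the invariants C1--C7 down every root-to-leaf path, and the time bound is read off the rules. First the rules. If $m=k-|U|<0$, prune. Otherwise search (in $O(n^5)$ time) for a hole or a small AW. A hole $H$ with $|H|>k+3$ forces ``NO'' on this branch (Lemma~\ref{lem:holes}); a shorter hole is branched on its minimal fillings, of which there are at most $6^{|H|-3}$ (a standard count; cf.\ Lemma~\ref{lem:holes}), each branch inserting one such set into $G$ and dropping $k$, hence $m$, by $|H|-3$. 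A small AW is branched on the at most six edges of Lemma~\ref{lem:6-enough}, each branch dropping $m$ by $1$. When $G$ is reduced, if $G-U$ is an interval graph we call the cleanup subroutine; otherwise, by Lemma~\ref{lem:min-at} choose a long AW $W$ of $G-U$ with shallow terminal $s\in ST(G)$ whose base is completely connected to $N(s)\setminus ST(G)$, let $M$ be the component of the subgraph induced by $ST(G)$ containing $s$ (a simplicial module of $G$ by Theorem~\ref{thm:shallow-is-module}), and $(s:c_1,c_2:l,h;t,r)$ its frame. Branch six ways as dictated by Lemma~\ref{lem:fill-long-aw}: the five ``frame'' moves insert one of $\{lc_2\}$, $\{c_1r\}$, $\{ht\}$, $\{xh:x\in M\}$, $\{xt:x\in M\}$ (an inserted $sh$ forces all of $M$ onto $h$ since $M$ is a module); the sixth ``commit'' move declares $X(M)$ unchangeable by adding it to $\cal X$, adding $\{lc_2,c_1r,ht\}\cup\{xh,xt:x\in M\}$ to $A$, and moving $M$ into $U$. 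Each of the six children lowers the nonnegative integer $m$ by at least $1$ (the commit move because $|U|$ grows by $|M|\ge1$), so the tree has at most $6^{k-|U|}$ leaves.

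For correctness I would first check that C1--C7 survive each rule, the root instance being conforming. The small-obstruction rules are routine; for the long-AW rules, the module-preservation statements (Theorem~\ref{thm:preserving-modules}, Corollary~\ref{lem:preserving-modules}) and the internal-structure statements (Theorem~\ref{thm:separable-modules}, Corollary~\ref{lem:separable-modules}) keep C2--C6 true after the commit move, while Corollary~\ref{lem:fill-long-aw-2} together with Lemma~\ref{lem:common-neighbor-of-base} sustains C5--C7; no AW straddling $M$ is overlooked because, by Lemma~\ref{lem:simple-module}, any AW meeting $M$ yet not inside $M$ has a vertex of $M$ as a terminal. For the answer itself I invoke the inductive scheme fixed just before the lemma: it suffices that whenever a node carries a feasible interval supergraph $\widehat G$ of size at most $\mathtt{mis}$, so does some child. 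This is immediate from Lemmas~\ref{lem:holes} and \ref{lem:6-enough} for holes and small AWs. For a long AW, $\widehat G$ must contain one of the $d+3$ edges of~(\ref{eq:E_W}) (Lemma~\ref{lem:fill-long-aw}); as $M$ is a module of $\widehat G$ (Theorem~\ref{thm:preserving-modules}), that edge steers $\widehat G$ into exactly one child --- the $d-1$ edges $sb_i$ all go to the commit child, where the ``moreover'' clause of Lemma~\ref{lem:fill-long-aw} ensures $\widehat G$ still avoids the enlarged $A$. Hence some unpruned leaf carries an interval graph of size $\mathtt{mis}$, and C1 holds there; a ``NO'' instance is pruned along every branch, as already observed.

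It remains to specify the cleanup subroutine and derive $(\bot)$. At an unpruned leaf $G$ is reduced, $G-U$ is an interval graph, and every component $M$ of $G[U]$ is a simplicial module with a fixed frame $X(M)=(s:c_1,c_2:l,h;t,r)$ obeying C4--C7. I would work on an interval representation of $G-U$: Corollary~\ref{lem:fill-long-aw-2} and C5--C7 say that in any representation of the sought supergraph the interval of each vertex of $M$ is threaded between those of $h(M)$ and $t(M)$, whose intervals are disjoint since $h(M)t(M)\in A$, so $M$ can be reinserted only by choosing one common neighbour $b_i$ on an $h(M)$--$t(M)$ path in $G-N(M)$ (which exists by C7) and paying the all-or-none block of edges this placement forces --- and by Corollary~\ref{lem:separable-modules} the internal representation of $M$ may be taken to be a minimum interval supergraph of $G[M]$. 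Slotting every module into the representation at its cheapest admissible spot produces a genuine interval representation of a supergraph of $G$, hence one with no new obstruction; a polynomial search over the $n^{O(1)}$ placements picks the overall cheapest. The subroutine inserts these edges, decreases $k$, and empties $U$; the resulting graph is a minimum interval supergraph and is what the algorithm returns, which is $(\bot)$. Each node does $n^{O(1)}$ work (obstruction search, bookkeeping, or cleanup), so the total time is $6^{k-|U|}\cdot n^{O(1)}$.

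The crux is the commit move and its partner, the cleanup. One must show that setting $M$ aside leaves C6 and C7 genuinely intact and that long AWs crossing $M$ need not be branched on at all, and then that ``all-or-none attachment'' plus ``threaded between $h$ and $t$'' are the \emph{only} constraints on reinserting a simplicial module, so that its cheapest placement --- together with the guarantee that it introduces no fresh obstruction --- can be computed in polynomial time. This is exactly where Theorems~\ref{thm:preserving-modules} and \ref{thm:separable-modules}, Lemma~\ref{lem:common-neighbor-of-base}, and the $ST(G)$ analysis of Section~\ref{sec:shallow} carry the argument.
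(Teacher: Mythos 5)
Your skeleton matches the paper's (reduce $G-U$ on holes and small AWs, a $6$-way branch on a long AW via Lemma~\ref{lem:fill-long-aw} with a ``commit'' branch that moves the shallow-terminal module into $U$ and its frame edges into $A$, and a final polynomial merge), but you are missing the part of the paper's Procedure~2 that handles the interaction between a \emph{new} shallow-terminal component $M'$ of $G-U$ and the \emph{already committed} components $M_1,\dots,M_p$ of $G[U]$ adjacent to it, and this is not a cosmetic omission. First, the paper's Case~1: after later insertions in $G-U$, a vertex $x$ adjacent to both $h(M_i)$ and $t(M_i)$ but not to $s(M_i)$ may appear; by Lemma~\ref{lem:common-neighbor-of-base} together with C\ref{c:solution} and C\ref{c:A} the edge set $x\times M_i$ is then forced, and the algorithm must insert it and release $M_i$ from $U$ (this keeps the accounting balanced since $k$ and $|U|$ drop together). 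Second, and more seriously, when no such $x$ exists the paper merges $M=M'\cup\bigcup_i M_i$, proves (Claim~\ref{claim-20}) that this \emph{union} is the simplicial module of $G$ -- $M'$ alone is in general not a module of $G$ once it touches $U$, so your commit move breaks invariant C\ref{c:module} -- and, since $G[M]$ now contains whole long AWs (the frames $X(M_i)$) and is not interval, it makes a \emph{recursive call} to compute a minimum interval supergraph of $G[M]$, justified by Theorem~\ref{thm:separable-modules}/Corollary~\ref{lem:separable-modules} and charged $6^{k_M-|U_M|}$ against the measure. Your proposal has no counterpart: you defer the completion of $G[M]$ to the cleanup (``the internal representation of $M$ may be taken to be a minimum interval supergraph of $G[M]$''), but computing that supergraph is itself an instance of the problem and cannot be done in the polynomial time you claim for the leaf; nor does your measure analysis account for it.

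A smaller but real issue is the cleanup itself. You describe the placement of a committed module as ``choosing one common neighbour $b_i$ on an $h(M)$--$t(M)$ path,'' whereas what must be chosen is a clique $h(M)$--$t(M)$ separator, realized in the paper as $K_\ell$ for a point $\ell$ strictly between $I_{h(M)}$ and $I_{t(M)}$ with $K_\ell\times M$ disjoint from $A$; the optimality argument (every vertex of $M$ must lie in a minimal $h$--$t$ separator of the supergraph, which is a clique containing a minimal $h$--$t$ separator of $G-U$, Corollary~\ref{lem:fill-long-aw-2} and Lemma~\ref{thm:interval-representation}) and the verification that the insertion creates no new obstruction (by exhibiting the projected representation) are exactly the content of Claim~\ref{claim-40}, which you flag as ``the crux'' but do not supply. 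As written, the proposal would fail at the first moment a shallow-terminal component of $G-U$ becomes adjacent to $U$, so the statement is not established.
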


An immediate implication of C1-7 is the following termination
condition, which enables us to prune many subtrees.
\begin{claim}
  If $k<|U|$ holds in a node of the search tree, then any feasible
  solution to ($G,A$) has size strictly larger than $||G|| + k$.
\end{claim}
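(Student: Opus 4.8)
The plan is to deduce from the invariants C1--7 that every feasible supergraph $\widehat G$ for $(G,A)$ already inserts at least $|U|$ edges, i.e.\ $\|\widehat G\|\ge\|G\|+|U|$; since the hypothesis $k<|U|$ forces $|U|\ge k+1$, this gives $\|\widehat G\|\ge\|G\|+k+1>\|G\|+k$, which is precisely the assertion. So it suffices to charge to each vertex of $U$ a distinct edge of $E(\widehat G)\setminus E(G)$.

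Fix a connected component $M$ of $G[U]$ with $X(M)=(s:c_1,c_2:l,h;t,r)$, and fix $x\in M$. First I would show that $G$ contains an AW $W_x$ whose shallow terminal is $x$ and whose frame is $(x:c_1,c_2:l,h;t,r)$. By C2, $M$ is a simplicial module of $G$, so by C5 (which keeps $c_1,c_2,h,t$ out of $U$, hence out of $M$) the vertex $x$ has exactly the same neighbours outside $M$ as $s$, and therefore inherits all the frame adjacencies and non-adjacencies that made $X(M)$ a valid frame; in particular $x\not\sim h$, $x\not\sim t$ and $h\not\sim t$, since the corresponding edges lie in $A$ by C4, which is disjoint from $E(G)$. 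By C7 there is an $h$-$t$ path $Q$ in $G-N(M)$; as $Q$ is disjoint from $N(M)$ and also from $M$ (a vertex of $Q$ adjacent to $M$ would lie in $N(M)$), $Q$ avoids $N[x]$. Using C6 to provide the centre--base adjacencies for the interior vertices of $Q$, an induced sub-path of $Q$ together with $\{x,c_1,c_2,l,r\}$ induces one of the graphs in Figure~\ref{fig:at}, which is the desired $W_x$.

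Now apply Lemma~\ref{lem:fill-long-aw} to $W_x$: the feasible supergraph $\widehat G$ must contain one of $\{l c_2,\,c_1 r,\,h t,\,x h,\,x t\}\cup\{x b_i\mid b_i\text{ an interior base vertex of }W_x\}$. By C4 the first five edges lie in $A$, and $\widehat G$ avoids $A$; hence $\widehat G$ contains $x b_i$ for some interior base vertex $b_i$, an edge I call $e_x$. Since $x b_i$ is a missed edge of $W_x$ we have $e_x\in E(\widehat G)\setminus E(G)$; moreover $b_i$ lies on $Q$, so $b_i\notin M\cup N(M)$, and as an interior base vertex of an AW whose shallow terminal is in $ST(G)$ it cannot itself be a shallow terminal, so $b_i\notin U$. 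Thus $e_x$ has exactly one endpoint in $U$, namely $x$.

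Collecting $\{e_x\mid x\in U\}$ now finishes the counting: two of these edges coincide only if they share their unique $U$-endpoint $x$, so the family consists of $|U|$ pairwise distinct edges of $E(\widehat G)\setminus E(G)$, whence $\|\widehat G\|\ge\|G\|+|U|>\|G\|+k$. The step I expect to be the real work is the construction of $W_x$ inside the \emph{current} graph $G$: one must verify that the non-edges required for $W_x$ to be a genuine asteroidal witness of $G$ (not merely of the original input) still survive, and that the $h$-$t$ path supplied by C7 is long enough for Lemma~\ref{lem:fill-long-aw}, with precisely the edge set named in C4, to apply --- rather than only the weaker Lemma~\ref{lem:6-enough}. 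This is exactly the point where invariants C4, C6 and C7 are needed in full strength.
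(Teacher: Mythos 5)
Your overall strategy is the same as the paper's: charge to every vertex $x\in U$ one inserted edge whose unique endpoint in $U$ is $x$, which yields at least $|U|>k$ inserted edges; the paper's own (one-line) proof does exactly this, citing Lemma~\ref{lem:fill-long-aw} and C\ref{c:base}. The genuine gap is in your construction of the witness $W_x$ inside the \emph{current} graph $G$ --- precisely the step you flag as ``the real work.'' You invoke C\ref{c:center} ``to provide the centre--base adjacencies for the interior vertices of $Q$,'' but C\ref{c:center} only says that vertices of $N(M)\setminus U$ are adjacent to $h$ and $t$; the interior vertices of the $h$-$t$ path $Q$ guaranteed by C\ref{c:base} are \emph{disjoint} from $N(M)$, and nothing in C1--7 makes them adjacent to $c_1,c_2$ (nor controls their adjacency to $l$ and $r$). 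Moreover this claim is a termination guard tested at arbitrary nodes, where $G-U$ need not be chordal or reduced, so you cannot recover the adjacencies by the chordality argument the paper uses elsewhere (in the proof of Claim~\ref{claim-20}). Hence an induced sub-path of $Q$ together with $\{x,c_1,c_2,l,r\}$ need not induce any graph of Figure~\ref{fig:at}, and Lemma~\ref{lem:fill-long-aw} with exactly the edge set $\{lc_2,c_1r,ht,xh,xt\}\cup\{xb_i\}$ covered by C\ref{c:A} cannot be applied as you do. The standard repair is to apply Lemma~\ref{lem:fill-long-aw} (or Corollary~\ref{lem:fill-long-aw-2}) to the long AW that existed when $X(M)$ was recorded: any feasible $\widehat G$ is an interval supergraph of that earlier graph as well, modularity (C\ref{c:module}) lets you replace $s$ by any $x\in M$ in that AW, and C\ref{c:exit-u} guarantees the forced edge $xb_i$ is still absent from the current $G$. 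Alternatively, one argues as in Claim~\ref{claim-40}: by Corollary~\ref{lem:fill-long-aw-2} the interval $I_x$ lies between $I_h$ and $I_t$, so by Lemma~\ref{thm:interval-representation} the clique $K_p$ at any point of $I_x$ is an $h$-$t$ separator of $\widehat G$ and must meet the path $Q\subseteq G-N(M)$ of C\ref{c:base}, producing an inserted edge at $x$ with its other endpoint outside $M\cup N(M)$. Either route uses C\ref{c:base} only as a lower-bound certificate, not as raw material for a new AW.

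A secondary weak point: to get distinctness you assert $b_i\notin U$ because an interior base vertex ``cannot itself be a shallow terminal.'' This silently identifies $U$ with $ST(G)$ of the current graph, which is not an invariant ($U$ consists of shallow terminals of earlier, different graphs). The clean argument is via C\ref{c:module}: if the other endpoint of the forced edge lay in some component $M''$ of $G[U]$, then $M''$ is a simplicial module not containing the whole witness, so by Lemmas~\ref{lem:module-exchangeable}, \ref{lem:only-terminals-are-simplicial} and \ref{lem:simple-module} that endpoint would have to be a terminal of the witness, contradicting its role as an interior base vertex (with the separator route, one instead checks directly that the endpoint lies outside $M\cup N(M)$ and rules out membership in another component of $U$ the same way). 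With these two repairs your counting argument goes through and coincides with the paper's intended proof.
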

\begin{proof}
  By Lemma~\ref{lem:fill-long-aw} and C\ref{c:base}, for every $s\in
  U$, we need to insert at least one edge between $s$ and
  $V(G)\setminus U$.  \renewcommand{\qedsymbol}{$\lrcorner$}
\end{proof}

The algorithm consists of two phases.  Phase~I partitions the graph
into two disjoint interval subgraphs, while Phase~II merges them.
Phase~I iteratively executes two procedures, until the required
condition is achieved.  Phase~II runs one single procedure and only
once.  We now describe each procedure, analyze its runtime, and verify
the inductive hypothesis.  (See Figure~\ref{fig:alg-completion} for an
outline of the algorithm)

\subsection{Phase~I}
The aim of this phase is to partition the graph into two disjoint
interval subgraphs $G[U]$ and $G - U$.  As $U$ always induces an
interval graph throughout (C\ref{c:module}), the focus shall be laid
on $G - U$.  Procedure 1 reduces $G - U$ by breaking all its holes and
small AWs.  Procedure 2 takes care of long AWs; since it works only on
reduced graphs, but the disposal of one long AW might introduce holes
and/or small AWs, between the disposal of two long AWs we need to
rerun procedure 1.

\paragraph{Procedure 1. Reducing $G - U$.}
This procedure repeatedly finds a small AW or a hole, and uses
Lemmas~\ref{lem:holes} or \ref{lem:6-enough}, respectively, to fill it.
On a hole $H$, we branch on inserting one of the at most $4^{|H|-3}$
minimal sets specified in Lemma~\ref{lem:holes}; in each branch, the
measure decreases by $|H| - 3$.  On a small AW, we branch on inserting
one of the $6$ edges specified in Lemma~\ref{lem:6-enough}; in each
branch, the measure decreases by $1$.  

C\ref{c:solution} is ensured by Lemma~\ref{lem:holes} and
\ref{lem:6-enough}.  Since all operations are conducted in $G - U$,
C2-7 are satisfied.

At the end of this procedure, if $G - U$ is already an interval graph,
then we are done with Phase~I and turn to Procedure~3; otherwise we go
to Procedure~2.
\paragraph{Procedure 2.  Coping with long AWs in $G - U$.}
The only entry to Procedure~2 is from Procedure~1; hence $G - U$ is a
reduced but non-interval graph.  Then $ST(G - U)$ is nonempty, and can
be computed with Lemma~\ref{lem:shallow-is-shallow}.  We take a
connected component $M'$ of the subgraph induced by $ST(G - U)$, and
let $\{M_1,\dots,M_p\}$ be all connected components in $G[U]$ that are
adjacent to $M'$.  Here $p=0$ if $M'$ is not adjacent to $U$.  

{\em Case 1.} If there is $1\le i\le p$ and a vertex $x$ such that $x$
is adjacent to both $h(M_i)$ and $t(M_i)$, but nonadjacent to
$s(M_i)$, then we insert edges $x\times M_i$, remove $M_i$ from $U$,
and remove $X(M_i)$ from $\cal X$.  Observe that $h,t\not\in U$
(C\ref{c:frame}), and a vertex in $U$ cannot be a common neighbor of
$h$ and $t$ (C\ref{c:module}).  Hence $x\not\in U$.

This step generates only one child; since both $k$ and $|U|$ decrease
by $|M_i|$, the measure remains unchanged.  This step can be checked
in polynomial time.  Moreover, noting that it decreases $k$, this step
can be executed at most $k$ times during the whole algorithm.

C\ref{c:solution} is ensured by
Lemma~\ref{lem:common-neighbor-of-base} (C\ref{c:A}).  As no vertex is
put into $U$, and $M_i$ is removed from $U$ while no edge incident to
$U\setminus M_i$ is inserted, C2-7 are verified.

{\em Case 2.} If the pair $M_i$ and $x$ above is found, then we end
this procedure.  Otherwise we proceed as follows.  Let $M = M' \cup
\bigcup^p_{i=1} M_i$.  Since for each $1\le i\le p$, the set $M_i$ is
a module of $G$ (C\ref{c:module}), $M_i\sim M'$ means every vertex in
$M_i$ is adjacent to $M'$.  In other words, $\bigcup^p_{i=1} M_i =
N(M') \cap U = M\cap U$.  The following proposition characterizes $M$.

\begin{claim}\label{claim-20}
  The set $M$ is a simplicial module of $G$.  For each $1\le i\le p$, it
  contains a long AW whose frame is $X(M_i)$.
\end{claim}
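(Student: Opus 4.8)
The plan is to establish the two assertions of Claim~\ref{claim-20} in sequence: first that $M = M' \cup \bigcup_{i=1}^p M_i$ is a simplicial module of $G$, and then that each $M_i$ still carries a long AW with frame $X(M_i)$.

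First I would argue $M$ is a module. By Theorem~\ref{thm:shallow-is-module} applied to the reduced graph $G-U$, the component $M'$ of $ST(G-U)$ is a simplicial module of $G-U$. I would then check that adding the $M_i$'s does not spoil the module property in $G$. The key point is that each $M_i$ is a connected module of $G$ (C\ref{c:module}), with $s(M_i)$ its unique vertex in the frame $X(M_i)$, and $M_i\sim M'$ exactly when every vertex of $M_i$ is adjacent to every vertex of $M'$ — this is because $M'$ lies entirely in $V(G)\setminus U$ and modularity of $M_i$ forces all-or-none adjacency. So $M\cap U = N(M')\cap U = \bigcup_{i=1}^p M_i$, and $M'$ is completely joined to each $M_i$ inside $M$. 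For a vertex $z\notin M$: if $z\in U$, then $z$ lies in some component $M_j$ with $j>p$, which is not adjacent to $M'$, hence not adjacent to any vertex reachable through $M'$; modularity of that $M_j$ together with modularity of each $M_i$ pins down a uniform adjacency. If $z\notin U$, then adjacency to $M'$ is uniform since $M'$ is a module of $G-U$, and adjacency to each $M_i$ is uniform since $M_i$ is a module of $G$; I need to tie these together using the fact that we are not in Case~1, i.e.\ no vertex $z\notin U$ is a common neighbor of $h(M_i)$ and $t(M_i)$ without being adjacent to $s(M_i)$ — combined with C\ref{c:center} (every vertex of $N(M_i)\setminus U$ is adjacent to both $h(M_i)$ and $t(M_i)$), this should force that any $z\notin U$ adjacent to some vertex of $M_i$ is adjacent to $s(M_i)$ and hence, by Lemma~\ref{lem:shallow-1}(full) or the module structure, to all of $M'$ as well, yielding uniform adjacency across all of $M$. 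To see $M$ is \emph{simplicial}, I would use that $N(M') $ induces a clique in $G-U$ (Lemma~\ref{lem:shallow-2}) and that $N(M_i)$ induces a clique in $G$ (C\ref{c:module}); since $N(M) = \bigl(N(M')\setminus U\bigr) \cup \bigcup_{i=1}^p N(M_i)$ and these cliques overlap appropriately through the common centers $h(M_i),t(M_i)$, the whole $N(M)$ is a clique. Connectedness of $M$ is immediate because $M'$ is connected and completely joined to each $M_i$, which is itself connected.

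For the second assertion, fix $1\le i\le p$. By C\ref{c:module} the frame $X(M_i)=(s:c_1,c_2:l,h;t,r)$ lives in $\cal X$ with $s=s(M_i)\in M_i$, and by C\ref{c:base} there is an $h$-$t$ path $P$ in $G - N(M_i)$; its endpoints and interior vertices avoid $N(M_i)\supseteq N(M') $ restrictions appropriately. The claim is that $s$ together with this base path and the centers $c_1,c_2$ still induces a long AW in $G$. The terminals $l,r$ are the base terminals; I would verify that $(l\,h\,P\,t\,r)$ avoids $N[s]$ using that $s$ is simplicial in $G-(M_i\setminus\{s\})$ and C\ref{c:base}, that $s$ is nonadjacent to $l,r$ and to the base (this is what C\ref{c:A} and C\ref{c:base} encode — the relevant missed edges $sh,st,sb_j$ are present), and that $c_1,c_2$ provide the short connections from $s$ to the base terminals. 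The subtlety is that $P$ may be long; but since $X(M_i)$ was recorded as an unchangeable frame of an honest long AW and the invariants C4-C7 have been maintained, the subgraph on $\{s,c_1,c_2,l,r\}\cup V(P)$ still matches the $\dag$- or $\ddag$-AW template of Figure~\ref{fig:at}, possibly after taking a shortest such $P$ to guarantee the base is an induced path. I expect the main obstacle to be precisely this bookkeeping: checking that no \emph{new} edges inserted in earlier steps (holes filled by Procedure~1, Case~1 insertions) have landed inside the frame or between $s$ and the base — but this is exactly what C\ref{c:A} (unchangeability of $X(M_i)$, $\{lc_2,c_1r,ht,sh,st\}\cup\{sb_j\}\subseteq A$ is never inserted) guarantees, so the argument reduces to invoking the invariants rather than re-deriving the structure.

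In summary, the proof is a verification argument: the module/simpliciality part is a gluing of the module structures of $M'$ (from Section~\ref{sec:shallow}) and the $M_i$ (from the invariants), with the ``not Case~1'' hypothesis plus C\ref{c:center} supplying the compatibility of adjacencies at the seams; the long-AW part is a direct appeal to C\ref{c:base} and the unchangeability clause C\ref{c:A}, noting that the recorded frame plus any shortest $h$-$t$ path in $G-N(M_i)$ reconstitutes an AW of the type catalogued in Figure~\ref{fig:at}. The one place demanding genuine care is ruling out that an inserted edge has compromised either the module seam or the frame, and that is controlled entirely by the monotone invariants.
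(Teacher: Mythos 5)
There is a genuine gap, and it sits exactly where the paper's proof does its real work. Your ``seam'' argument for the module property starts from the assertion that $M_i\sim M'$ forces every vertex of $M_i$ to be adjacent to every vertex of $M'$; modularity of $M_i$ only gives that each vertex of $M'$ is adjacent to all or none of $M_i$, and complete connection is in fact impossible in general, since $N(M_i)$ induces a clique (C\ref{c:module}) while $M'$ need not be one. The step you are missing is to show that $h(M_i)$ and $t(M_i)$ lie \emph{inside} $M'$: a vertex $v\in M'$ adjacent to $M_i$ is in $N(M_i)\setminus U$, hence adjacent to both $h$ and $t$ by C\ref{c:center}; since $M'$ is a simplicial module of $G-U$ (Theorem~\ref{thm:shallow-is-module}), $N_{G-U}(M')$ is a clique whose members are adjacent to all of $M'$, and as $h\not\sim t$ neither can sit in $N_{G-U}(M')$, so $h,t\in M'$. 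Only with this fact does ``not Case~1'' glue the two sides: a vertex $z\notin U$ adjacent to $M'$ is adjacent to $h,t\in M'$, hence to $s(M_i)$, hence to all of $M_i$; conversely $z\in N(M_i)\setminus U$ lies in the clique $N(M_i)$, which meets $M'$, so $z$ is adjacent to all of $M'$. Your appeal to C\ref{c:center} plus ``not Case~1'' gestures at these ingredients but never makes the connection, and your simpliciality argument (a union of cliques overlapping at the centers) is not valid as a general principle; what actually holds, once the module property is in place, is $N(M)=N_{G-U}(M')$, which is a clique.

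The second assertion is also not what you prove. The claim is that $M$ \emph{contains} a long AW whose frame is $X(M_i)$; you only argue that such an AW survives in $G$, and the containment in $M$ is the content that the algorithm later uses (for the recursive call on $G[M]$ and for treating $M$ as shallow terminals afterwards). The containment again needs the module structure: $l$ and $c_2$ are nonadjacent (the edge $lc_2$ is in $A$ and never inserted, C\ref{c:A}) and both are adjacent to $h\in M'$, so they cannot both --- indeed neither can --- lie in the clique $N(M)$, forcing $l,c_2\in M$, and symmetrically $c_1,r\in M$; and every inner vertex of a shortest $h$-$t$ path in $G-N[M_i]$ (C\ref{c:base}) is adjacent to both centers by chordality of $G-U$ and is thereby forced into $M'$ as well. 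None of this is in your sketch; invoking C\ref{c:A} to say the frame is ``intact'' controls which edges are absent inside the frame, but says nothing about where the frame and base sit relative to $M$, which is the point of the claim.
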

\begin{proof}
  According to Theorem~\ref{thm:shallow-is-module}, $M'$ is a
  simplicial module of $G - U$.  If $M'$ is not adjacent to $U$, then
  $M = M'$.  Using definition we can verify that $M$ is also a
  simplicial module of $G$, and the statement holds vacuously.  Hence
  we may assume otherwise.

  There is a frame $X(M_1)$ in $\cal X$ (C\ref{c:module}); let it be
  $(s:c_1,c_2: l,h;t,r)$ where $s\in M_1$.  By assumption, $M_1$ is
  adjacent to some vertex $v\in {M'}$, which is then adjacent to both
  $h$ and $t$ (C\ref{c:center}).  Noting that $h,t\not\in U$
  (C\ref{c:frame}), we must have $h,t\in N_{G-U}[M']$.  As $M'$ is a
  simplicial module of $G - U$, a vertex in $N_{G-U}(M')$ is adjacent
  to every other vertex in $N_{G-U}[M']$.  From the nonadjacency of
  $h$ and $t$, we can conclude $h,t\in M'$, and they are completely
  connected to $N_{G-U}(M')$.  Then $N_{G-U}(M')$ is completely
  connected to $s$, and also $M_1$ as $M_1$ is a module of $G$.

  On the other direction, $N(M_1)$ induces a clique (C\ref{c:module}).
  Consider any neighbor $v$ of $M_1$ in $V(G)\setminus M'$.  It is
  adjacent to $M'$ and not contained in $U$; as $M'$ is a module of
  $G-U$, it follows that $v$ is completely connected to $M'$.

  Arguments above also apply to $M_i$ for $2\le i\le p$.  As a result,
  if a vertex $x\in V(G)\setminus M$ is adjacent to $M'$, then it is
  adjacent to every vertex in $N(M')\cap U$, and vice versa.  This
  verifies $M$ is a module of $G$.  By definition, $M$ is connected,
  and $N(M) = N_{G - U}(M')$ as it is disjoint from $U$; therefore,
  $M$ is a simplicial module of $G$.

  We have already shown $h$ and $t$ are in $M'$.  We now consider
  other vertices of $X(M_1)$, i.e., $\{c_1, c_2, l, r\}$.  As $l$ and
  $c_2$ are nonadjacent and are both adjacent to some vertex of $M'$,
  which is a subset of $M$, they have to be in $M$.  A symmetric
  argument will imply $c_1, r\in M$.  Thus, $M$ contains every vertex
  of $X(M_1)$.  Finally let $P$ be the shortest $h$-$t$ path in $G -
  N[M_1]$ (C\ref{c:base}).  As $G - U$ is chordal, every inner vertex in $P$ is
  adjacent to both $c_1$ and $c_2$; hence also in $M'$. This completes
  the proof.
  \renewcommand{\qedsymbol}{$\lrcorner$}
\end{proof}

If the module $M$ does not induce an interval graph, then we make a
recursive call to fill it.  Specifically, we invoke our algorithm with
($G[M], k, U\cap M, \{X(M_i)|1\le i\le p\}, A\cap M^2$).  The
following claim captures the validity of this invocation:
\begin{claim}
  The tuple ($G[M], k, U\cap M, \{X(M_i)|1\le i\le p\}, A\cap M^2$)
  satisfies C1-7, where $\mathtt{mis}$ is set to be the size of
  minimum interval supergraphs of $G[M]$.
\end{claim}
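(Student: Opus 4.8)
The task is to verify that the recursive call inherits all seven invariants when we restrict to the module $M$. The plan is to go through C1–C7 one by one for the tuple $(G[M], k, U\cap M, \{X(M_i)\mid 1\le i\le p\}, A\cap M^2)$, using Claim~\ref{claim-20} as the backbone. For C\ref{c:solution}: since $M$ is a connected module of $G$ (Claim~\ref{claim-20}) and $\widehat G$ is a minimum interval supergraph of $G$ avoiding $A$ (the hypothesis C\ref{c:solution} for the parent), Corollary~\ref{lem:separable-modules} together with Theorem~\ref{thm:separable-modules} tell us that the slice $\widehat G[M]$ can be taken to be a minimum interval supergraph of $G[M]$; since $\widehat G$ avoids $A$ it certainly avoids $A\cap M^2$, and since all edges recorded in $\cal X$ via C\ref{c:A} lie inside $M^2$ for the frames $X(M_i)$ with $s(M_i)\in M_i\subseteq M$, the value $\mathtt{mis}$ we set (the size of a minimum interval supergraph of $G[M]$) is the correct one.

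**Key steps in order.** First, C\ref{c:module}: for each connected component $M_i$ of $G[U\cap M] = G[U]\cap M$ (note $U\cap M=\bigcup_{i=1}^p M_i$ by the displayed identity preceding Claim~\ref{claim-20}), $M_i$ is a simplicial module of $G$ by the parent's C\ref{c:module}, and hence a simplicial module of $G[M]$ since $M_i\subseteq M$ and taking induced subgraphs only shrinks the outside neighborhood — a set that was a module stays a module, and $N_{G[M]}(M_i)\subseteq N_G(M_i)$ still induces a clique; it still induces an interval graph; and $X(M_i)$ is in the passed-in set $\{X(M_i)\}$, which by Claim~\ref{claim-20} is a genuine frame of a long AW sitting inside $M$. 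Second, C\ref{c:exit-u}: this is a statement about the behavior of the algorithm on the recursive branch, and it holds by the inductive construction — the recursive invocation's own runs of Procedures 1–3 will obey it, so nothing to check at entry beyond noting $U\cap M$ is the starting shallow-terminal set. Third, C\ref{c:A}: $\{l c_2, c_1 r, h t, x h, x t\mid x\in M_i\}\subseteq A$ by the parent's C\ref{c:A}, and all these edges have both endpoints in $M$ (the frame vertices $l,h,t,r,c_1,c_2$ are in $M$ by Claim~\ref{claim-20}, and $x\in M_i\subseteq M$), so they survive in $A\cap M^2$. Fourth, C\ref{c:frame}: in $X(M_i)$, $s(M_i)$ is the only vertex of $M_i$, and $\{c_1,c_2,h,t\}$ is disjoint from $U$ by the parent's C\ref{c:frame}, hence a fortiori disjoint from $U\cap M$. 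Fifth, C\ref{c:center}: every vertex of $N_{G[M]}(M_i)\setminus(U\cap M)$ is a vertex of $N_G(M_i)\setminus U$, so it is adjacent to $h(M_i)$ and $t(M_i)$ by the parent's C\ref{c:center} — and crucially $h(M_i),t(M_i)\in M$ so this adjacency is internal to $G[M]$. Sixth, C\ref{c:base}: the parent's C\ref{c:base} gives an $h(M_i)$–$t(M_i)$ path in $G-N(M_i)$; the last paragraph of the proof of Claim~\ref{claim-20} shows the shortest such path has all inner vertices adjacent to both centers, hence in $M'\subseteq M$, so this path lies in $G[M]-N_{G[M]}(M_i)$.

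**Main obstacle.** The delicate point, and the one I would spend the most care on, is C\ref{c:center} and C\ref{c:base} in tandem: we need that restricting to $M$ does not destroy either the "every external neighbor of $M_i$ sees both $h$ and $t$" property or the existence of the $h$–$t$ connecting path, and both hinge on the containments $h(M_i),t(M_i)\in M$ and "inner vertices of the connecting path lie in $M$," which are exactly the facts established in the proof of Claim~\ref{claim-20} (that $h,t\in M'$ and that the chordality of $G-U$ forces the inner vertices of a shortest $h$–$t$ path to be adjacent to $c_1,c_2$ and thus in $M'$). So the proof is essentially a bookkeeping exercise that cashes in Claim~\ref{claim-20}; the only genuine content is invoking Corollary~\ref{lem:separable-modules}/Theorem~\ref{thm:separable-modules} for C\ref{c:solution}, and making sure the "module stays a module under induced subgraphs" reasoning is applied to both $M$ (inside $G$) and each $M_i$ (inside $G[M]$). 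I expect the write-up to be short, with C\ref{c:solution} the one line that needs an actual lemma citation and everything else following by monotonicity of neighborhoods under vertex deletion.
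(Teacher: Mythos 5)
Your bookkeeping for C\ref{c:module}--C\ref{c:base} is sound and matches the paper, which disposes of these by simply citing Claim~\ref{claim-20}. The gap is in C\ref{c:solution}, which is the one part of this claim that needs an actual argument. Theorem~\ref{thm:separable-modules} asserts that $\widehat G[M]$ is a minimum interval supergraph of $G[M]$ only under the hypothesis that $\widehat G[M]$ is \emph{not a clique}, and you never verify this hypothesis; if $\widehat G[M]$ happens to be complete, the slice need not be a minimum completion of $G[M]$ at all. Your attempted patch via Corollary~\ref{lem:separable-modules} does not close this: that corollary replaces $\widehat G$ by \emph{some other} minimum interval supergraph of $G$ containing a prescribed minimum completion of $G[M]$, with no guarantee that this new supergraph, or its slice, avoids $A$ (respectively $A\cap M^2$), which is exactly what C\ref{c:solution} for the child instance demands. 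So the sentence ``the slice $\widehat G[M]$ can be taken to be a minimum interval supergraph of $G[M]$'' is not justified by the results you cite.

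The missing observation, which is precisely how the paper argues, is a short case split on $A\cap M^2$. If $A\cap M^2=\emptyset$, then C\ref{c:solution} for the child holds vacuously, since any minimum interval supergraph of $G[M]$ trivially avoids the empty set. If $A\cap M^2\neq\emptyset$, then the parent's $\widehat G$, which avoids $A$, misses an edge with both ends in $M$, so $\widehat G[M]$ is not complete; now Theorem~\ref{thm:separable-modules} applies directly to $\widehat G$ (whose connected module $M$ persists from $G$ by Theorem~\ref{thm:preserving-modules}) and yields that $\widehat G[M]$ itself is a minimum interval supergraph of $G[M]$ avoiding $A\cap M^2$. No appeal to Corollary~\ref{lem:separable-modules} is needed at this point; in the paper that corollary is invoked only after the recursive call returns, to justify substituting the returned $\widehat G_M$ back into $G$.
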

\begin{proof}
  If $A\cap M^2 = \emptyset$, then C\ref{c:solution} holds vacuously.
  Otherwise, let $\widehat G$ be a minimum interval supergraph of $G$
  that avoids $A$ (C\ref{c:solution}).  The subgraph $\widehat G[M]$
  is then not complete; according to
  Theorem~\ref{thm:separable-modules}, $\widehat G[M]$ is a minimum
  interval supergraph of $G[M]$ and avoids $A\cap M^2$.  C2-7 follow
  from {Claim}~\ref{claim-20}.  \renewcommand{\qedsymbol}{$\lrcorner$}
\end{proof}

With an inductive reasoning, we may assume this invocation returns a
minimum interval supergraph of $G[M]$; let it be $\widehat G_M$.
According to Corollary~\ref{lem:separable-modules}, this is correct.
Also note that at the return $U$ is consequently disjoint from $M$
($\bot$).  We now analyze the runtime of this call.  Let $k_M$ be the
number of edges inserted during it, i.e., $k_M = ||\widehat G_M|| -
||G[M]||$, and $U_M = U\cap M$.  The measure decreases by $k_M -
|U_M|$, while this invocation takes $6^{k_M - |U_M|}$.  

In this juncture, the configuration becomes ($G', k' = k-k_M, U' =
U\setminus M, {\cal X}, A$), where $G'$ differs from $G$ only in edges
$M^2$, i.e., $G'[M] = \widehat G_M$.  It is easy to verify that C1-7
remain true.

By definition and Claim~\ref{claim-20}, $M$ is a set of shallow
terminals of $G - (U\setminus M)$, and it remains a set of shallow
terminals of $G' - U'$.  We can take any vertex $s\in M$, and use
Lemma~\ref{lem:min-at} to find a long AW with frame $\{s: c_1, c_2:
l,h;t,r\}$.  We branch into $6$ direction as follows:
\begin{itemize}
  \itemsep0mm
\item insert one of the $3$ edges $\{l c_2, c_1 r, h t\}$ and
  decrease $k$ by $1$;
\item insert either $h\times M$ or $t\times M$ and decrease $k$ by
  $|M|$; or
\item add $M$ into $U$, $X(M) = \{s: c_1, c_2: l, h; t, r\}$ into
  $\cal X$, and $\{l c_2, c_1 r, h t, x h, x t| x\in M\}$ to $A$.
\end{itemize}
In each of the $6$ directions, the measure decreased by at least $1$:
either $k$ decreases, or $|U|$ increases.

\begin{claim}
  In at least one branch C1-7 remain true.
\end{claim}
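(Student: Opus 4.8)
The plan is to verify each of the seven invariants C1--C7 separately, in the branch that matches the behaviour of the fixed minimum interval supergraph $\widehat G$ guaranteed by C\ref{c:solution} in the current node. Concretely, by hypothesis there is a minimum interval supergraph $\widehat G$ of (the current) $G$ of size $\mathtt{mis}$ that avoids $A$. Apply Lemma~\ref{lem:fill-long-aw} to the long AW with frame $(s:c_1,c_2:l,h;t,r)$ found via Lemma~\ref{lem:min-at}: $\widehat G$ must contain one of $\{lc_2,c_1r,ht,sh,st\}\cup\{sb_i\mid 1<i<d\}$. The proof splits according to which of these edges $\widehat G$ uses, and I would show that every possibility is ``covered'' by one of the six branches.

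The main case analysis runs as follows. If $\widehat G$ contains $lc_2$, $c_1r$, or $ht$, take the corresponding singleton branch: C\ref{c:solution} holds for $\widehat G$ in the child (the inserted edge is in $E(\widehat G)$, $A$ is unchanged, $U$ and $\cal X$ unchanged, so C2--C7 are untouched since $G-U$ changes only outside $U$ --- here I rely on $h,t\notin U$ from C\ref{c:frame}). If $\widehat G$ contains $sh$ or $sb_i$ or $st$ for \emph{some} $s\in M$, I want to argue that in fact $\widehat G$ contains $h\times M$ or $t\times M$ entirely: this is where Theorem~\ref{thm:preserving-modules} (preservation of the connected simplicial module $M$) enters --- since $M$ is a connected module of some intermediate graph between $G$ and $\widehat G$, Corollary~\ref{lem:preserving-modules} makes $M$ a module of $\widehat G$, so an edge from one vertex of $M$ to $h$ forces $h\times M\subseteq E(\widehat G)$; and by Corollary~\ref{lem:fill-long-aw-2} the interval $I_s$ lies between $I_h$ and $I_t$, which together with $N(M)\subseteq N_{\widehat G}(h)\cap N_{\widehat G}(t)$ (C\ref{c:center}) should let me replace an $sb_i$ edge by membership in $N_{\widehat G}(h)$ or $N_{\widehat G}(t)$ without increasing size, i.e. I may assume $\widehat G$ contains $h\times M$ or $t\times M$. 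Then the corresponding $|M|$-decreasing branch carries $\widehat G$ and C\ref{c:solution}; C3 is the place to check that inserting $h\times M$ (with $M$ leaving $U$) is permitted, which it is by the clause ``only when $M$ is removed from $U$.'' Finally, if $\widehat G$ contains \emph{none} of $\{lc_2,c_1r,ht,sh,st\}$ for any $s\in M$, take the last branch (add $M$ to $U$, $X(M)$ to $\cal X$, the five edge-types to $A$): C\ref{c:solution} survives because $\widehat G$ avoids exactly these edges; C\ref{c:module} is Claim~\ref{claim-20} ($M$ simplicial module, interval since $\widehat G_M=G'[M]$ is interval and is the induced subgraph); C\ref{c:A} holds by construction; C\ref{c:frame} because $s\in M$ and $\{c_1,c_2,h,t\}$ were shown in Claim~\ref{claim-20} to lie in $M'$ and hence outside the new $U$-component's ``$M\cap U$'' part --- more carefully, $h,t\in M'$ and $c_1,r,c_2,l\in M$ but the invariant only forbids them from being \emph{in $U$}, and they are not added to $U$; C\ref{c:center} and C\ref{c:base} come from Lemma~\ref{lem:min-at}'s ``base completely connected to $N(s)\setminus ST$'' together with the fact that the branch's long AW was chosen locally minimal.

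The step I expect to be the main obstacle is the $sb_i$ subcase: showing that an optimal $\widehat G$ which breaks the AT by inserting $sb_i$ for some inner base vertex can be assumed, without loss of size, to instead make $h$ (or $t$) adjacent to all of $M$. The honest argument has to go through an interval representation of $\widehat G$ and an exchange/rerouting of the intervals of $M$, much in the spirit of the proof of Theorem~\ref{thm:preserving-modules} and Theorem~\ref{thm:separable-modules}; I would set it up by noting that $I_s$ sits between $I_h$ and $I_t$ whenever the frame stayed unchangeable, and if instead $sh$ or $st$ or $ht$ or $lc_2$ or $c_1r$ got inserted we are already in one of the other five branches, so the only genuinely new situation is $I_s$ forced to overlap some $I_{b_i}$, and then sliding $I_s$ (and, by modularity, all of $I_M$) to one side so that it meets $I_h$ instead costs no more edges because $N(M)$ is already adjacent to $h$. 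I would isolate this as a small lemma-style paragraph before the six-way branching if the bookkeeping gets heavy; otherwise inline it here. Everything else is routine invariant-chasing, and I would present it tersely, one invariant per sentence, citing C\ref{c:module}--C\ref{c:base}, Claim~\ref{claim-20}, and Lemma~\ref{lem:min-at} as appropriate.
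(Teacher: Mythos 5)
Your overall case structure is right, and you correctly identified the one nontrivial ingredient: by Claim~\ref{claim-20} the set $M$ is a connected (simplicial) module of $G$, so Theorem~\ref{thm:preserving-modules}/Corollary~\ref{lem:preserving-modules} force an all-or-none behaviour of the edges between $M$ and $h$ (resp.\ $t$) in the minimum supergraph $\widehat G$ guaranteed by C\ref{c:solution}; this rules out the only dangerous scenario, namely $\widehat G$ containing some but not all of $h\times M$ or $t\times M$. That is exactly the paper's argument for the first five branches.

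The genuine problem is your treatment of the $sb_i$ subcase, which you yourself flag as the main obstacle. The exchange you propose --- ``slide $I_s$ (and all of $I_M$) towards $I_h$, so that $\widehat G$ may be assumed to contain $h\times M$ or $t\times M$ at no extra cost'' --- is both unnecessary and false in general. It is false because the cheapest way to accommodate $M$ between $I_h$ and $I_t$ is to attach it to a \emph{minimum} clique cut $K_\ell$ on the base, and that cut may lie strictly in the interior and be strictly smaller than any cut through $h$ or $t$ (e.g.\ when $h$ and $t$ sit in large cliques); forcing $h\times M$ or $t\times M$ can then strictly increase the size, contradicting minimality of $\widehat G$. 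Indeed, the sixth branch and the minimization over $\ell$ in Procedure~3 (Claim~\ref{claim-40}) exist precisely because this interior attachment can be the unique optimum. It is unnecessary because no modification of $\widehat G$ is needed in that case: if $\widehat G$ contains none of $\{lc_2, c_1r, ht\}$ and no edge between $M$ and $\{h,t\}$ (the all-or-none property reduces ``not all of $h\times M$'' to ``none of $h\times M$''), then $\widehat G$ avoids exactly the set $\{lc_2, c_1r, ht\}\cup\{xh,xt \mid x\in M\}$ that the last branch adds to $A$, so C\ref{c:solution} survives there with the very same $\widehat G$, whether or not $\widehat G$ contains $sb_i$; Lemma~\ref{lem:fill-long-aw} is only needed to know the termination claim, not to drive this case split. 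Dropping the exchange and routing that situation to the last branch makes your proof coincide with the paper's; as written, the step you lean on would not survive scrutiny.
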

\begin{proof}
  Let $\widehat G$ be a minimum interval supergraph of $G$ that avoids
  $A$ (C\ref{c:solution}).  If $\widehat G$ contains one of the edges
  $\{l c_2, c_1 r, h t\}$, or one of the sets $h\times M$ and
  $t\times M$, then C\ref{c:solution} remains true at this branch.
  The edge(s) are inserted in $G - U$ only, and hence C2-7 remain true.

  Hence we may assume $\widehat G$ contains none of the specified
  edges, which implies C\ref{c:solution}.  In this direction, $M$ is
  newly inserted to $U$ and they are nonadjacent.  By
  Claim~\ref{claim-20}, C2-7 are satisfied on $M$.  While no other
  vertices in $U$ are impacted; hence C2-7 remain true on them.
  \renewcommand{\qedsymbol}{$\lrcorner$}
\end{proof}

At the end of this procedure, if $G - U$ is already an interval graph,
then we are done with Phase~I and turn to Procedure~3 directly;
otherwise we come back to Procedure~1.

\subsection{Phase~II}
We are now at the second phase, where, a priori, both $G[U]$ and $G -
U$ are interval graphs.
\paragraph{Procedure 3.  Merging $U$ to $G - U$.}
We construct an interval representation $\cal I$ for $G - U$.  For
each connected component $M$ of $G[U]$, both $h(M)$ and $t(M)$ are in
$V(G)\setminus U$ (C\ref{c:frame}) and are nonadjacent.  Assume
without loss of generality, $I_{h(M)}$ goes left to $I_{t(M)}$; we
define $p_M = \mathtt{right}(h(M))$ and $q = \mathtt{left}(t(M))$.
Let $\ell$ be the point in $(p, q)$ that minimizes $|K_{\ell}|$ among
them satisfying that $K_{\ell}\times M$ is disjoint from $A$.  We
insert edges to completely connect $M$ and $K_{\ell}\setminus N(M)$,
and remove $M$ from $U$.  We stop at $U=\emptyset$.  If the total
number of edges inserted is larger than $k$, then we return ``NO'';
otherwise the interval graph obtained.
\begin{claim}\label{claim-40}
  The graph $\widehat G$ obtained as above is a minimum interval
  supergraph of $G$ that avoids $A$.
\end{claim}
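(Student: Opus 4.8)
The plan is to verify the three assertions in turn: (i) $\widehat G$ is an interval graph, (ii) $G\subseteq\widehat G$, and (iii) $\widehat G$ has size $\mathtt{mis}$, i.e.\ it is minimum and avoids $A$. For (i) I would argue that the construction is just a careful instance of the module-gluing already analyzed in Section~\ref{sec:module}: when we process a component $M$ of $G[U]$, the graph $G-U$ is an interval graph with a fixed representation $\cal I$, and by C\ref{c:frame} the vertices $h(M),t(M)$ lie in $V(G)\setminus U$ and are nonadjacent, so in $\cal I$ there is a genuine ``gap'' $(p_M,q_M)$ between $I_{h(M)}$ and $I_{t(M)}$. Choosing $\ell$ in that gap and making $M$ completely adjacent to $K_\ell\setminus N(M)$ is exactly the reverse of the projection operation: we can place the intervals of $M$ inside a tiny window $[\ell-\epsilon,\ell+\epsilon]$, using an interval representation of the interval graph $G[U\cap\,\text{this component}]$ (which is interval by C\ref{c:module}), projected into that window. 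Since distinct components of $G[U]$ get disjoint windows, and each window lies strictly inside the clique $K_\ell$ it is attached to, the resulting interval set is consistent, so $\widehat G$ is an interval graph. Claim~\ref{claim-20}-style reasoning (each $M$ is a simplicial module) is what guarantees that attaching $M$ to a clique does not disturb anything outside.

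For (ii), $G\subseteq\widehat G$ reduces to two facts: we never delete an edge (all operations only insert), and every original neighbor of $M$ is recovered. The latter holds because $N_G(M)\subseteq N(M)$ and, by C\ref{c:center}, every vertex of $N(M)\setminus U$ is adjacent to both $h(M)$ and $t(M)$, hence lies in $K_{p_M}\cap K_{q_M}$ and therefore in $K_\ell$ for the chosen $\ell$ (an interval containing both $\mathtt{right}(h(M))$ and $\mathtt{left}(t(M))$ contains the whole segment between them). So $N_G(M)\subseteq K_\ell$, and after completely connecting $M$ to $K_\ell\setminus N(M)$ we have $M$ completely connected to $K_\ell\supseteq N_G(M)$; combined with $G[M]=\widehat G_M\supseteq G[M]$ this gives $G\subseteq\widehat G$. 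One must also check that an $\ell$ with $K_\ell\times M$ disjoint from $A$ exists at all: by C\ref{c:base} there is an $h(M)$-$t(M)$ path in $G-N(M)$, so some point of $(p_M,q_M)$ has $K_\ell\cap N(M)=\emptyset$ aside from the path vertex, and C\ref{c:A} only forbids the specific pairs $\{lc_2,c_1r,ht,xh,xt\}$, none of which is of the form $M\times(K_\ell\setminus N(M))$ once $h,t\notin K_\ell$; I expect a short case check here.

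For (iii), the minimality argument is the crux and the main obstacle. I would invoke C\ref{c:solution}: there is a minimum interval supergraph $\widehat G^\ast$ of $G$ of size $\mathtt{mis}$ avoiding $A$. By Corollary~\ref{lem:preserving-modules} every component $M$ of $G[U]$ is a module of $\widehat G^\ast$, and by Lemma~\ref{lem:fill-long-aw}/C\ref{c:A} the frame $X(M)$ is unchangeable in $\widehat G^\ast$, so by Corollary~\ref{lem:fill-long-aw-2} the interval $I_{s(M)}$ lies between $I_{h(M)}$ and $I_{t(M)}$; hence in $\widehat G^\ast$ the whole module $M$ sits in the gap of $G-U$ and is completely connected to exactly $N_{\widehat G^\ast}(M)$, a clique that separates $h(M)$ from $t(M)$ and so (Lemma~\ref{thm:interval-representation}) contains some $K_{\ell'}$ with $\ell'\in(p_M,q_M)$. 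Therefore the number of edges $\widehat G^\ast$ spends between $M$ and $V\setminus M$ is $|M|\cdot|N_{\widehat G^\ast}(M)|\ge |M|\cdot\min_{\ell'}|K_{\ell'}\setminus M| $, while inside $M$ it spends at least $\|\widehat G_M\|-\|G[M]\|$ by Theorem~\ref{thm:separable-modules}. Our construction spends exactly $|M|\cdot|K_\ell\setminus N(M)|$ externally (note $|K_\ell\setminus N(M)|\le|K_\ell\setminus M|$ since $N(M)\cap K_\ell\supseteq$ nothing in $M$... more precisely $K_\ell\setminus N(M)$ and already-present $N_G(M)$-edges partition $K_\ell\setminus M$) for the chosen minimizing $\ell$, plus $\|\widehat G_M\|-\|G[M]\|$ inside. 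Summing over the disjoint components of $G[U]$ (their external edge-sets are disjoint because the modules are disjoint and so are their neighborhoods' contributions) shows $\|\widehat G\|\le\|\widehat G^\ast\|=\mathtt{mis}$, and since $\widehat G$ is itself an interval supergraph of $G$, equality holds and $\widehat G$ avoids $A$ by construction. The delicate point I expect to wrestle with is making the per-component accounting genuinely additive — i.e.\ that optimizing each $M$ independently against the \emph{fixed} representation $\cal I$ of $G-U$ is without loss of generality — which is exactly where one leans on all three of: modules are preserved, frames are unchangeable, and shallow terminals form disjoint simplicial modules.
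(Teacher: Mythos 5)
Your proposal is correct and takes essentially the same route as the paper's proof: the representation is built by projecting each component $M$ of $G[U]$ into a small window at the chosen point $\ell$, and minimality is argued per component by combining the unchangeable frame (Corollary~\ref{lem:fill-long-aw-2}), the separator characterization (Lemma~\ref{thm:interval-representation}), and modularity of $M$ to show that any $A$-avoiding interval supergraph must completely join $M$ to an admissible point-clique of $G-U$ of size at least $|K_\ell|$. Your explicit detour through a fixed optimum $\widehat G^\ast$ via C\ref{c:solution} and Corollary~\ref{lem:preserving-modules} is just a more spelled-out packaging of the paper's same per-component lower bound.
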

\begin{proof}
  To show it makes an interval graph, it suffices to build an interval
  representation as follows.  Without loss of generality, we may
  assume $\ell$'s selected for different connected components are
  different and avoid any endpoint of $\cal I$.  We build an interval
  representation for $G[M]$, and project it to
  $[\ell-\epsilon,\ell+\epsilon]$.  It is easy to verity this interval
  representation corresponds to $\widehat G$.

  To show it is minimum, we show the edges inserted to each connected
  component $M$ of $G[U]$ is minimum.  According to
  Corollary~\ref{lem:fill-long-aw-2} and
  Lemma~\ref{thm:interval-representation}, every vertex in $M$ is in
  some minimal $h(M)$-$t(M)$ separator $S$ in any interval supergraph
  of $G$.  This separator has to be a clique, and contain at least a
  minimal $h$-$t$ separator in the subgraph $G - U$.  Therefore, we
  need to find some $h$-$t$ separator $S'$ in the subgraph $G - U$,
  and completely connect it to $M$ (C~\ref{c:module}).  By the
  selection of $\ell$, we need to insert at least $M\times
  K_{\ell}\setminus N(M)$ edges to $M$.  This verifies that $\widehat
  G$ is minimum and completes the proof.
  \renewcommand{\qedsymbol}{$\lrcorner$}
\end{proof}

This procedure runs in polynomial time.  For each connected component
$M$ in $G[U]$, at least $|M|$ edges in $M\times V(G)\setminus U$ are
inserted (C~\ref{c:base}).  Therefore, the measure is non-increasing.

C\ref{c:solution} is ensured by {Claim}~\ref{claim-40}.  Since
$U=\emptyset$ after this step, C2-7 hold vacuously.  Exit condition
$\bot$ is also satisfied.

\section{Concluding remarks}
\label{sec:remark}
Theorem~\ref{thm:shallow-is-module} only holds in graphs free of
$3$-nets and $3$-tents (see Lemma~\ref{lem:common-neighbor-of-base}
and especially Figure~\ref{fig:common-neighbor}).  Hence in the
reduction step, we do away with them and make sure $d>3$ in the
remaining graph.  Interestingly, they are also the largest $\dag$- and
$\ddag$-AWs, respectively, that admit a $6$-way branching (see
Lemma~\ref{lem:6-enough} and especially Figure~\ref{fig:fill-edges}).
This indicates we might have reached the limit of basic bounded
search.  And to further lower the exponential factor in the time
complexity, new observations and approach are required.  We leave it
open for the existence of a sub-exponential algorithm and more
rivetingly, a polynomial kernel.


We present the algorithm with the bare essentials of modules.  One may
insert one more preprocessing step to our algorithm.  That is, we may
compute a modular decomposition for the graph, and then on insertion
of an edge between a module and others, we fill in a all-or-none
manner.  It might speed up the algorithm on graphs with many
nontrivial modules.  We leave this for later work on algorithmic
engineering.  As shown in Figure~\ref{fig:broken-modules}, there are
interval supergraphs that do break disconnected modules, so the
connected condition in Theorem~\ref{thm:preserving-modules} is
essential.  We remark that an alternative way is to replace ``for
any'' by ``there exists,'' as in the following statement.
\begin{lemma}
  For any module $M$ of graph $G$, there exists a minimum interval
  supergraph $\widehat G$ of $G$ such that $M$ is a module of
  $\widehat G$.
\end{lemma}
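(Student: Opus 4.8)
The plan is to deduce the statement from Theorem~\ref{thm:preserving-modules} for the connected case and, for a disconnected module, to perform a surgery on an interval representation of a single arbitrarily chosen minimum solution. If $G[M]$ is connected the claim is exactly Theorem~\ref{thm:preserving-modules}, so assume $G[M]$ has connected components $M_1,\dots,M_r$ with $r\ge 2$; each $M_i$ is then a connected module of $G$, and they all share the common external neighbourhood $N:=N_G(M)$. I fix a minimum interval supergraph $\widehat G$ of $G$ together with an interval representation $\cal I$ of it; if $M$ already is a module of $\widehat G$ there is nothing to prove, so assume it is not. The construction will leave the intervals of $V(G)\setminus M$ untouched, so that $\widehat G'-M=\widehat G-M$ and only edges incident to $M$ change, and it will produce a minimum $\widehat G'$ in which $M$ is a module.

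The first step is a dichotomy: \emph{if $\widehat G[M]$ is not a clique, then $N$ induces a clique in $\widehat G$}. Indeed, take $x,y\in M$ with $x\not\sim_{\widehat G}y$, say $\mathtt{right}(x)<\mathtt{left}(y)$ in $\cal I$; since every $u\in N$ is adjacent in $G\subseteq\widehat G$ to both $x$ and $y$ (as $M$ is a module of $G$), the interval $I_u$ must contain $[\mathtt{right}(x),\mathtt{left}(y)]$, so the intervals of $N$ pairwise intersect and $N$ is a clique. I then split according to whether $N$ is a clique of $\widehat G$, and in each case I note that every $v\in M$ is adjacent in $\widehat G$ to all of $N$.

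\emph{Case 1: $N$ is a clique of $\widehat G$.} Let $\widehat G_M$ be a minimum interval supergraph of $G[M]$, and let $v^*\in M$ minimise $|N_{\widehat G}(v)\setminus M|$. The family $\{I_{v^*}\}\cup\{I_u : u\in N\}$ is pairwise intersecting, hence has a common point $\ell$; after a perturbation $\ell$ is no endpoint of $\cal I$, and I take $\epsilon>0$ so small that $(\ell-\epsilon,\ell+\epsilon)$ contains no endpoint of $\cal I$. Form $\cal I'$ by deleting the intervals of $M$ and inserting, inside $[\ell-\epsilon,\ell+\epsilon]$, a copy of a representation of $\widehat G_M$ rescaled by the \emph{project} operation of Section~\ref{sec:module}. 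In the resulting interval graph $\widehat G'$ the subgraph on $V(G)\setminus M$ is unchanged, $\widehat G'[M]=\widehat G_M\supseteq G[M]$, and every vertex of $M$ now has external neighbourhood exactly $K_\ell\setminus M$ (no interval of $V(G)\setminus M$ has an endpoint inside the window), so $M$ is a module of $\widehat G'$; since $N\subseteq K_\ell$ we get $G\subseteq\widehat G'$. For the size: $\|\widehat G_M\|\le\|\widehat G[M]\|$ because $\widehat G[M]$ is itself an interval supergraph of $G[M]$, and $|M|\cdot|K_\ell\setminus M|\le|M|\cdot|N_{\widehat G}(v^*)\setminus M|\le\sum_{v\in M}|N_{\widehat G}(v)\setminus M|$ because $\ell\in I_{v^*}$ forces $K_\ell\setminus M\subseteq N_{\widehat G}(v^*)\setminus M$ and $v^*$ was chosen minimal; hence $\|\widehat G'\|\le\|\widehat G\|$ and $\widehat G'$ is a minimum interval supergraph of $G$ with $M$ a module.

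\emph{Case 2: $N$ is not a clique of $\widehat G$.} By the dichotomy $\widehat G[M]$ is a clique, so $R:=\bigcap_{v\in M}I_v$ is a nonempty interval. Form $\cal I'$ by replacing each $I_v$ ($v\in M$) with a distinct tiny perturbation of $R$, small enough to create no new incidence with an interval of $V(G)\setminus M$. Then $\widehat G'[M]$ is again a clique (hence $\supseteq G[M]$), each vertex of $M$ acquires the external neighbourhood $\{u\notin M:I_u\cap R\ne\emptyset\}$, so $M$ is a module of $\widehat G'$; each $u\in N$ has $I_u$ meeting every $I_v\supseteq R$ and therefore meeting $R$, so $G\subseteq\widehat G'$; and since $I_u\cap R\ne\emptyset$ implies $u\in N_{\widehat G}(v)$ for every $v\in M$, the number of $M$--$(V(G)\setminus M)$ edges does not grow, so again $\widehat G'$ is a minimum interval supergraph of $G$ with $M$ a module. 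The only delicate points are the dichotomy of the second paragraph and, in Case~1, that the single point $\ell$ can be placed inside $I_{v^*}$ and inside every interval of $N$ at once; both reduce to the fact that a pairwise-intersecting family of intervals has a common point, once one observes that every vertex of $M$ sees all of $N$ and that non-cliqueness of $\widehat G[M]$ forces $N$ to be a clique. Everything else is routine bookkeeping with the \emph{project} operation already in use in Section~\ref{sec:module}.
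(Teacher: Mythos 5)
Your proof is correct. Note that the paper itself never proves this lemma: it is stated in the concluding remarks (Section~\ref{sec:remark}) without proof, as an existential counterpart to Theorem~\ref{thm:preserving-modules} motivated by the example of Figure~\ref{fig:broken-modules}, so there is no official argument to compare against. What you supply is a natural extension of the paper's own representation-surgery technique: your dichotomy (``$\widehat G[M]$ not a clique forces $N_G(M)$ to be a clique in $\widehat G$'') is the same observation that opens the proof of Theorem~\ref{thm:separable-modules}, and your Case~1 is a variant of Case~2 in the proof of Theorem~\ref{thm:preserving-modules}, except that instead of minimizing $|K_\ell\setminus M|$ over the gap $[p,q]$ you take a Helly point of the intervals of $N\cup\{v^*\}$ with $v^*$ of minimum external degree; both choices yield the counting inequality, and yours has the advantage of working verbatim for disconnected $M$, which is exactly what the lemma needs (indeed your argument does not really use Theorem~\ref{thm:preserving-modules} or the decomposition into components $M_1,\dots,M_r$ at all). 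Two small points deserve one sentence each in a polished write-up: in Case~2 the ``tiny perturbation of $R$'' must be an outward perturbation (slight enlargements of $R$ with distinct endpoints, smaller than the positive distance from $R$ to any disjoint external interval) -- shrinking $R$ could disconnect $M$ from a vertex of $N$ whose interval meets $R$ only near an endpoint -- and the claim that $I_u\cap R\neq\emptyset$ for $u\in N$ should be justified by the Helly property applied to $\{I_u\}\cup\{I_v: v\in M\}$ rather than by ``$I_u$ meets every $I_v\supseteq R$,'' which is not by itself a valid inference; both repairs are immediate and do not affect the structure of your argument.
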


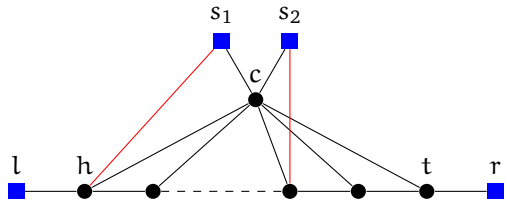
\begin{figure*}[ht]
  \centering
    \begin{tikzpicture}[scale=.45]
    \node [corner,label=above:$s_1$] (s1) at (-1,4.44) {};
    \node [corner,label=above:$s_2$] (s2) at (1,4.44) {};
    \node [corner,label=above:$l$] (a) at (-7, 0) {};
    \node [special,label=above:$h$] (a1) at (-5, 0) {};
    \node [vertex] (a2) at (-3, 0) {};
    \node [vertex] (bi) at (1, 0) {};
    \node [vertex] (b2) at (3, 0) {};
    \node [special,label=above:$t$] (b1) at (5, 0) {};
    \node [corner,label=above:$r$] (b) at (7, 0) {};
    \node [special,label=above:$c$] (c) at (0,2.7) {};

    \draw[] (a) -- (a1) -- (a2) (c) -- (b1);;
    \draw[] (b) -- (b1) -- (b2) -- (bi) -- (c) -- (b2);
    \draw[] (a1) -- (c) -- (a2) (s1) -- (c) -- (s2);
    \draw[dashed] (a2) -- (bi);
    \draw[red] (s1) -- (a1) (s2) -- (bi);
    \end{tikzpicture}
    \caption{A non-connected module $\{s_1, s_2\}$.}
    \label{fig:broken-modules}
\end{figure*}

In the intermediate step of our algorithm, we have some edges
forbidden.  One should not confuse this with the \textsc{interval
  sandwich} problem \cite{golumbic-93-temporal-reasoning,
  golumbic-95-graph-sandwich} (see also
\cite{fomin-12-subexponential-fill-in}).  The latter generalizes
\textsc{interval supergraph} by imposing an \emph{arbitrary} set $F$
of edges that are not allowed to be inserted.  The crucial difference
is that a minimum solution to an instance $(G,F)$ of \textsc{interval
  sandwich} is not necessarily a minimum supergraph of $G$.  This
explains why we use ``avoidable'' instead ``forbidden'' for our
algorithm.  The new challenge is surely that modules are not
necessarily preserved, and our algorithm will cease to work.  A
natural question is, can we adapt our algorithm to work on
\textsc{interval sandwich}?

\paragraph{Acknowledgment.}  I am grateful to Sylvain Guillemot for
his careful reading of an early version of this manuscript and helpful
comments.

{
\small

}

\appendix
\newpage
\section{Outline of the main algorithm}
\vspace{-6mm}
\begin{figure}[ht]
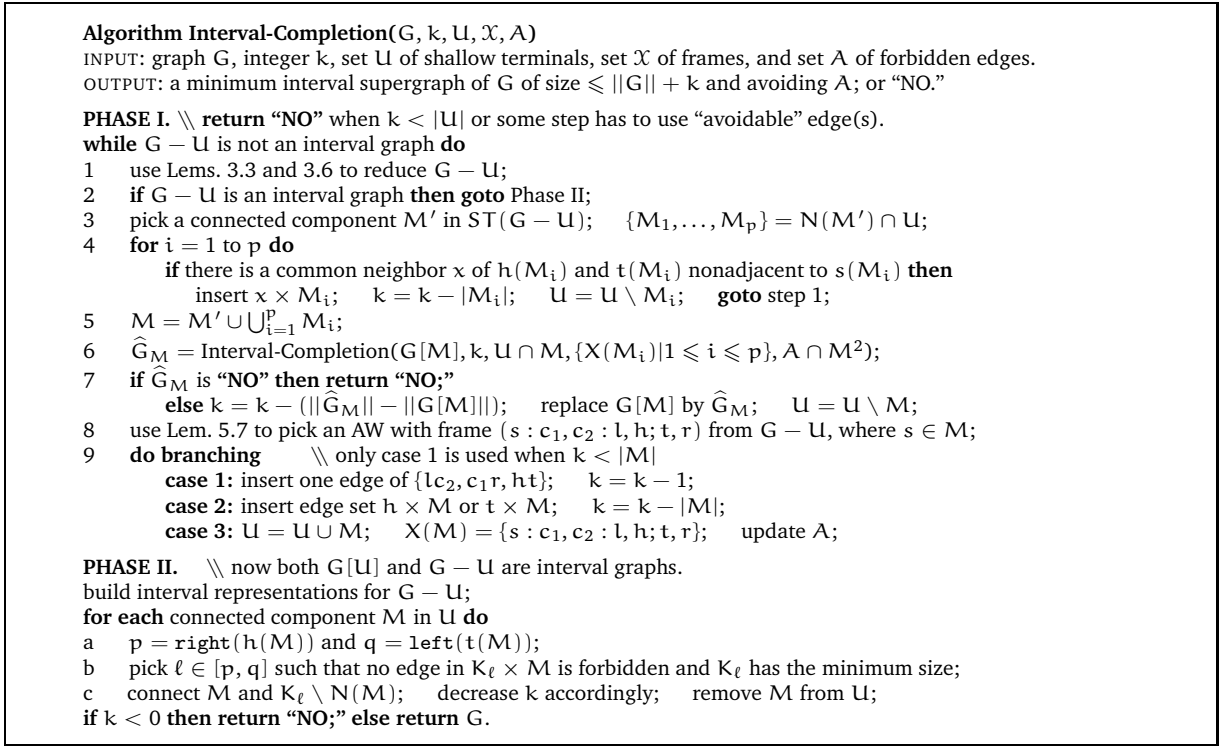

\setbox4=\vbox{\hsize28pc \noindent\strut
\begin{quote}
  \vspace*{-5mm} \footnotesize

  {\bf Algorithm Interval-Completion($G,k,U,\mathcal{X}, A$)}
  \\
  {\sc input}: graph $G$, integer $k$, set $U$ of shallow terminals,
  set $\cal X$ of frames, and set $A$ of forbidden edges.
  \\
  {\sc output}: a minimum interval supergraph of $G$ of size $\leq
  ||G||+k$ and avoiding $A$; or ``NO.''

  {\bf PHASE I.}  $\setminus\!\!\setminus$ {\bf return ``NO''} when
  $k<|U|$ {or} some step has to use ``avoidable'' edge(s).
  \\
  {\bf while} $G - U$ is not an interval graph {\bf do}
  \\
  1 \hspace*{3mm} use Lems.~\ref{lem:holes} and \ref{lem:6-enough} to
  reduce $G - U$;
  \\
  2 \hspace*{3mm} {\bf if} $G - U$ is an interval graph {\bf then
    goto} Phase~II;
  \\
  3 \hspace*{3mm} pick a connected component $M'$ in $ST(G - U)$;
  $\quad$ $\{M_1,\dots,M_p\} = N(M')\cap U$;
  \\
  4 \hspace*{3mm} {\bf for} $i=1$ to $p$ {\bf do}
  \\
  \hspace*{10mm} {\bf if} there is a common neighbor $x$ of $h(M_i)$
  and $t(M_i)$ nonadjacent to $s(M_i)$ {\bf then}
  \\
  \hspace*{14mm} insert $x\times M_i$; $\quad$ $k = k - |M_i|$;
  $\quad$ $U = U\setminus M_i$; $\quad$ {\bf goto} step 1;
  \\
  5 \hspace*{3mm} $M = M'\cup \bigcup^p_{i=1}M_i$;
  \\
  6 \hspace*{3mm} $\widehat G_M=$ Interval-Completion($G[M], k, U\cap
  M, \{X(M_i)|1\le i\le p\}, A\cap M^2$);
  \\
  7 \hspace*{3mm} {\bf if} $\widehat G_M$ is {\bf ``NO'' then return
    ``NO;''}
  \\
  \hspace*{10mm} {\bf else} $k = k - (||\widehat G_M|| - ||G[M]||)$;
  $\quad$ replace $G[M]$ by $\widehat G_M$; $\quad$ $U = U\setminus
  M$;
  \\
  8 \hspace*{3mm} use Lem.~\ref{lem:min-at} to pick an AW with frame
  $(s:c_1,c_2:l,h;t,r)$ from $G - U$, where $s\in M$;
  \\
  9 \hspace*{3mm} {\bf do branching} $\qquad$$\setminus\!\!\setminus$
  only case 1 is used when $k< |M|$
  \\
  \hspace*{10mm} {\bf case 1:} insert one edge of $\{l c_2, c_1 r, h
  t\}$; $\quad$ $k = k - 1$;
  \\
  \hspace*{10mm} {\bf case 2:} insert edge set $h\times M$ or
  $t\times M$; $\quad$ $k = k - |M|$;
  \\
  \hspace*{10mm} {\bf case 3:} $U = U\cup M$; $\quad$ $X(M) = \{s:
  c_1, c_2: l, h; t, r\}$; $\quad$ update $A$;

  {\bf PHASE~II.}  $\quad\setminus\!\!\setminus$ now both $G[U]$ and
  $G - U$ are interval graphs.
  \\
  build interval representations for $G - U$;
  \\
  {\bf for each} connected component $M$ in $U$ {\bf do}
  \\
  a \hspace*{3mm} $p = \mathtt{right}(h(M))$ and $q =
  \mathtt{left}(t(M))$;
  \\
  b \hspace*{3mm} pick $\ell\in [p,q]$ such that no edge in
  $K_{\ell}\times M$ is forbidden and $K_{\ell}$ has the minimum size;
  \\
  c \hspace*{3mm} connect $M$ and $K_{\ell}\setminus N(M)$; $\quad$
  decrease $k$ accordingly; $\quad$ remove $M$ from $U$;
  \\
  {\bf if} $k< 0$ {\bf then return ``NO;''} {\bf else return} $G$.

\end{quote} \vspace*{-6mm} \strut} $$\boxit{\box4}$$
\vspace*{-9mm}
\caption{Outline of algorithm for \textsc{interval completion}}
\label{fig:alg-completion}
\end{figure}
\vspace{-4mm}

\section{Omitted proofs}
\subsection{Proof of Lemma~\ref{lem:6-enough}}
\label{sec:3.5}
The numbers of edges that are eligible to break ATs witnessed by small
AWs in Figure~\ref{fig:fill-edges} are $12, 8, 9, 10, 6, 7,$ and $8$
respectively.  
Observe that small AWs always reveal symmetry property.
\begin{proof}[Proof of Lemma~\ref{lem:6-enough}]
\begin{figure*}[ht]
  \centering
  \begin{subfigure}[b]{0.18\textwidth}
    \centering
    \begin{tikzpicture}[scale=.17]
    \node [corner,label=right:$t_1$] (s) at (0,6.44) {};
    \node [corner,label=above:$t_2$] (a) at (-7, 0) {};
    \node [vertex,label=below:$v_2$] (a1) at (-4, 0) {};
    \node [special,label=below:$c$] (v) at (0, 0) {};
    \node [vertex,label=below:$v_3$] (b1) at (4, 0) {};
    \node [corner,label=above:$t_3$] (b) at (7, 0) {};
    \node [vertex,label=right:$v_1$] (c) at (0,3.5) {};
    \draw[] (a) -- (a1) -- (v) -- (b1) -- (b);
    \draw[] (v) -- (c) -- (s);

    \draw[fill edge] (s) to (v) (a) to (v);
    \draw[fill edge] (v) to (b) (a1) to (b1);
    \draw[fill edge] (a1) -- (c) -- (b1);
    \end{tikzpicture}
  \end{subfigure}%
  \quad
  \begin{subfigure}[b]{0.18\textwidth}
    \centering
    \begin{tikzpicture}[scale=.18]
    \node [corner,label=right:$t_1$] (s) at (0,3.5) {};
    \node [corner,label=above:$t_2$] (a) at (-7, 0) {};
    \node [vertex,label=above:$v_2$] (a1) at (-4, 0) {};
    \node [special,label=45:$c$] (v) at (0, 0) {};
    \node [vertex,label=above:$v_3$] (b1) at (4, 0) {};
    \node [corner,label=above:$t_3$] (b) at (7, 0) {};
    \node [vertex,label=below:$u$] (c) at (0,-3.8) {};
    \draw[] (a) -- (a1) -- (v) -- (b1) -- (b) -- (c) -- (a);
    \draw[] (b1) -- (c) -- (a1);
    \draw[] (v) -- (c) -- (s);

    \draw[fill edge] (s) to (c);
    \draw[fill edge] (b) to (v) to (a) (a1) to (b1);
    \end{tikzpicture}
  \end{subfigure}%
  \quad
  \begin{subfigure}[b]{0.18\textwidth}
    \centering
    \begin{tikzpicture}[scale=.22]
    \node [corner,label=right:$t_1$] (s) at (0,6) {};
    \node [corner,label=above:$t_2$] (a) at (-6, 0) {};
    \node [vertex,label=below:$v_2$] (a1) at (-2.5, 0) {};
    \node [vertex,label=below:$v_3$] (b1) at (2.5, 0) {};
    \node [corner,label=above:$t_3$] (b) at (6, 0) {};
    \node [vertex,label=below:$v_1$] (c) at (0,3.5) {};
    \draw[] (a) -- (a1) -- (b1) -- (b);
    \draw[] (c) -- (s);
    \draw[] (a1) -- (c) -- (b1);

    \draw[fill edge] (a) to (b1) (a1) to (b);
    \draw[fill edge] (a1) -- (s) -- (b1);
    \draw[fill edge] (a) -- (c) -- (b);
    \end{tikzpicture}
  \end{subfigure}%
  \quad
  \begin{subfigure}[b]{0.18\textwidth}
    \centering
    \begin{tikzpicture}[scale=.22]
    \node [corner,label=right:$t_1$] (s) at (0,6) {};
    \node [corner,label=above:$t_2$] (a) at (-6, 0) {};
    \node [vertex,label=below:$v_1$] (a1) at (0, 0) {};
    \node [corner,label=above:$t_3$] (b) at (6, 0) {};
    \node [vertex,label=135:$v_3$] (c1) at (-3,3) {};
    \node [vertex,label=45:$v_2$] (c2) at (3,3) {};
    \draw[] (a) -- (a1) -- (b) -- (c2) -- (s) -- (c1) -- (a);
    \draw[] (c1) -- (c2) -- (a1) -- (c1);

    \draw[fill edge] (a1) -- (s) (b) -- (c1) (a) -- (c2);
    \draw[white, bend right] (-2,0)  to (b);
    \end{tikzpicture}
  \end{subfigure}%
  \quad
  \begin{subfigure}[b]{0.18\textwidth}
    \centering
    \begin{tikzpicture}[scale=.22]
    \node [corner,label=right:$t_1$] (s) at (0,6) {};
    \node [corner,label=above:$t_2$] (a) at (-6, 0) {};
    \node [vertex,label=below:$v_0$] (a1) at (-2, 0) {};
    \node [vertex,label=below:$v_1$] (b1) at (2, 0) {};
    \node [corner,label=above:$t_3$] (b) at (6, 0) {};
    \node [vertex,label=135:$v_3$] (c1) at (-3,3) {};
    \node [vertex,label=45:$v_2$] (c2) at (3,3) {};
    \draw[] (a) -- (a1) -- (b) -- (c2) -- (s) -- (c1) -- (a);
    \draw[] (c1) -- (c2) -- (a1) -- (c1);

    \draw[white, bend right] (-2,0)  to (b); 
    \draw[fill edge] (b1) -- (s);
    \draw[fill edge] (s) -- (a1) (b) -- (c1) (a) -- (c2);
    \draw (c1) -- (b1) -- (c2);
  \end{tikzpicture}
  \end{subfigure}
  \caption{A minimum interval supergraph must contain a dashed edges.}
  \label{fig:fill-edges-labeled}
\end{figure*}
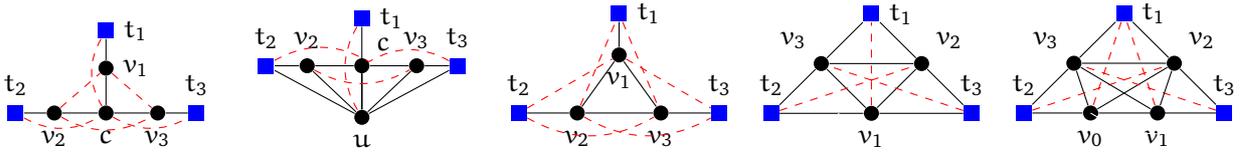

An edge must be inserted between one terminal and the defining path
connecting other terminals.  We redraw the graphs in
Figure~\ref{fig:fill-edges-labeled} and label the vertices for the
easiness of references.

\vspace*{1mm}
\noindent{\bf Long claw.} 
The insertion of edge $t_1 v_2$ will introduce a $4$-hole $(t_1 v_1 c
v_2 t_1)$.  To break this hole we will need at least one of $t_1 c$ and
$v_1 v_2$, which are both included.  Symmetrical arguments apply to
all of $\{t_1 v_3, t_2 v_1, t_2 v_3, t_3 v_1, t_3 v_2\}$.  The
insertion of $t_1 t_2$ will introduce a $5$-hole $(t_1 v_1 c v_2 t_2
t_1)$.  All the five edges required to break this hole is either
included or previously discussed.  Symmetric arguments apply to $t_2
t_3$ and $t_3 t_1$.

\vspace*{1mm}
\noindent{\bf Whipping top.}
The insertion of edge $t_1 t_2$ will introduce a $4$-hole $(t_1 u c
t_2 t_1)$.  To break this hole we will need at least one of $t_1 u$ and
$t_2 c$, which are both included.  A symmetric argument applies to
$t_1 t_3$.  The insertion of edge $t_2 v_3$ will introduce a $4$-hole
$(t_2 v_2 c v_3 t_2)$.  To break this hole we will need at least one of
$t_2 c$ and $v_2 v_3$, which are both included.  A symmetric argument
applies to $t_3 v_2$.  The insertion of $t_2 t_3$ will introduce a
$5$-hole $(t_2 v_2 c v_3 t_3 t_2)$.  To break this hole we will need at
least one of $t_2 c$, $t_3 c$, and $v_2 v_3$, which are all included.

\vspace*{1mm}
\noindent{\bf 2-Net.} The insertion of edge $t_1 t_2$ will introduce a $4$-hole
$(t_1 v_1 v_2 t_2 t_1)$.  To break this hole we will need at least one
of $t_1 v_2$ and $t_2 v_1$, which are both included.  Symmetric
arguments apply to the other two edges $t_2 t_3$ and $t_3 t_1$ that
are not included.

\vspace*{1mm}
\noindent{\bf 1-Tent.} The insertion of edge $t_1 t_2$ will introduce a
$4$-hole $(t_1 t_2 v_1 v_2 t_1)$.  To break this hole we will need at
least one of $t_1 v_1$ and $t_2 v_2$, which are both included.
Symmetric arguments apply to the other two edges $t_2 t_3$ and $t_3
t_1$ that are not included.

\vspace*{1mm}
\noindent{\bf 2-Tent.} The insertion of edge $t_1 t_2$ or $t_1 t_3$
has the same affect for tents.  The insertion of edge $t_2 t_3$ will
introduce a $4$-hole $(t_2 v_0 v_1 t_3 t_2)$.  To break this hole we
will need at least one of $t_2 v_1$ and $t_3 v_0$.  The insertion of
the former makes $\{t_1,t_2,t_3,v_1,v_2,v_3\}$ a tent.  As shown
above, we need at least one of $\{t_1 v_1, t_2 v_2, t_3 v_3\}$, which
are all included.  A symmetric argument applies to $t_3 v_0$.

Arguments for $3$-nets and $3$-tents are word-for-word copy of that
for long AWs in {Lemma}~\ref{lem:fill-long-aw}.
\end{proof}

\subsection{Proof of Lemma~\ref{lem:shallow-1}}
\label{sec:shallow-1}
    \begin{table*}[ht]
      \footnotesize
      \begin{center}
        \begin{tabular}{r|lccc}
          \hline
          & {} & {$q = p + 1$} & {$q = p + 2$} & {$q > p + 2$} \\
          \hline
          \multirow{8}{*}{$\dag$-AW} &
          $p = 0$  &  $4$-hole &  tent & $\ddag$-AW  \\
          &  &  $(x c b_1 l x)^{*}$ &  $\{l, x, s, c, 
          b_2, b_1\}$ & $(s:x,c: l,b_1\dots b_{q-1},b_{q})^{**}$ 
          \\
          & & & & \\
          & $p = 1$  &  whipping top &  net &
          $\dag$-AW  \\
          &  &  $\{l,b_1,x,
          s,c,b_3,b_2\}$$^{***}$ &  $\{l,b_1,s,x,b_3,b_2\}$ &
          $(s:x:l,b_1\dots b_{q-1},b_{q})^{**}$ \\
          & & & & \\
          & $p > 1$  &  long-claw$^{1}$ &  net& $\dag$-AW 
          \\
          &  &  $\{b_{p-2}, b_{p-1},b_p, s, x, 
          b_{p+2}, b_{p+1}\}$ & $\{b_{p-1},b_p,s,x, b_{q}, b_{q-1}\}$ 
          &  $(s:x:b_{p-1},b_p\dots b_{q-1},b_{q})$
          \\
          \hline
          \multirow{8}{*}{$\ddag$-AW} &
          $p = 0$  &  $4$-hole &  tent & $\ddag$-AW  \\
          & &  $(x c_2 b_1 l x)^{*}$ &  $\{l, x, s, c_2, 
          b_2, b_1\}$ & $(s:x,c_2: l,b_1\dots b_{q-1},b_{q})^{**}$ \\
          &&&&\\
          & $p = 1$  &  whipping top &  net &
          $\dag$-AW  \\
          & &  $\{l,b_1,x,s,c_2,b_3,b_2\}^{***}$ &  
          $\{l,b_1,s,x,b_3,b_2\}$ & $(s:x:l,b_1\dots b_{q-1},b_{q})^{**}$ \\
          &&&&\\          
          & $p > 1$  &  long-claw &  net& $\dag$-AW \\
          & & $\{b_{p-2}, b_{p-1},b_p, s, x, 
          b_{p+2}, b_{p+1}\}$ & $\{b_{p-1},b_p,s,x, b_{q}, b_{q-1}\}$ 
          &  $(s:x:b_{p-1},b_p\dots b_{q-1},b_{q})$\\
          \hline
        \end{tabular}
      \end{center}

      \hspace*{20mm}{*} : The vertex $x$ is in  category ``none.''
      \\
      \hspace*{18.5mm}{**} : The vertex $x$ would be in  category ``full'' if 
      $q = d + 1$.
      \\
      \hspace*{17mm}{***} : A $4$-hole $(x b_p b_{p+1} b_{p+2} x)$ 
      would be introduced if $x \sim b_{p+2}$;

      \caption{Structures used in the proof of Lemma~\ref{lem:shallow-1}
        (category ``partial'' )}
      \label{fig:neighbor-of-shallow-terminal}
    \end{table*}
\begin{proof}[Proof of Lemma~\ref{lem:shallow-1}]
  Suppose to the contrary of statement (1), and without loss of
  generality, $x \not\sim c_2$.  If $x \sim b_i$ for some $1 \leq i
  \leq d$ then there is a $4$-hole $(x s c_2 b_i x)$.  Hence we may
  assume $x \not\sim B$.  There is
    \begin{inparaitem}
    \item a $5$-hole $(x s c b_1 l x)$ or $(x s c b_d r x)$ if $W$ is
      a $\dag$-AW, and $x\sim l$ or $x\sim r$, respectively; 
    \item a $5$-hole $(x s c_2 b_1 l x)$ or $4$-hole $(x s c_2 r x)$
      if $W$ is a $\ddag$-AW, and $x\sim l$ or $x\sim r$, respectively; 
    \item a long-claw $\{x,s,c,b_1,l,b_d,r\}$ if $W$ is a $\dag$-AW
      and $x\not\sim l,r$;
    \item a net $\{x,s,l,c_1,r,c_2\}$ if $W$ is a $\ddag$-AW and $x
      \not\sim c_1,l,r$; or
    \item a whipping top $\{r,c_2,s,x,c_1,l,b_1\}$ centered at $c_2$
      if $W$ is a $\ddag$-AW and $x\not\sim l,r$, but $x \sim c_1$.
    \end{inparaitem}
    Neither of these cases is possible, and thus statement (1) is
    proved.

    For statement (2), let us handle category ``none'' first.  Note
    that $x$, nonadjacent to $B$, cannot be a center of $W$.  If
    $x\sim l$, then there is a 4-hole $(x c_2 b_1 l x)$.  A
    symmetrical argument will rule out $x\sim r$.  Now that $x$ is
    adjacent to the center(s) but neither base terminals nor base
    vertices of $W$, then $(x:c_1,c_2:l,B,r)$ makes another AW.

    Assume now that $x$ is in category ``full.''  Suppose the contrary
    and $x \not\sim v$ for some $v\in N(s)\setminus \{x\}$.  We have
    already proved in statement (1) that $v$ and $x$ are adjacent to
    the center(s) of $W$ (different from them).  In particular, if one
    of $v$ and $x$ is a center, then they are adjacent. Therefore, we
    can assume that $v$ and $x$ are not centers.  If $v\sim b_i$ for
    some $1\leq i \leq d$, then there is a $4$-hole $(x s v b_i x)$.
    Otherwise, $v\not\sim B$, and it is in category ``none.''  Let
    $W'$ be the AW obtained by replacing $s$ in $W$ by $v$; then by
    Lemma~\ref{lem:common-neighbor-of-base}, $x\not\sim v$ will imply
    the existence of small AW, which is impossible.

    Finally, assume that $x$ is in category ``partial,'' that is, $x
    \sim B$, but $x\not\sim b_i$ for some $1\le i \le d$. In this
    case, we construct the claimed AW as follows.  As the case $x
    \not\sim l$ but $x \sim r$ is symmetric to $x \sim l$ but $x
    \not\sim r$; on the other hand, $x$ is adjacent to both $l$ and
    $r$ will put it to category ``full.''  Hence in the following we
    may assume that $x \not\sim r$.  Let $p$ be the smallest index
    such that $x \sim b_p$, and $q$ be the smallest index such that $p
    < q \leq d+1$ and $x \not\sim b_q$ ($q$ exists by assumptions).
    See Table~\ref{fig:neighbor-of-shallow-terminal} for the
    structures for $\dag$-AW and $\ddag$-AW respectively.  As the
    graph is reduced and contains no small {forbidden induced
      subgraph}, it is immediate from
    Table~\ref{fig:neighbor-of-shallow-terminal} that the case $q > p
    + 2$ holds; otherwise there always exists a small {forbidden
      induced subgraph}.  This completes the categorization of
    vertices in $N(s) \setminus T$.
\end{proof}

\subsection{Proof of Lemma~\ref{lem:shallow-2}}
\label{sec:shallow-2}
In this proof we will use $W = (s:c_1,c_2:l,B,r)$ to denote an AW.
\begin{proof}[Proof of Lemma~\ref{lem:shallow-2}]
  Let $x$ and $y$ be any pair of vertices such that $x \in C$ and $y
  \in M$.  Since $G[M]$ is connected by definition, we can find a
  shortest path $P = (v_0 \dots v_p)$ between $v_0=s$ and $v_p=y$ in
  $G[M]$.  We claim that $P \not\sim B$.  Suppose the contrary and let
  $q$ be the smallest index satisfying $v_q\sim B$; note that $q\ge
  1$. This means that every $v_i$ with $i<q$ is in category ``none''
  of Lemma~\ref{lem:shallow-1}(2). Therefore, applying
  Lemma~\ref{lem:shallow-1}(1,2) on $v_{i}$ and AW
  $(v_{i-1}:c_1,c_2:l,B,r)$ inductively for $i = 1,\dots,q-1$, we
  conclude that there is an AW $W_i = (v_i:c_1,c_2:l,B,r)$ for each $i
  < q$.  One more application of Lemma~\ref{lem:shallow-1}(1) shows
  that $v_q$ is adjacent to the center(s) of $W_{q-1}$ as well.  If
  $v_q$ is adjacent to all vertices of $B$, i.e., in the category
  ``full'' with respect to every $W_i$, then
  Lemma~\ref{lem:shallow-1}(2) on $v_q$ and $W_{q-1}$ implies that
  $v_q$ is adjacent to $v_{q-2}\in N(v_{q-1})$, contradicting the
  assumption that $P$ is shortest. Otherwise (the category
  ``partial''), according to Lemma~\ref{lem:shallow-1}(2), there is
  another AW $W'= (v_{q-1}:c'_1,c'_2:l',B',r')$, where $B' \subset B$,
  and $v_q \in \{c'_1,c'_2\}$.  Now an application of
  Lemma~\ref{lem:shallow-1}(1) on $v_q$ and $W'$ shows that $v_q$ is
  adjacent to $v_{q-2}\in N(v_{q-1})$, again a contradiction.  From
  these contradictions we can conclude $P\not\sim B$.  Applying
  Lemma~\ref{lem:shallow-1} inductively on $v_{i+1}$ and
  $W_i=(v_i:c_1,c_2:l,B,r)$, we get an AW with the same centers for
  every $0 \le i \le p$.

    As $x$ is adjacent to both $s$ and $B$, it cannot be in category
    ``none'' with respect to $W$.  We now separate the discussion
    based on whether $x$ is in the category ``full'' or ``partial.''
    Suppose first that $x$ is in the category ``full''; as $x\in
    N(s)$, Lemma~\ref{lem:shallow-1}(1) implies that $x\sim
    c_1,c_2$. Then applying Lemma~\ref{lem:shallow-1}(2) inductively,
    where $i = 1,\dots,p$, on vertex $x$ and $W_{i-1}$ we get that $x
    \sim v_i$ for every $i \leq p$; in particular, $x \sim v_p$ ($=
    y$).  Suppose now that $x$ is in in category ``partial.'' Then by
    Lemma~\ref{lem:shallow-1}(2), there is an AW $W'_0 =
    (v_0:c'_1,c'_2:l',B',r')$, where $B' \subset B$, and $x\in
    \{c'_1,c'_2\}$. As $P\not\sim B$, we have that $v_i\not\sim B'$
    for any $0\le i\le p$, i.e., $v_i$ is in category ``none'' with
    respect to $W'_0$. Therefore, by an inductive application of
    Lemma~\ref{lem:shallow-1}(2) on the vertex $v_i$ and AW
    $W'_{i-1}=(v_{i-1}:c'_1,c'_2:l',B',r')$ for $i = 1,\dots,p$, we
    conclude that there is an AW $W'_p = (v_p:c'_1,c'_2:l',B',r')$,
    from which $x\sim y$ follows immediately.

    Now we show the second assertion.  For any pair of vertices $x$
    and $y$ in $C$, we apply Lemma~\ref{lem:shallow-1} on $x$ and $W$;
    by definition, $x\sim B$ and thus cannot be in category ``none.''
    If $x$ is in category ``full'' with respect to $W$, then
    Lemma~\ref{lem:shallow-1}(2) implies that $x$ is adjacent to $y\in
    N(s)$.  Otherwise, if $x$ is in category ``partial'' with respect
    to $W$, then Lemma~\ref{lem:shallow-1}(2) implies that there is an
    AW $W' = (s:c'_1,c'_2:l',B',r')$ where $B' \subset B$ and
    $x\in\{c'_1,c'_2\}$. Therefore, by Lemma~\ref{lem:shallow-1}(1) on
    the vertex $y\in N(s)$ and $W'$, we get that $y\sim c'_1,c'_2$ and
    hence $x\sim y$.
\end{proof}

\section{Omitted figures}

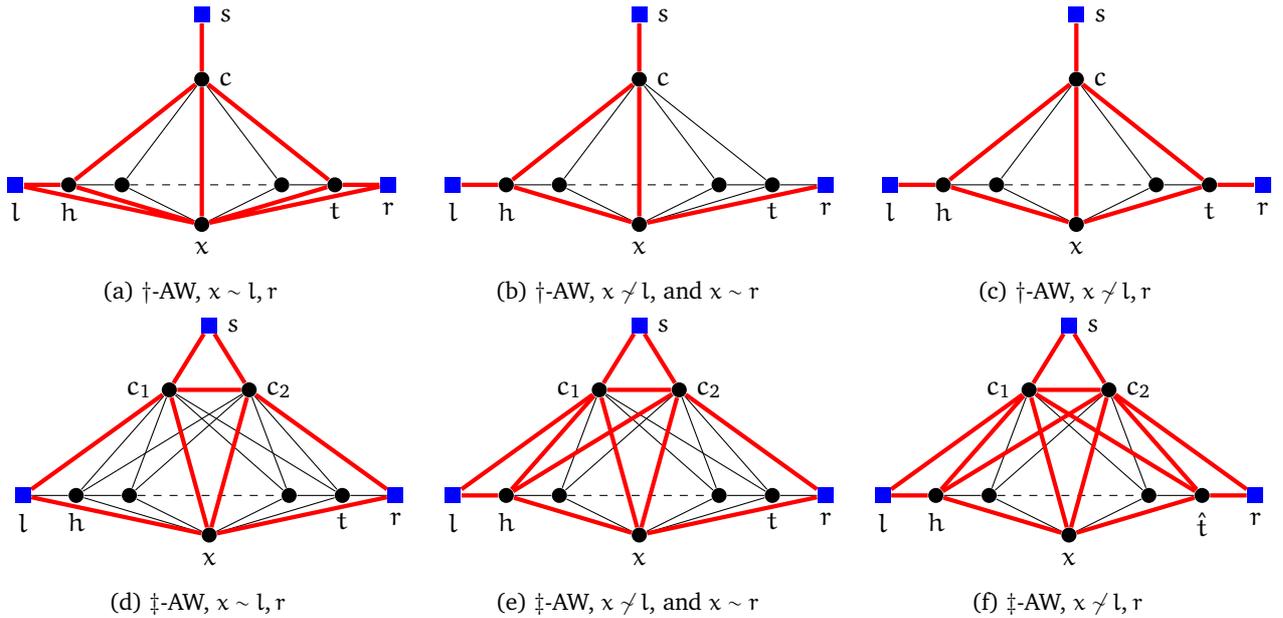
\begin{figure*}[h!]
  \centering
  \begin{subfigure}[b]{0.3\textwidth}
    \centering
    \begin{tikzpicture}[scale=.35]
      \node [corner,label=right:$s$] (s) at (0,6.44) {};
      \node [corner,label=below:$l$] (a) at (-7, 0) {};
      \node [vertex,label=below:$h$] (a1) at (-5, 0) {};
      \node [vertex] (a2) at (-3, 0) {};
      \node [vertex] (b2) at (3, 0) {};
      \node [vertex,label=below:$t$] (b1) at (5, 0) {};
      \node [corner,label=below:$r$] (b) at (7, 0) {};
      \node [special,label=right:$c$] (c) at (0,4) {};
      \node [vertex,label=below:$x$] (u) at (0,-1.5) {};
      \draw[] (a1) -- (a2) -- (c) -- (b2) -- (b1);
      \draw[dashed] (a2) -- (b2);
      
      \draw[] (a2) -- (u) -- (b2);
      \draw[at edge] (a) -- (u) -- (b1) -- (b) -- (u) -- (a1) -- (a);
      \draw[at edge] (a1) -- (c) -- (b1);
      \draw[at edge] (s) -- (c) -- (u);
    \end{tikzpicture}
    \caption{$\dag$-AW, $x\sim l,r$}
  \end{subfigure}
  \qquad
  \begin{subfigure}[b]{0.3\textwidth}
    \centering
    \begin{tikzpicture}[scale=.35]
      \node [corner,label=right:$s$] (s) at (0,6.44) {};
      \node [corner,label=below:$l$] (a) at (-7, 0) {};
      \node [vertex,label=below:$h$] (a1) at (-5, 0) {};
      \node [vertex] (a2) at (-3, 0) {};
      \node [vertex] (b2) at (3, 0) {};
      \node [vertex,label=below:$t$] (b1) at (5, 0) {};
      \node [corner,label=below:$r$] (b) at (7, 0) {};
      \node [vertex,label=right:$c$] (c) at (0,4) {};
      \node [vertex,label=below:$x$] (u) at (0,-1.5) {};
      \draw[] (a1) -- (a2) -- (c) -- (b2) -- (b1) -- (b);
      \draw[dashed] (a2) -- (b2);
      \draw[] (u) -- (b1) -- (c);
      \draw[] (a2) -- (u) -- (b2);

      \draw[at edge] (b) -- (u) -- (a1) -- (a);
      \draw[at edge] (a1) -- (c);
      \draw[at edge] (s) -- (c) -- (u);
    \end{tikzpicture}
    \caption{$\dag$-AW, $x\not\sim l$, and $x\sim r$}
  \end{subfigure}
  \qquad
  \begin{subfigure}[b]{0.3\textwidth}
    \centering
    \begin{tikzpicture}[scale=.35]
      \node [corner,label=right:$s$] (s) at (0,6.44) {};
      \node [corner,label=below:$l$] (a) at (-7, 0) {};
      \node [special,label=below:$h$] (a1) at (-5, 0) {};
      \node [vertex] (a2) at (-3, 0) {};
      \node [vertex] (b2) at (3, 0) {};
      \node [special,label=below:$t$] (b1) at (5, 0) {};
      \node [corner,label=below:$r$] (b) at (7, 0) {};
      \node [special,label=right:$c$] (c) at (0,4) {};
      \node [vertex,label=below:$x$] (u) at (0,-1.5) {};
      \draw[] (a1) -- (a2) -- (c) -- (b2) -- (b1);
      \draw[dashed] (a2) -- (b2);
      
      \draw[] (a2) -- (u) -- (b2);
      \draw[at edge] (u) -- (b1) -- (b);
      \draw[at edge] (a1) -- (c) -- (b1);
      \draw[at edge] (s) -- (c) -- (u) -- (a1) -- (a);
    \end{tikzpicture}
    \caption{$\dag$-AW, $x\not\sim l,r$}
  \end{subfigure}

  \begin{subfigure}[b]{0.3\textwidth}
    \centering
    \begin{tikzpicture}[scale=.35]
      \node [corner,label=right:$s$] (s) at (0,6.44) {};
      \node [corner,label=below:$l$] (a) at (-7, 0) {};
      \node [vertex,label=below:$h$] (a1) at (-5, 0) {};
      \node [vertex] (a2) at (-3, 0) {};
      \node [vertex] (b2) at (3, 0) {};
      \node [vertex,label=below:$t$] (b1) at (5, 0) {};
      \node [corner,label=below:$r$] (b) at (7, 0) {};
      \node [special,label=left:$c_1$] (c1) at (-1.5,4) {};
      \node [special,label=right:$c_2$] (c2) at (1.5,4) {};
      \node [vertex,label=below:$x$] (u) at (0, -1.5) {};
      \draw[] (a) -- (a1) -- (a2) -- (c1) -- (b2) -- (b1) -- (b);
      \draw[] (a1) -- (c1) -- (b1);
      \draw[] (a1) -- (c2) -- (b1);
      \draw[] (a2) -- (c2) -- (b2);
      \draw[] (a1) -- (u) -- (b1);
      \draw[] (a2) -- (u) -- (b2);
      \draw[dashed] (a2) -- (b2);
      \draw[at edge] (a) -- (c1) -- (s) -- (c2) -- (b) -- (u) -- (a);
      \draw[at edge] (u) -- (c1) -- (c2) -- (u);
    \end{tikzpicture}
    \caption{$\ddag$-AW, $x\sim l,r$}
  \end{subfigure}%
  \qquad
  \begin{subfigure}[b]{0.3\textwidth}
    \centering
    \begin{tikzpicture}[scale=.35]
      \node [corner,label=right:$s$] (s) at (0,6.44) {};
      \node [corner,label=below:$l$] (a) at (-7, 0) {};
      \node [vertex,label=below:$h$] (a1) at (-5, 0) {};
      \node [vertex] (a2) at (-3, 0) {};
      \node [vertex] (b2) at (3, 0) {};
      \node [vertex,label=below:$t$] (b1) at (5, 0) {};
      \node [corner,label=below:$r$] (b) at (7, 0) {};
      \node [special,label=left:$c_1$] (c1) at (-1.5,4) {};
      \node [special,label=right:$c_2$] (c2) at (1.5,4) {};
      \node [vertex,label=below:$x$] (u) at (0, -1.5) {};
      \draw[] (a1) -- (a2) -- (c1) -- (b2) -- (b1) -- (b);
      \draw[] (c1) -- (b1) -- (c2);
      \draw[] (a2) -- (c2) -- (b2);
      \draw[] (a1) -- (u) -- (b1);
      \draw[] (a2) -- (u) -- (b2);
      \draw[dashed] (a2) -- (b2);

      \draw[at edge] (a) -- (a1)  -- (u);
      \draw[at edge] (c1) -- (a1) -- (c2);
      \draw[at edge] (a) -- (c1) -- (s) -- (c2) -- (b) -- (u);
      \draw[at edge] (u) -- (c1) -- (c2) -- (u);
    \end{tikzpicture}
    \caption{$\ddag$-AW, $x\not\sim l$, and $x\sim r$}
  \end{subfigure}%
  \qquad
  \begin{subfigure}[b]{0.3\textwidth}
    \centering
    \begin{tikzpicture}[scale=.35]
      \node [corner,label=right:$s$] (s) at (0,6.44) {};
      \node [corner,label=below:$l$] (a) at (-7, 0) {};
      \node [special,label=below:$h$] (a1) at (-5, 0) {};
      \node [vertex] (a2) at (-3, 0) {};
      \node [vertex] (b2) at (3, 0) {};
      \node [special,label=below:$\hat t$] (b1) at (5, 0) {};
      \node [corner,label=below:$r$] (b) at (7, 0) {};
      \node [special,label=left:$c_1$] (c1) at (-1.5,4) {};
      \node [special,label=right:$c_2$] (c2) at (1.5,4) {};
      \node [vertex,label=below:$x$] (u) at (0, -1.5) {};
      \draw[] (a1) -- (a2) -- (c1) -- (b2) -- (b1);
      \draw[] (a2) -- (c2) -- (b2);
      \draw[] (a1) -- (u) -- (b1);
      \draw[] (a2) -- (u) -- (b2);
      \draw[dashed] (a2) -- (b2);

      \draw[at edge] (a1) -- (c1) -- (b1);
      \draw[at edge] (a1) -- (c2) -- (b1);
      \draw[at edge] (a) -- (a1) -- (u) -- (b1) -- (b);
      \draw[at edge] (a) -- (c1) -- (s) -- (c2) -- (b);
      \draw[at edge] (u) -- (c1) -- (c2) -- (u);
    \end{tikzpicture}
    \caption{$\ddag$-AW, $x\not\sim l,r$}
  \end{subfigure}%

  \caption{Adjacency between a common neighbor $x$ of $B$ and
    $s$. [{Lem}.~\ref{lem:common-neighbor-of-base}]}
  \label{fig:common-neighbor}
\end{figure*}

\end{document}

\iftagged{full}{
  \paragraph{Disclaimer.}  In an unpublished manuscript, Arash Rafiey
  claimed a similar result with a totally different approach, which,
  however, contains too many mistakes and logic gaps.  First of all,
  let us consider.  The most crucial bug arises exactly from Lemma
  5.7, which is the most important one and actually the base of the
  whole algorithm.  It states that insertion of some edges $E_X$ will
  not introduce new forbidden subgraphs to $G$, but when it is
  referred in Lemma 5.8, the author assumed that a superset of $E_X$,
  that is, $E_X$ will not introduce forbidden subgraphs to $G +
  (F\setminus E_X)$.
}